\def\l@subsection#1#2{}
\def\l@subsubsection#1#2{}
\newtcolorbox[blend into=figures]{boxfigure}[2][]
{ float*=htb,width=\textwidth,lower separated=false, center upper, 
center title,title={#2},every float=\centering,#1}
\declaretheoremstyle[shaded={rulecolor=MidnightBlue,rulewidth=1pt, bgcolor={rgb}{1,1,1}}]{boxed}
\declaretheoremstyle[shaded={rulecolor=Thistle,rulewidth=1pt, bgcolor={rgb}{1,1,1}}]{secboxed}
\declaretheoremstyle[shaded={rulecolor=YellowOrange,rulewidth=1pt, bgcolor={rgb}{1,1,1}}]{terboxed}
\declaretheoremstyle[shaded={rulecolor=Green,rulewidth=1pt, bgcolor={rgb}{1,1,1}}]{tetraboxed}
\declaretheorem[style=boxed]{theorem}
\declaretheorem[within=section]{lemma}
\declaretheorem[sibling=lemma]{corollary}
\declaretheorem[sibling=lemma, style=tetraboxed]{proposition}
\declaretheorem[style=secboxed, sibling=lemma]{definition}
	\newcommand{\green}[1]{\textcolor{OliveGreen}{#1}}
	\newcommand{\flag}[1]{\green{ [#1]}}
    \newcommand{\ket}[1]{\vert  #1 \rangle}
    \newcommand{\bra}[1]{\langle #1 |}
	\newcommand{\proj}[2]{\ket{#1}\bra{#2}}
	\newcommand{\pure}[1]{\proj{#1}{#1}}
	\newcommand{\hilbert}{\mathcal{H}}
	\newcommand{\rank}{\operatorname{rank}}
	\newcommand{\id}{\mathbbm{1}}
	\newcommand{\ball}{\mathcal{B}}
	\newcommand{\tr}{\operatorname{Tr}  }
\newcommand*{\I}{\mathcal{I}}
\newcommand*{\E}{\mathcal{E}}
\newcommand{\cT}{\mathcal{T}}
\renewcommand{\P}{{\mathbb P}} %not sure what I'm overwriting.. 
\newcommand{\local}[1]{\widehat{#1}}
\newcommand{\Cost}[1]{\operatorname{Cost} \left( #1\right)_{\mathcal C} }
\newcommand{\Yield}[1]{\operatorname{Yield} \left( #1\right)_{\mathcal C} }
\newcommand{\Balance}[2]{\operatorname{Balance}(#1 \to  #2)_{\mathcal C}}
\newcommand{\BalanceII}[3]{\operatorname{Balance}(#1 \to  #2 | #3)_{\mathcal C}}
\newcommand{\val}{\operatorname{Val}}
	\newcommand{\hmin}{ H_{\min} }
	\newcommand*{\eps}{\varepsilon}
\begin{document}

\title{Currencies in resource theories}

\author{Lea \surname{Kr\"amer}}
\email{lkraemer@phys.ethz.ch}
\affiliation{Institute for Theoretical Physics, ETH Zurich, Switzerland}

\author{L\'idia \surname{del Rio}}

\email{lidia.delrio@bristol.ac.uk}
\affiliation{School of Physics, University of Bristol, United Kingdom}

\begin{abstract}
How may we quantify the value of physical resources, such as entangled quantum states, heat baths or lasers? 
Existing resource theories give us 
partial answers; however, these  rely on idealizations, like perfectly independent copies of states or exact knowledge of a quantum state.
Here we introduce the general tool of \emph{currencies} to quantify realistic descriptions of resources, applicable in experimental settings 
when we do not have perfect control over a physical system, when only the neighbourhood of a state or some of its properties are known, or when there is no obvious way to decompose a global space into subsystems. 
Currencies are a set of resources chosen to quantify all others --- like Bell pairs in LOCC or a lifted weight in thermodynamics. 
We show that from very weak assumptions on the theory we can already find useful currencies that give us necessary and sufficient conditions for resource conversion, and we build up more results as we impose further structure.
This work is an application of `Resource theories of knowledge'~\cite{DelRio2015}, generalizing axiomatic approaches to thermodynamic entropy~\cite{Lieb1999,Lieb2003,Lieb2013,Lieb2014}, work~\cite{Gallego2015}  and currencies made of local copies \cite{Coecke2014, Fritz2015}.

\end{abstract}

\maketitle

\tableofcontents
\newpage

\section*{Introduction}
\label{sec:introduction}
Every time you cut your losses, every time you traded marbles, you used resource theories. 
They look at the world and see games, sorting what is cheap and what is precious, the easy from the difficult. They are the foundation for every optimization problem and every impossibility statement. How so? Resource theories are  frameworks that allow agents to model the  world where they move in a subjective, operational way: 
they describe the systems available to the agents, and the constraints limiting their actions. Using only these simple ingredients, resource theories enable us to derive rules for resource transformations | what agents can do, and how they may act efficiently.

Resource theories can be used to model all kinds of constraints on the actions of an agent, ranging from physical laws to technical limitations or even rules of a game. 
Because of this, they have proven successful models in many different contexts, from physics to computer science and cryptography. 
For example, in entanglement theory the resource theory of local operations and classical communications (LOCC)~\cite{Bennet1996,Bennet1996a,Popescu1997,Horodecki2009}  has provided powerful tools to quantify entanglement. In quantum thermodynamics, the resource theories of noisy and thermal operations~\cite{Brandao2011a, Horodecki2003,Janzing2000,Horodecki2013a,Renes2014,Goold2015} have shed light on irreversibility and the second law, while  resource theories of asymmetry and reference frames~\cite{Vaccaro2003, Vaccaro2008, Bartlett2007,Gour2008,Marvian2013,Marvian2014,Marvian2015}, coherence~\cite{Baumgratz2013,Lostaglio2014b,Winter2015,Lostaglio2015,Korzekwa2015} and quantum control~\cite{Matera2015} explore the role   of other resources in quantum information theory.

Here we extend the range of applicability of resource theories, by showing that we do not need most of the assumptions of usual approaches in order to derive useful tools to quantify resources.

\subsection{Quantifying resources}
Resource theories 
start from a set of allowed transformations that an agent is able to implement, and derive
a structure that encodes which resources can be converted into which other resources under such transformations. 
Formally, a resource theory gives rise to a pre-order $\to$ on the set of resources, i.e.\ a reflexive and transitive relation. 
While this pre-order structure carries all the information needed to characterize  the resource theory, it is often difficult to compute relevant properties. 
For example, it can be hard to decide for any two given resources $A$ and $B$ whether or not  $A\to B$. 
Similarly, if it is not the case that $A \to B$, 
which additional resources would the agent have to supply in order to generate the resource $B$? 
Or if it is the case that $A\to B$, what kind of additional resources could be extracted along the way? 

This is where the concept of assigning \emph{value} to resources comes in. The idea is to quantify in a simple way how much can be done with a particular resource, and how difficult it is to transform one resource into another. 
This concept is hence central to resource theories: if we can determine a good characterization of value, then we can answer important questions in the theory. In equilibrium thermodynamics, for example, the free energy of a state quantifies the work that can be extracted from it, and is also a monotone under isothermal transformations. More generally, we need more than one monotone to characterize the structure of a resource theory.

\subsection{Currencies}
Looking at particular resource theories, we can find a common and powerful tool that is used to characterize the value of resources in an operational way. 
Namely, there is usually a special class of 
\emph{standard} or {reference} resources 
that 
one can measure or quantify more easily, and that can be used to determine the value of other resources.
What is more, these special resources are usually \emph{universal} in the sense that they allow an agent to 
generate any other resource from them. 
Such resources are then particularly useful to measure the value of other resources, since we can always convert back and forth between the resources of interest and the universal reference resources. 
We call such reference resources a \emph{currency}|they provide a natural operational interpretation of value.

In real life, for instance, \emph{money} constitutes a currency, since we can use it to trade against essentially any resource. 
For a more physical example, note that we can also identify a currency in two-party {\sc LOCC}: 
maximally entangled pairs of qubits (Bell pairs) 
can be used to 
teleport half of any quantum state~\cite{Bennet1996}, and so the value of resources like bipartite states can be defined in terms of how many Bell pairs are needed to create, or how many can be extracted from, a resource. These considerations yield the well-known concepts of entanglement of formation and distillable entanglement. 
Similarly, work in traditional thermodynamics 
can be considered a currency that allows to implement essentially any transformation if given enough of it. 
Finally, the concept of currencies  is also underlying the ideas in Lieb and Yngvason's approach to defining entropy in thermodynamics~\cite{Lieb1999,Lieb2003,Lieb2013,Lieb2014}, following earlier axiomatic work on thermodynamics~\cite{Caratheodory1909,Giles1964}; there, the set of equilibrium states forms a currency.

Now, if a resource theory has a currency, we can use it to find necessary and sufficient conditions for resource transformations. For instance, if the yield of ``selling'' the initial resource is larger than the cost of creating the final one, then the transformation is possible: we simply sell the initial resource and use some of the coins to buy the final one. 
In addition, the cost and yield of resources are also \emph{monotones} on the pre-order $\to$ of the theory. As they have to be decreasing along the pre-order, this yields additional necessary conditions for resource transformations. For a further discussion on monotones and their connection to currencies, see Appendix~\ref{appendix:monotones}.

\subsection{Previous approaches}
While currencies serve the same purpose in many resource theories, they are built on a number of different assumptions. 
For example, Bell pairs in LOCC rely on the concept of \emph{copies} of many resources, while work in (quantum) thermodynamics can be modelled as a pure state on an energy ladder, often satisfying a requirement of \emph{translational invariance}~\cite{Skrzypczyk2014,Aberg2013,Masanes2014,Alhambra2016}. 
In contrast, the currency or ``reference states'' in the axiomatic approach to thermodynamics of Refs.~\cite{Lieb1999,Lieb2003,Lieb2013,Lieb2014} satisfy special comparability and scalability properties.

But which assumptions are essential to guaranteeing necessary or sufficient conditions for resource transformations? Which assumptions do we need in order to characterize the cost of transformations? 
In this work, we will take a more abstract view on currencies and discuss more precisely which properties allow us to make which particular statements. 
Preceding this article, there has been 
some work towards abstract characterizations of resource theories and common tools like monotones and conversion rates therein~\cite{Brandao2015, Coecke2014, Fritz2015}. 
However, 
these works and the very concepts they seek to describe 
rely on idealizations such as 
exact knowledge of the state on the relevant systems
and 
complete absence of correlations between subsystems, which cannot always be guaranteed in practice.
In quantum theory, 
exact and perfectly uncorrelated states are the equivalent of spherical cows in a vacuum:  idealizations that ease the maths but cannot be implemented in the lab (for example, it is impossible to control the conditions of an experiment so well as to claim that a quantum system is in a given pure state with probability one).
Indeed, such abstractions may be of little use to experimentalists wanting to apply a physical theory to the resources at hand. To be relevant, resource theories must  realistically model the limitations of real agents, both in their descriptions of resources and in the constraints in action~\cite{Halpern2014b}. 

In Ref.~\cite{DelRio2015} we have developed a general framework for resource theories that allows us to overcome these issues, but we left the question open as to whether it is possible to recover the useful tools of traditional resource theories without all the handy simplifications.

\subsection{Contribution of this work}

In Section~\ref{sec:nontechnical} we provide an abstract characterization of currencies and identify minimal key assumptions needed to derive useful notions of value of resources and transformations. We show that we can make relevant statements without many of the properties that are taken for granted in other works.

In Section~\ref{sec:technical} we use the framework of Ref.~\cite{DelRio2015} to apply these tools to  general general settings where we may not have a known subsystem structure, a composition operation or a precise description of resources.
This paves the way to studying currencies that are themselves made up of more general resources, such as approximate Bell pairs or systems where slight correlations cannot be ruled out, and that lack properties like scalability as assumed
in~\cite{Lieb1999,Lieb2003,Lieb2013,Lieb2014}.
Then we present a concrete application of our tools to the resource theory of unital maps, where we quantify the cost, yield and balance for general resources and transformations. These quantities correspond to notions of \emph{entropy} for specifications, and yield direct results for the questions of work extraction and transformation balance for general resources. 

\section{Types of currencies}
\label{sec:nontechnical}
We split this discussion into three stages,  starting from a basic definition of currencies and adding in stronger conditions as we progress. 
We outline the properties and results that follow  from the assumptions made up to each stage. 
Technical definitions, statements and proofs can be found in Appendix~\ref{appendix:value}.

\subsection{Stage I: a universal standard}

At a basic level, currencies have two defining properties: \emph{order} 
and \emph{universality} for a given target. 
We show that such a basic notion of currencies already allows for a definition of \emph{cost} and \emph{yield} of all resources, from which we may derive necessary and  sufficient conditions for general resource transformations.

\subsubsection{Ordered set and value}
We have argued above that currencies are a collection of  `standard' or `reference' resources relative to which we can quantify other resources. 
Formally, we can define a currency as a subset $\mathcal{C}$ of resources equipped with a well-defined value function $\val:  \mathcal{C} \to \mathbb{R}_+ $  that satisfies
\begin{align*}
      C \to C' \iff& \val (C) \geq \val(C'), 
\end{align*}
for all currencies $C, C' \in \mathcal C$.\footnote{We consider functions to the positive real numbers |  we  assume the currency to have a finite least valuable resource, a \emph{bottom} (largely motivated by the framework of~\cite{DelRio2015}, where it corresponds to `knowing nothing'). 
Negative values can then be eliminated without loss of generality through an additive constant.}
In particular, the set $\mathcal{C}$ of currency resources must be totally ordered, up to equivalences (e.g.\ two different Bell states have the same value under LOCC, and in general $ C \rightleftharpoons C' \iff \val (C) = \val(C') $).
This condition guarantees that  a currency of high value is really strictly more powerful than one with a lower value (Proposition~\ref{prop:value_operational}).

While this may seem like a strong requirement at first, we observe that essentially all of the examples of currencies we find in resource theories satisfy this order requirement. Our order condition is also not unusual: it is in fact reminiscent of the comparability requirement in axiomatic approaches to thermodynamics~\cite{Giles1964,Lieb1999,Lieb2003,Lieb2013,Lieb2014}, where it is demanded for the set of reference states (`equilibrium states' or `entropy meter'). Finally, note that we can always start from the whole pre-order structure of the resource theory and build a currency by selecting a maximal subset of resources that is totally ordered (plus equivalences); any monotone function to $\mathbb R_+$ on this set is a valid value function.

\subsubsection{Universality}
We have mentioned that a currency is typically universal 
in the sense that any resource can be generated from the currency if given enough of it, and any resource can be ``sold'' for some currency resource (which could however be of zero value). While we usually find that this holds for all resources, we relax this condition slightly to hold for a target set $\mathcal{S}$ of resources. 
Formally, we can then write the property of \emph{universality} as follows:
for any 
resource $A$ in a target set $\mathcal{S}$,
there is a currency resource $C\in\mathcal{C}$ such that 
$ C \to A $
and a currency resource
$ C' \in \mathcal{C} $
such that
$ A \to C' $. 

In the axiomatic approach to thermodynamics of~\cite{Lieb2013}, this property corresponds to an extended comparability axiom for states out of thermodynamic equilibrium with equilibrium states.

\subsubsection{Cost and yield of resources}
Order and universality of a currency allow us to define the cost and yield of any target resource  $A\in\mathcal{S}$:
\begin{align*}
    \Cost{A} &= \inf_{C\in\mathcal{C}} (\val C: C \to A) \\
    \Yield{A} &= \sup_{C'\in\mathcal{C}} (\val C': A \to C'). 
\end{align*}
The cost and yield of resources are monotones --- and so we immediately obtain the following necessary conditions for resource transformations in the target $\mathcal{S}$ (Theorem~\ref{thm:currency_conditions}):
\begin{align*}
    A \to B &\implies \Cost{A}\geq \Cost{B}, \\
    A \to B &\implies \Yield{A}\geq \Yield{B}.
\end{align*}
We also get a sufficient condition based on the cost and yield of resources:
$$ \Yield{A} > \Cost{B} \implies A \to B.$$
These conditions correspond to the necessary and  sufficient conditions obtained in~\cite{Lieb2013} for thermodynamics.

\subsubsection{Tight currencies}

For any resource $A$ in the target, we can show  (Proposition~\ref{prop:cost_bigger_yield}) that
$$ \Cost{A}\geq\Yield{A} .$$
We can then look at the special case when the cost and the yield of a resource $A$ are equal, 
$\Cost{A}=\Yield{A}$. 
When they are furthermore achievable by a currency resource $C$, that is, when 
both $ C \to A $ and $ A \to C $ 
for some $C\in\mathcal C$, we call the currency \emph{tight} for the resource $A$. We will then also denote the set of such resources for which the currency is tight by $\mathcal S_=$.

We show in Theorem~\ref{thm:tightness_order} that for the set $\mathcal S_=$, the currency in fact gives a simple necessary and sufficient condition for resource transformations,
$$ A \to B \iff \Cost{A}\geq \Cost{B} $$
when $A,B\in\mathcal S_=$ (and of course similarly for the yield). This then also means that on the set $\mathcal S_=$, the resource theory provides a full order (up to equivalences). In particular, currency resources are also in $\mathcal S_=$ if they are in the target, and in turn, resources in $\mathcal S_=$ could always be added to the currency without changing the theory. 

Note that most quantum resource theories do not have a known tight currency for the whole target (that would have made things too easy). However, an example for tightness is given by the resource theory of noisy operations (or unital maps) when restricting to uniform states on some support (that is, states for which all non-zero eigenvalues are equal) \cite{Weilenmann2015}. 
Similarly, assuming that any state transformation in macroscopic thermodynamics can be implemented reversibly, work in equilibrium thermodynamics can be considered a tight currency \cite{Lieb1999,Lieb2003,Lieb2014}.

\subsection{Stage II: independent currency}

In addition to defining the cost and yield monotones on resources, it would be useful to use the currency to determine the \emph{balance} of general resource transformations: if resource $A$ is more valuable than  $B$, how much currency can be extracted in the process $A\to B$? Or how much currency needs to be supplied to transform $B$ into $A$?

\subsubsection{Independence between currency and target}

In order to answer these questions and formulate concepts such as ``adding a currency $C$ to a resource $A$'', we need to introduce a notion of \emph{composing resources}. For now we represent an arbitrary notion of composition simply as $(A,C)$; 
in Section~\ref{sec:technical}
we show that there is always a natural way to formalize it.
It is also essential to be able to address the currency and the target resources independently. For this it might help to think that we keep the currency in an independent \emph{wallet}, but we will see that this notion is more general in in Section~\ref{sec:technical}. 

In quantum theory, for example, this is guaranteed if the currency and the target resources live in different subsystems and are composed in tensor product, and the allowed transformations in the theory allow us to address their respective degrees of freedom individually. 
In real life, this independence is not a given, and is often an approximation: for example, in an optics experiment we might not be able to individually address an atom or transition without slightly disturbing the others.

We can formulate the independence condition as 
$$ C \to C' \implies (A,C) \to (A,C') $$
for all target resources $A$ and currencies $C$. 
This condition guarantees that a higher valued currency resource is really more powerful than a lower valued one in facilitating state transformations (Proposition~\ref{prop:value_operational_trafos}).
In Section~\ref{sec:technical} we show how to apply this conditions to realistic settings, where for example slight correlations between target and currency cannot be ruled out.

\subsubsection{Balance of resource conversions}

Composition of resources  allows us to define the \emph{balance} of a transformation from resource $A$ to resource $B$ in the target, conditioned on available currency $C$ as
\begin{align*}
    &\BalanceII{A}{B}{C} \\
    &= \sup_{C'\in\mathcal{C}} (\val(C')-\val(C): (A,C)\to (B,C')).
\end{align*}
In particular, if the balance is negative, it corresponds to minimal cost of transforming $A$ into $B$ given access to a currency  $C$. Of course, this quantity is only defined if the value of $C$ is actually large enough to afford the transformation. The independence condition ensures that the balance is meaningful, because it guarantees that the value of currencies is connected to how helpful  they are at facilitating transitions in the target (Proposition~\ref{prop:value_operational_trafos}).\footnote{The symmetric independence condition $A \to B \implies  (A,C) \to (B,C)$ would guarantee that $A\to B \implies \BalanceII{A}{B}{C}\geq 0 $, but does not seem to have a big impact otherwise.}

Note that in general this definition will depend on the available currency resource $C$ at the start of the transaction. This dependency could in principle be arbitrary: 
on the one hand, having access to additional currency in the wallet might facilitate the transformation and act partially like a catalyst, such that the balance of a transformation becomes larger the more currency is used to implement the transformation. In real life, for example, a client with more money may receive a special discount for a transaction (e.g.\ offered in the hope of acquiring a returning, rich loyal customer). 
On the other hand, however, having additional money might occasionally make the transaction more expensive: if instead of paying the exact amount of coins, one tries to pay with a larger note, one might end up paying more for an object or service (e.g.\ if the selling party cannot  give change).

As a last remark, note that we could have defined an analogous notion of balance,
\begin{align*}
    &\operatorname{Balance}^*(A\to B\vert C) \\
    &= \sup_{C'\in\mathcal{C}} (\val(C)-\val(C'): (A,C')\to (B,C)),
\end{align*}
where instead of conditioning on a starting resource $C$ in the currency, we demand that a final resource $C$ is reached at the end of the transformation. While we work with the first definition of balance in this paper, all the results could also be formulated with respect to this adapted notion.

\subsection{Stage III: fair currency}

In Stage II, the balance of a resource transformation can in principle depend on the available currency $C$ in the wallet. If that is not the case, we say that the currency is \emph{fair}, a property composed of two aspects.
On the one hand, it implies that there are no discounts for the poor, colloquially speaking: a transformation does not become cheaper to implement just because one has \emph{less} available currency. 
On the other hand, having access to \emph{more} currency does not make a transformation cheaper either | in other words, the currency does not act as a catalyst. We analyse the two conditions and their implications separately. 
In both cases the starting point is the same: 
suppose that we have resources $A$ and 
$B$ in the target,
and we can perform the transformation 
$$ (A,C_1) \to (B,C_2) $$
for some currency $C_1,C_2 \in\mathcal{C}$, with
$$ \Delta := \val(C_2)-\val(C_1)  .$$ 
We will now see what may happen if we start from a different currency $C_3$.

\subsubsection{Having less does not help}

Fairness in this direction means that if $C_3$ is more valuable than $C_1$, we can always find a final $C_4 \in \mathcal C$ with the same difference $\val(C_4)-\val(C_3)= \Delta$ that achieves the transformation 
$ (A,C_3) \to (B,C_4) $
(and vice-versa: we can  first fix $C_4$ with $\val{C_4}\geq \val{C_2}$ and look for an appropriate $C_3$).\footnote{This condition is required  to hold within suitable bounds, so that we do not hit the top boundary of the currency (if it exists), that is $\val(C_3)< \sup_{C\in\mathcal C} \val C -\Delta$ .} 

Naturally, this condition implies that the balance of a transformation cannot  decrease given access to more currency, that is (Proposition~\ref{prop:balance_increasing}),
\begin{align*}
    &\val C \geq \val C' \\ 
    &\implies \BalanceII{A}{B}{C}\geq \BalanceII{A}{B}{C'} ,
\end{align*}

In many familiar resource theories, currency resources can be composed to yield resources of higher value, such as an increasing number of Bell pairs in LOCC or a higher number of coins in real life. When these individual resources can be acted upon without disturbing the others, we can always ignore some of the currency and bring it back after the transaction. In these cases, more currency can never make a transaction more expensive; if anything, it can be cheaper when having access to more. 
Formally, what guarantees fairness in this direction in such cases is a condition of independence, like the one we introduced between currency and target, but extended to individual subsystems within the currency.  
For example, this is implied when currency resources are seen as a collection of objects in the paradigm of symmetric monoidal categories 
(one can understand copies of Bell pairs in LOCC as an example, see e.g.~\cite{Fritz2015}).~\footnote{Then, transformations are defined on individual objects and when two objects $A,B$ are composed, local transformations $f,g$ can also be combined and applied to the composed object such that 
$(f(A),g(B)) = (f,g)(A,B) $ 
(see e.g.~\cite{Coecke2014}). 
Choosing one of the functions as the identity, this ensures that we can put extra currency on the side for the purpose of a transformation and re-introduce it again later.}
In resource-theoretical approaches to macro and microscopic thermodynamics  we also find this condition: it is expressed   through the 
\emph{composition} axiom in Ref.~\cite{Lieb2013}, and is assumed in some models for quantum thermodynamics where work is stored in pure states on a number of individual qubits~\cite{Horodecki2011}. In fact, because this assumption is so common in resource theories, it is rarely questioned or exposed in this way. However, we would like to work with more general currencies than tensor product copies of individual states, and so we make this aspect of fairness explicit.

\subsubsection{Having more does not help}

Fairness in the other direction  means that  if the initial currency $C_3$ is less valuable than $C_1$, 
 we can again always find a $C_4 \in \mathcal C$ with the same difference $\val(C_4)-\val(C_3)= \val(C_2)-\val(C_1)$ that achieves the transformation 
$ (A,C_3) \to (B,C_4) $
(and also vice-versa: we can  first fix $C_4$ with $\val{C_4}\leq \val{C_2}$ and look for an appropriate $C_3$).\footnote{Here, we assume that $\val(C_3)\geq \val(C_1)-\val(C_2)$ so that we can actually afford the transformation with $C_3$.}

Naturally, this condition implies that the balance of a transformation cannot increase given access to more currency, that is (Proposition~\ref{prop:balance_decreasing}),
\begin{align*}
    &\val C \leq \val C' \\ 
    &\implies \BalanceII{A}{B}{C}\geq \BalanceII{A}{B}{C'}.
\end{align*}

Operationally, this direction of fairness makes sure that transformations are not easier to implement just because one has access to more currency | that is, the currency does not act like a catalyst. This would be guaranteed for example in a theory in which one is always able to `borrow' extra currency for free. 
Since this is in general not the case (for example due to finite-size effects), fairness in this direction is more common to fail than in the other.

\subsubsection{Both directions}
In case fairness holds in both directions, it follows that the balance of a transition does not depend on the initial currency $C$,
\begin{align*}
    \BalanceII{A}{B}{C} 
    &=  \BalanceII{A}{B}{C'}  \\
    &=: \Balance{A}{B} 
\end{align*}
for any $C,C'$ within suitable boundaries (Proposition~\ref{prop:balance_unique}).

For this new single notion of balance, we can also show that (Theorem~\ref{thm:balance_cost_yield})
$$\Balance{A}{B}\gtrsim \Yield{A}-\Cost{B}.$$ 
For resources $A,B$ for which  $\mathcal{C}$ is tight, furthermore
$$\Balance{A}{B}=\Yield{A}-\Cost{B}.$$

In many familiar resource theories in physics,  fairness of the currency is either taken for granted or imposed as a fundamental restriction on good currencies.
In thermodynamics, for example, it appears under the name of translational invariance: transformations are not allowed to depend on the initial state of the energy storage system
~\cite{Skrzypczyk2014,Aberg2013,Lieb2014,Masanes2014,Alhambra2016}. 
These storage systems are modeled as harmonic oscillators that mimic a classical weight; work, or the balance of a transformation, is  counted as the the difference between the energy of initial and final states. 
The underlying assumption required is that the weight system has many evenly spread energy levels and is far from the ground state.

\subsection{Pathological cases}

For Stage I currencies, we have seen that all resources $A$ in the target satisfy
$\Cost{A}\leq \Yield{A}$. We can read this as an impossibility statement that says that we cannot increase the amount of currency for free in the process of buying and re-selling a resource $A$ --- this would collapse the order in the currency and render the theory trivial. 
In a Stage II currency, there might however exist resources for which
$\BalanceII{A}{A}{C}>0$ for some $ C\in\mathcal C$,
that is, resources $A$ which allow us to generate a little bit of currency for free if we have access to $A$ and some particular value $C$ in the currency. Then, $A$ would essentially act like a catalyst in a process on the currency that is otherwise forbidden. 

The existence of such a resource $A$ does not make the theory trivial because the condition $\BalanceII{A}{A}{C}>0$ depends on the exact currency $C$ we start with. 
For example, it could require a valuable currency resource $C$, or could hold only for a very expensive resource $A$ that is hard to buy in the first place. In real life, for example, a house that is rented to a reliable tenant generates rent money every month, but requires a large investment to start with.
However, if the currency is fair (in the direction that more does not help), we can show that such a pathological resource  would again collapse the order in the currency. Pathological cases are discussed in detail in Appendix~\ref{appendix:pathological}.

\section{Application to realistic resources}
\label{sec:technical}
Now we show how to apply those ideas to  explicit descriptions of arbitrary physical  resources. The following approach allows us to model approximate transformations and generalize the idea of composition to a natural concept that applies for example when correlations between subsystems cannot be excluded, or when the subsystem structure is not clear. 
For details on this framework, we refer to Ref.~\cite{DelRio2015}; for the purposes of this work the following summary will suffice.

\subsection{Setup for resource theories}
\label{sec:preliminaries}

\subsubsection{Realistic descriptions of resources}
We may see the 
state space $\Omega$ of a theory as the \emph{language} that an agent uses to describe resources: for example, in quantum theory $\Omega$ could be the set of all density operators over a global Hilbert space; in traditional thermodynamics the set of all distinguishable macrostates. 
Realistically, agents may not know the  exact state of a system, and may instead describe resources through more coarse descriptions, like the $\eps$-neighbourhood of a state (obtained after tomography) or the specification of a few measurement outcomes. Such \emph{specifications} are simply subsets of the state space, like  $\ball^\eps(\rho)$, composed of all states  that are compatible with the agent's knowledge. 
Specifications have a natural partial order: if a description $V$ is more specific than another, $W$, we simply have $V\subseteq W$, for example $\mathcal B^\eps(\rho) \subseteq \mathcal B ^{\eps + \delta }(\rho) $. Together, the subsets of $\Omega$ form the
\emph{specification space} $S^\Omega$ --- the space of realistic resources.

\subsubsection{Transformations}

Physical actions implemented by an agent transform resources into resources ($f: S^\Omega \to S^\Omega$), in such a way that if an agent is unsure about the underlying state, this uncertainty carries through the transformation. 
For example, if the agent knows that the state of a system is either $\rho$ or $\sigma$, then after applying a transformation $f$, her knowledge is updated as  
$ f(\{\rho\}) \cup f(\{\sigma\}) $. 
In general, transformations $f\in\cT$ need to act element-wise, that is for any specification $V\subseteq \Omega$,
$$    f(V) = \bigcup_{\nu\in V} f(\{\nu\}). $$
This formalism also allows us to model cases where there is uncertainty about the exact effect of a transformation: for example, after performing process tomography of an experimental procedure $f$, we may learn only that $f$ is in a neighbourhood of a trace preserving completely positive (TPCP) map $g$. We could model this procedure  as $f: \{\rho\} \mapsto \ball^\eps(g(\rho))$.

\subsubsection{Resource theories}

As discussed in the introduction, a  resource theory is defined by a set $\cT$ of allowed transformations available to the agent, which act on the space of resources --- the specification space $S^\Omega$ that describes the resources from the point of view of an agent.
The set of allowed transformations induces a pre-order $\to$ of accessibility on $S^\Omega$, which encodes whether or not a resource $V\in S^\Omega$ can be transformed into another resource $W$. Formally,
$$ V \to W \iff \exists f\in \cT \ \text{s.t.} \ f(V) \subseteq W. $$ 
This pre-order combines the action of the transformation and the natural partial order on sets: operationally, this means that forgetting information (going to a less specific description) is always allowed in the resource theory. 
For example, in a resource theory of unital maps\footnote{Unital maps are all TPCP maps that preserve the identity, $\E(\id)=\id$. See Section~\ref{sec:example} for more details on the resource theory of unital maps and on how our results apply to this theory} on qubits we would find that  we can always reach the specification of an $\eps-$neighbourhood of the maximally mixed state: 
$\{\ket{0}\bra{0} \} \to \ball^\eps(\id_2/2), $
since there is a transformation $f\in\cT$ that achieves 
$ f (\{ \ket{0}\bra{0} \}) = \{ \id_2/2\} $
and
the maximally mixed state is a more specific description than an $\eps$-ball around it, 
$\{\id_2/2\} \subseteq \ball^\eps (\id_2/2). $

\subsection{Insights}

Let us now introduce the main insights that this approach brings into the subject of quantifying resources. 
In Section~\ref{sec:nontechnical} we saw that both currencies and the target are sets of resources. In our formalism, this means that they are sets of specifications, $\mathcal C, \mathcal S \subseteq S^\Omega$. In the following we explore some examples.
Formal definitions, results and proofs can be found in Appendix~\ref{appendix:value}.

\subsubsection{Rough currencies and single-shot statements}

Specifications are particularly useful 
when a  theory has a known currency $\mathcal C_{\text{ideal}}$ that cannot be implemented experimentally --- it may rely on idealizations like pure or perfectly uncorrelated states, for example. The experimenter may instead have access to coarser specifications of resources, for example $\eps$-neighbourhoods of the currency states $\mathcal C_{\text{real}}=\{ \ball^\eps (\rho)\}_{\rho \in \mathcal C_{\text{ideal}}}$. 
If  $\mathcal C_{\text{real}}$ is ordered (up to equivalences), 
it automatically forms a currency for some target space. Otherwise (for example, if two elements overlap too much), we may remove or replace some of the elements with other accessible resources until we obtain an ordered set. 
Naturally, we would not expect $\mathcal C_{\text{real}}$ to be as powerful a currency as $\mathcal C_{\text{ideal}}$: it might not reach the entire state space (e.g.\ because we do not have access to pure states), or it might offer a coarser quantification (because we removed some elements). Nevertheless, we  can always identify a maximal target set of specifications for which $\mathcal C_{\text{real}}$ is universal. 
In this case the target would include the set of $\eps$-neighbourhoods of all states reached by $\mathcal C_{\text{ideal}}$ (if the theory is stable, e.g.\ linear, under these neighbourhoods, as explained in \cite[Section V]{DelRio2015}). Once we find the appropriate target for $\mathcal C_{\text{real}}$, we may use all the tools of currencies, like cost, yield and checking for fairness, which apply to transformations between specifications, not only between states. 

We can also address questions about single-shot transformations, of the sort `what is the cost of reaching a final state, if we allow for a small error tolerance?' The agent encodes that error tolerance in an operational notion of closeness on the state space (like the trace distance), and uses it to build specifications of $\eps$-balls, $V^\eps \supseteq V$ \cite[Section V]{DelRio2015}. 
It follows that $\Cost{V} \geq \Cost{V^\eps}$ and  $\Yield{V} \geq \Yield{V^\eps}$. The exact result will depend on the theory \cite{Faist2012, Horodecki2009}; in the upcoming example of unital maps, the cost is characterized by smooth entropy measures. Similarly, with a Stage II currency we can talk about the balance of transitions between two resources $V^\eps$ and $W^\eps$.  The same reasoning applies to any currency  and target made of arbitrary specifications of resources more generally than just for well-behaved $\eps$-neighbourhoods. 

\subsubsection{Local resources and currencies}
In our approach we start from a global theory and model local resources as specifications in a global space. Doing this allows us to go beyond the tensor product to combine local resources.
For example, given a Hilbert space $\hilbert_A \otimes \hilbert_B$, a marginal state $\rho_A$ is a compact description of the set of all global states compatible with that marginal,
$$\local \rho_A := \{ \sigma_{AB}: \tr_B (\sigma_{AB}) = \rho_A\} .$$
A general way to compose local resources is to combine these specifications, for example
$$\local \rho_A  \cap \local \tau_B =
\{\sigma_{AB}:  \sigma_A = \rho_A \wedge \sigma_B = \tau_B \}.$$
This way, we can treat genuinely local knowledge without imposing additional non-local assumptions --- for example, we do not assume that the local states in the product state $\rho_A \otimes \tau_B \in \local \rho_A  \cap \local \tau_B$. If we do have some additional knowledge about the strength of correlations, we can include it in the specification. For example, the knowledge that the mutual information between the two subsystems is at most $\eps$ can be expressed as a specification 
$$ I(A:B)_{\leq \eps} :=  \{\sigma_{AB}:  I(A:B)_\sigma \leq \eps\},$$
 and our global specification then becomes
$$\local \rho_A  \cap \local \tau_B \cap I(A:B)_{\leq \eps}.$$

Often a currency $\mathcal C$ and a target $\mathcal S$  consist of local resources lying in different subsystems. Consider for example the case of LOCC, where a standard currency consists of different numbers of copies of Bell pairs. 
These can be stored in a wallet system shared by two agents Alice and Bob: we
may think of a decomposition of the global Hilbert space as 
$$\hilbert_{\text{global}} = \underbrace{\left( \bigotimes_{i=1}^N ( A_i  \otimes B_i) \right)}_{\text{wallet}} \otimes  \underbrace{A' \otimes \tilde B'}_{\text{target}},$$
where each $A_i$ and $B_i$ is a qubit. The currency is made of specifications of a certain number $n$ of copies of Bell pairs in the wallet,
$\mathcal C = \{ {\Psi^n} \}_{n=1}^N,$
with 
${\Psi^n} = \{ \sigma_{\text{global}}: \sigma_{A_1 B_1 \dots A_n B_n} = \pure \psi^{\otimes n} \} ,$
and $\ket \psi = (\ket{00} + \ket{11} )/ \sqrt2$. 
Note that  the currency is naturally ordered as ${\Psi^{n}} \to {\Psi^m} $ for $n>m$ (as ${\Psi^n} \subset {\Psi^m}$, that is we can always go from more to less Bell pairs by forgetting some --- in other words, by discarding or tracing them out).
Teleportation implies that this currency is universal for a target as large as the wallet, $\log (\max (|A'|, |B'|) )\leq N$. In our language it means that we can pick the target $\mathcal S$ to be any set of specifications that are local in $A'\otimes B'$, for example $\mathcal S = \{\local{\rho_{A'B'}} \}_{\rho}$, with
$\local{\rho_{A'B'}}= \{ \sigma_{\text{global}}: \sigma_{A'B'} = \rho_{A'B'} \} $ (Appendix~\ref{appendix:example}). The cost and yield of target resources are the usual entanglement of formation and distillation, respectively \cite{Horodecki2009}.

\subsubsection{Independence without composition}

We promised that specifications allow us to define independence between a currency and a target  without the need to talk of subsystems or any traditional notion of composition. 
At heart, independence means that we can change one quantity (the currency) without affecting the other (the target), as discussed in Section~\ref{sec:nontechnical}. While this trivially holds in the neat case of a tensor product structure between states and subsystems, the notion is more general.

The first requirement for independence was  that all states of the currency can coexist with all states of the target --- that is, if we combine the knowledge $C$ about the currency and the knowledge $V$ about the target, the resulting specification $C \cap V$ contains at least one global state compatible with this knowledge. If on the other hand $C \cap V = \emptyset$, this tells us that the two specifications contained contradictory knowledge --- which could happen for example if currency and target were defined on the same degree of freedom, so we could not have both at once.
Therefore, the first condition for Stage II currencies is \emph{compatibility}: for all currency and target resources  $C \in \mathcal C $ and $V \in \mathcal S$, we should have $C\cap V \neq \emptyset$. We do not impose any extra subsystem structure on $\mathcal C$ and $\mathcal S$, which are otherwise just sets of specifications, nor do we require a formal operation of composition of `local' resources. 

The second key idea  is that we can manipulate the currency without disturbing the target. In our language, this means that if we can transform $C\to C'$, then we can also act on the combined knowledge of currency and target as $C \cap V \to C' \cap V $, 
for all target resources $V\in\mathcal S$. 
Naturally, the Bell pairs from the previous example  form a  Stage II currency (Proposition~\ref{prop:Bell_currency}). 
Another example of a stage II currency is work in quantum thermodynamics (Proposition~\ref{prop:work_currency}).
More generally, these conditions can be satisfied even when correlations between currency and target cannot be ruled out --- in quantum theory, they do not need to be in a tensor product. Indeed, local knowledge such as $\local{\rho_A} \cap \local{\sigma_B}$ is sufficient to ensure independence, and therefore to define the balance of transformations \cite[Section IV]{DelRio2015}.

\subsection{Example: unital maps}
\label{sec:example}

Let us look at the particular example of a resource theory in which the allowed operations are given by unital, completely positive trace preserving maps on quantum states~\cite{Mendl2008}. On the level of state transformations and for classical systems, this resource theory also coincides with the resource theory of noisy operations\footnote{The actual set of unital maps is a superset of the maps achievable by noisy operations.}~\cite{Birkhoff1946,Horodecki2003,Mendl2008,Faist2012,Gour2013}. As such, it characterizes a range of physical situations from an agent processing information in a noisy environment to thermodynamics on degenerate energy eigenstates.
On the level of quantum state transformations, the resource theory of unital maps is well understood: the pre-order on quantum states is given by majorization\footnote{For simplicity, we restrict our analysis to the case of same input and output dimensions.}~\cite{Hardy1952,Uhlmann1970,Ruch1970,Ruch1975,Ruch1976,Ruch1978JCP,Ruch1980JMAA,Horn1985,Joe1990,Bhatia1997,Nielsen2001,Marshall2011,Horodecki2003,Faist2012,Gour2013}, and we can quantify resources through smooth entropies~\cite{Weilenmann2015,Faist2012,Brandao2013b,Renner2004,Renner2005,Datta2009IEEE,Tomamichel2012}.

However, for specifications the situation is not so clear: how can we assign value to general resources $V \in S^\Omega$? Are there simple necessary and sufficient conditions for resource transformations  $V\to W$? 
Solving these questions would allow us to characterize the pre-order for realistic descriptions of resources, including approximations around quantum states and specifications of a few selected properties of the system. Furthermore, it would give us insight into how to define entropy of specifications, study how much work is needed to erase the information encoded in a specification (analogous to Landauer's principle~\cite{Landauer1961,Bennett1982,Faist2012}), and how to draw a link between microscopic and macroscopic thermodynamics, as outlined in Ref.~\cite{DelRio2015}.
Finally, the question of when a specification can be transformed into another bears resemblance to the question of the work cost of a general process~\cite{Faist2012} as well as to the question of when a set of states can be transformed into another set~\cite{Uhlmann1980,Chefles2004,Heinosaari2012,Huang2012}.

Here we introduce three alternative  currencies for the resource theory of unital maps on specifications.
First we define a Stage I currency  and compute the cost and yield of specifications, which already give us necessarily and sufficient conditions for resource transformations.
Then we give a similar Stage II currency, and discuss examples of alternative currencies that make use of specifications more explicitly. Finally, we study smooth transformations between $\eps$-neighbourhoods of states. Our choice of currency is inspired by Ref.~\cite{Weilenmann2015}, where entropies for quantum states were derived from similar considerations | our results in this section could hence be interpreted as a step towards understanding entropies for specifications.

\subsubsection{Stage I currency}

Let the global state space $\Omega$ of the theory be all quantum states on a $d$-dimensional Hilbert space. Specifications $V\in S^\Omega$ are then sets of such states.
We can define a universal currency $\mathcal C = \{C^k \}_{k=1}^ d \cup \{\Omega\}$  as the set of 
uniform states of different ranks, 
$$ C^k = \left\{\frac{1}{k} \sum_{i=1}^k \pure{i}\right\} = \left\{\frac{\Pi^{k}}{k}\right\},$$
where $\Pi^{k}$ denotes the projector onto the first $k$ eigenstates of a fixed basis. 
We take the value function
$$ \val\left(C^k\right) = \log (d) - \log k,$$
such that states with lower rank are more valuable.\footnote{Note that $\val(\Omega)= \val(C^ d) =0$, since the two are interconvertible under unital operations.}
We prove that  $\mathcal C$ indeed forms a currency for the global target $\mathcal S = S^\Omega$ in Proposition~\ref{prop:currency_unital}. The cost of a specification $V$ in terms of $\mathcal{C}$ is then given by
\begin{align*} 
\Cost{V} 
&= \log d - \sup_{\rho\in V} \log \lfloor 2^{H_\text{min}(\rho)}\rfloor
\end{align*}
where $\hmin$ is the min-entropy~\cite{Renner2005,Tomamichel2012,Tomamichel2016} defined as $\hmin(\rho) = - \log \lambda_\text{max}(\rho) $, 
with  $\lambda_\text{max}(\rho) $ being the largest eigenvalue of $\rho$,
and $\lfloor \cdot \rfloor$ the nearest integer (from below) to the enclosed expression (Proposition~\ref{prop:cost_majorization}).
This result can be seen as an analogue to the lower bound for entropy of quantum states $\rho$ found in Ref.~\cite{Weilenmann2015}, which obtained exactly $H_\text{min}(\rho)$ as a monotone. Here, we see that more generally there is an optimization over the states in the specification.

Note that as we evaluate the cost, we  need to round down 
$ \lambda_\text{max}^{-1}(\rho) $, because 
there is a limit to how well we can approximate  $\lambda_\text{max}$ by means of a rational $\frac{1}{n}$. For example, in a two-level system, our currency would only contain elements of two different values: 1 (pure state), and 0  (fully mixed state). Then, the cost of a single-qubit specification is either 0 (for all specifications containing the fully mixed state) or 1 (for all others), but nothing in between.
This is a finite-size effect that is not critical for large systems; it nevertheless tells us that our choice of currency, while universal, is not very fine-grained. An alternative currency for single qubits could be for example  $\mathcal C' =\{C_p\}_{p=0}^{1/2} \cup \{\Omega\}$, with $C_p =\{ p \, \pure0 +(1-p)\, \pure1\}$, which would recover $\operatorname{Cost}(V)_{\mathcal C'}=\log d - \sup_{\rho\in V} \hmin (\rho)$.\footnote{In Ref.~\cite{Weilenmann2015}, this issue was circumvented by extending the state space from density matrices to continuous step functions. On the level of state transformations, the theory of unital maps (or noisy operations) boils down to a simple majorization condition, which can easily be extended to such step functions. However, here we would like to be a little more careful, especially since we deal with specifications $V$ in general, for which the pre-order structure is more complicated and not fully characterized by majorization.}

As for the yield of specifications in terms of $\mathcal C$, let us first look at the single-qubit example.
Suppose that we start from the specification 
$ V = \{ \pure{0}, \pure{1} \} $. 
While for any individual pure state the yield is $1$ since we can always generate the pure currency state from it, the same is not true for the specification $V$. The reason is that there is no \emph{single} protocol that achieves the pure currency state for \emph{both} states  $\pure{0}$ and $\pure{1}$, so that it could be applied without knowing which one is actually the case. Instead, in the case of $V$ one can only generate the fully mixed currency state, which can be done with a single process that works for both cases (the discard-and-prepare map $\E(\rho) =\id/2$, which is unital). The yield of $V$ is hence $0$ | the same as the yield for a general convex mixture of the two pure states in $V$.

In general, we find that the yield of a specification in terms of $\mathcal C$  is given by
\begin{align*} 
\Yield{V} 
&= \log d - \max_{\rho \in V^\P}  H_0 (\rho),
\end{align*}
where $V^\P$ denotes the convex hull of $V$, which consists of all the convex (probabilistic) mixtures of the states in $V$~\cite[Section VI]{DelRio2015}) and $H_0(\rho)$ is the order zero quantum R\'enyi entropy defined as $H_0(\rho) = \log \operatorname{rank} (\rho)$~\cite{Renner2005,Tomamichel2012,Tomamichel2016}. 
This is proven in  Proposition~\ref{prop:yield_majorization}.

\subsubsection{Stage II currency}

We can now analyse an independent currency that lives in a different subsystem to the target specifications.
Consider a global state space of density matrices on two systems $W\otimes S$, 
where $W$ will be our wallet and system $S$ the target, with dimensions $d_W \geq d_S$.
More precisely, we can show that 
the set
$ \mathcal{C} 
= \{ C^k \}_{k=1}^{d_W} \cup \{ \Omega \} $,
with
$$ C^k = \left\{ \sigma_{WS}: \tr_S(\sigma_{WS}) = \frac{\Pi^{k}}{k} \right \} $$
forms a currency for the target $\mathcal S$ defined through
$$ V \in \mathcal S \iff V = \{ \sigma_{WS}: \tr_C(\sigma_{WS}) \in V_S \} $$
for some set
$ V_S $ of density matrices on system $S$ (Proposition~\ref{prop:currency_unitalII}). This currency furthermore satisfies \emph{independence} according to Definition~\ref{def:independence}, even if  most elements of the currency are not in a tensor product with target resources. 

We can easily see that the currency is not fair, because of the finite-size effects discussed previously, but in the limit of large wallet dimension $d_W$, we can show that
$$ \Cost{V} \approx \log d_S - \sup_{\rho_S\in V_S} \log \hmin (\rho_S) ,$$
to an arbitrarily good approximation (Theorem~\ref{thm:cost_yield_unitalII}).\footnote{This issue has also been discussed in~\cite{Weilenmann2015}, where it was noted that access to large ancilla systems can help approximate the step functions arbitrarily well.}

\subsubsection{Alternative currency}
As we saw, there can be many options for a currency within a resource theory, and the ultimate choice is up to the user (for example, whether they will be treating small or large systems, or whether they can easily distinguish states that are close). We would like to finalize this example with a currency that makes more explicit use of specifications  to formalize lack of knowledge. The idea is to model an agent who knows only  that the system could be in any mixture of $k$ pure states ---  weaker than knowing that the system is in a uniform mixture of those $k$ states.
Consider $ \mathcal C_2 = \{C_2^k\}_k \cup \{\Omega\} $,
with 
$$ C_2^k = \left(\bigcup_{1\leq i \leq k} \{ \ket{i}\bra{i} \}\right)^\P ,$$ 
where instead of taking maximally mixed states of different ranks like $\mathcal C$ before, we take the convex hull of differently sized sets of orthogonal pure states on the currency system.

In comparison to our original currency $\mathcal C$, note that for each $k$, $C^k \subseteq \mathcal C^k_2$.
The set $\mathcal C_2$ forms again a currency for the target $S^\Omega$, and is in fact equivalent to the currency $\mathcal C$ before since, under unital operations,
$C^k \to C^k_2 $  and $ C^k_2 \to C^k $
for all $k$ . In particular, we  can take the usual  value function
$\val (C_2^k) = \log d - \log k $ and obtain the same expressions for cost and yield.
The convex hull in the expression for $C^k_2$ is crucial | without it, we would still get a currency but with different results for the cost and yield of resources (Proposition~\ref{prop:equivalent_currency}).

\subsubsection{Smooth transformations}

As a special case of our framework, we can model scenarios where we allow for some error probability in the output of a transformation, or where we need to make sure that our protocol is robust against errors in the initial resources.
To illustrate this, we can look at  resources that correspond to approximate quantum states, $\mathcal B^\eps(\rho)$, according to some metric like the trace distance or the purified distance based on the fidelity (more general approximations are discussed in Ref.~\cite{DelRio2015}). 
In this case, we recover the smooth-min entropy~\cite{Renner2005,Tomamichel2012,Tomamichel2016} as a monotone, as 
$$      \Cost{B^\eps(\rho)} 
= \log d - \log \lfloor 2^{\hmin^\eps(\rho)}\rfloor.$$
On the other hand, smoothing the input does not always give us very meaningful results for this choice of currency, as for example $\Yield{B^\eps(\rho)} =0$ for all $\eps >0$. 
The usual bound for the balance of resource transformations in terms of cost and yield  (which works by selling the initial resource and buying the final one)  is not helpful when the yield is zero: for example,
the most efficient way to go from $B^\eps(\rho)$ to $B^{\eps'}(\rho)$ will never go through this currency.
This effect happens partly because the currency is relatively course-grained (it contains only flat states), and also because it consists of exact states rather than smooth specifications (unlike the previously discussed $\mathcal C_{\text{real}}$). By smoothing the currency or adopting a version of yield that smooths over the output, we could arrive at more useful statements, involving for example the smooth max-entropy. First steps on smooth transformations can be found in Appendix~\ref{appendix:example}.

\section*{Conclusions}
\label{sec:conclusions}

While each resource theory has its own understanding of how to identify precious resources, there are common tools to address this question. 
In this work, we abstract from particular examples and identify the general concept of \emph{currencies} that can be used to quantify the cost of resource transformations as well as the value of particular resources. 
Our approach allows us to go beyond commonly employed assumptions of perfect independence between the currency and the system of interest, exactly known quantum states, perfect and uncorrelated copies of states on individual subsystems, or any special form or properties of the currency states such as the scalability of equilibrium systems employed in Refs.~\cite{Lieb1999,Lieb2003,Lieb2013,Lieb2014}. As such, our results seem particularly relevant for thermodynamics of macroscopic systems or systems of limited control, where slight correlations cannot be ruled out or only a few properties of the systems of interest are known. Furthermore, our approach could prove powerful in the context of adversarial settings in cryptography, in which worst-case scenarios need to be assumed~\cite{Landauer1993, Morgan2015}.

Using the formalism of specifications developed in~\cite{DelRio2015}, we can quantify the value of arbitrary descriptions of resources, paving the way to finding entropic  quantities that characterize non-probabilistic states of knowledge.

As we saw in the example of the resource theory of unital maps, there are many possible choices of currency within a theory. Traditional choices such as copies of pure states or Bell pairs are useful to find monotones from asymptotic conversion rates 
\cite{Fritz2015, Brandao2011a}. On the other hand, in realistic settings with limited resources, these strict currencies may result in a big divergence between cost and yield of target resources, as happened for resources that were $\eps$-neighbourhoods of states. As a consequence of this gap, the resulting bounds on the balance of resource conversions become very loose, and do not give a very useful characterization of the pre-order of the resource statement. 
A commonly used way out is to always work in the regime of large systems, where these problems become less significant; however, we believe it is more natural to stay in the experimentally realistic setting and adapt our theoretical  tools to meet the constraints in the lab. 
In order to find tighter and meaningful bounds,  we might for example use finer currencies, or smooth currencies  made of specifications (e.g.\ currencies that are themselves  $\eps$-neighbourhoods of states). Alternatively we might keep the currency but  relax our definitions of cost, yield and balance to allow for smoothing on the input and output, corresponding to a small error probability.

\subsection*{Related work}

In abstracting from particular resource theories and identifying common tools to quantifying resources, our work is similar in spirit to Refs.~\cite{Coecke2014,Fritz2015,Brandao2015}. However, while these works also explore the mathematical structure of resource theories, they study the case where resources have a clear local structure, equipped  with a composition operation that ensures independence between currency and target (this is the case for symmetric monoidal categories \cite{Coecke2014,Fritz2015} and in particular the tensor product of quantum states \cite{Brandao2015}). In other words, those works study Stage II currencies, with one direction of fairness (having less does not help) guaranteed. 
We see our approach as complementary to Fritz's \cite{Fritz2015}, in the sense that he explores in depth how much we can do once we have a definite notion of composition, while we ask the broader question of what independence and composition mean, and what  we can still do when  we cannot guarantee that local structure in the space of resources.
Another difference is  our interest in  single-shot settings, compared to a stronger focus on conversion rates and asymptotic scenarios in Refs.~\cite{Coecke2014,Fritz2015,Brandao2015}.

We drew inspiration from many particular examples of resource theories, and especially from the abstract approach to thermodynamics in Lieb and Ynganvon's work~\cite{Lieb1999,Lieb2003,Lieb2013,Lieb2014}. 
We generalize their ideas to arbitrary resource theories beyond thermodynamics. In addition, we weaken and clarify the  assumptions needed to recover the monotones; in particular, we can avoid assuming any additional structure on the currency resources, such as the scalability requirement present in those works. 
Not surprisingly, our results resemble the quantities and considerations of Refs.~\cite{Lieb1999,Lieb2003,Lieb2013,Lieb2014, Weilenmann2015}, as well as those of approaches to quantum thermodynamics~\cite{Horodecki2011,Brandao2013b,Gour2013}. Our results, however, have a wider range of application.

Our work also bears resemblance to the article by Gallego \emph{et al.}\ on defining work in quantum thermodynamics from operational principles~\cite{Gallego2015}. Their paper  studies how a set of axioms motivated by the second law of thermodynamics influences possible definitions of cost or balance of resource transformations. 
The setting of two players Arthur and Merlin in that paper relate very closely to our splitting of resources into currency and target. Furthermore,  the definitions and  properties of the resulting work function are similar to our statements about balance. For example, the total order on the currency resources that we demand seems to follow from the axioms in Ref.~\cite{Gallego2015} | their approach can hence be seen as an interesting complement to ours. However, we go beyond the scope of their results not only since we consider general resource theories beyond quantum thermodynamics, but also because we do not limit ourselves to exact quantum states and tensor product composition. Furthermore, we discuss a range of properties of currencies and their implications, beyond seeking a unique notion of balance.

\subsection*{Directions}

It would be interesting to analyse more thoroughly the problem of general resource conversion in terms of monotones or specific transformation criteria. In particular, it would be key to generalize familiar monotones such as the relative entropy to the fixed point or free resources of the resource theory \cite{Brandao2015} to the case of specifications. Similarly, one could try to generalize known results from state to state transformations for particular resource theories, such as the conditions of majorization or thermo-majorization in noisy and thermal operations.  

Another direction of further research would be to look more closely at resource theories in which we can only find currencies for a limited target.
A good example of this is the resource theory of thermal operations~\cite{Janzing2000,Brandao2011a}, in which work can be considered a currency only for states that are block-diagonal in the energy eigenbasis. Since the theory is time-translation symmetric, we cannot introduce coherence from purely incoherent resources~\cite{Lostaglio2015, Winter2015, Baumgratz2013}. 
This problem could perhaps be solved by introducing two or more different currencies that function together and may or not be traded against each other (similar to recent ideas on non-commuting conserved charges in thermodynamics~\cite{Halpern2015, Guryanova2015, Lostaglio2015b}) | for example, by introducing both a work and a coherence storage system. It would be interesting to extend our results to such cases and 
formulate the features of such a theory in our general language.  Finally, it would be interesting to characterize the conditions under which currencies become fair, like the regime of large systems, and the connection to conversion rates.

%\begin{acknowledgments}

\subsection*{Acknowledgements}
We thank Tobias Fritz for pointing out the complementary definition to balance and bringing up issues with pathological cases, Philipp Kammerlander for translating Ref.~\cite{Gallego2015} and for comments on this manuscript, 
Sandu Popescu for inciting us to scrap off most of the formalism,
Renato Renner for encouraging us to focus on the essential properties of currencies, and Cyril Stark for canny convexity consultancy.

LK acknowledges support from  the European Research Council via grant No. 258932, the Swiss
National Science Foundation through the National Centre of
Competence in Research \emph{Quantum Science and Technology}
(QSIT),  and   the  European  Commission  via  the  project \emph{RAQUEL}. 
LdR acknowledges support from ERC AdG NLST and EPSRC grant \emph{DIQIP}, and from the
Perimeter Institute for Theoretical Physics. Research at Perimeter Institute is supported by the government of Canada through Industry Canada and by the Province of Ontario through the Ministry of Economic Development \& Innovation. This work was partially supported by the COST Action MP1209. %\end{acknowledgments}

% appendix!
%\newpage
\onecolumngrid
\appendix

\addcontentsline{toc}{section}{\sc{Appendix}}

\section{Formalizing currencies}
\label{appendix:value}

Here we define
currencies in general resource theories using the 
formalism of~\cite{DelRio2015}
for simplicity and generality,
and prove the general results about currencies laid out in the main text. 
These notions could also be written in a more traditional
language of resource theories, as in Section \ref{sec:nontechnical}.

\subsection{Stage I: currencies as a universal `standard'}

Given a global state space $\Omega$, following~\cite{DelRio2015} and Section~\ref{sec:preliminaries}, resources are specifications, that is sets $V\subseteq \Omega$. We call the set of all non-empty subsets of $\Omega$ the specification space $S^\Omega$. 
A currency is then a subset $\mathcal{C}\subseteq S^\Omega$ of such resources that is ordered (up to equivalences) and universal for a target set $\mathcal{S}\subseteq S^\Omega$ of resources, as explained in Section~\ref{sec:nontechnical}.  
For simplicity, we furthermore demand that the trivial (free) resource $\Omega$ is both in the currency and the target, $\Omega\in\mathcal{C}$ and $\Omega\in\mathcal{S}$.

\begin{definition}[Currency]
\label{def:currency}
Let $(S^\Omega,\cT)$ be a resource theory.
A subset of specifications $\mathcal C \subseteq S^\Omega$ forms a \emph{currency} for a subset $\mathcal{S}\subseteq S^\Omega$, called the \emph{target}, if it satisfies both
\begin{enumerate}

\item {\bf Order.}  $\mathcal C$ is ordered (up to equivalences), that is for  all $C,C'\in\mathcal{C}$, either $C\to C'$ or $C'\to C$, and $\Omega \in \mathcal C$.

\item {\bf Universality.} for all target specifications $V \in \mathcal{S}$, there exists an element of the currency $C \in \mathcal C$ such that 
$ C \to V $,  and a currency element $C'\in\mathcal C$ such that
$ V \to C'$, and $\Omega \in\mathcal S$.

\end{enumerate}
\end{definition}

Since the set of currency resources is ordered, we can define the \emph{value} of the currency via a function to the set of positive real numbers.\footnote{Provided the order on $\mathcal{C}$ fits into this set.}

\begin{definition}[Value of currency]
\label{def:value}
We define the \emph{value} of the currency via a monotonic  function
$\val: \mathcal{C}\to \mathbb{R}_+ $,
that is,
$$\val (C')\geq\val(C)\iff C'\to C$$
for any $C,C'\in\mathcal{C}$.
As convention, we define the following threshold values:

\begin{enumerate}
\item a minimum value $c_{\min}$ which we set to zero, $\val (\Omega):=c_{\min}=0$,
\item a \emph{saturating} value relative to a target $\mathcal S$,
\begin{align*}
c_\text{sat}=\sup_{V\in \mathcal{S}} \ \Cost{V},
\end{align*}
\item the \emph{supremum} value of a currency (which may be larger than $c_\text{sat}$, and in particular may be infinite),
\begin{align*}
c_\text{sup}=\sup_\mathcal{C} \ \val(C)
\end{align*}
\end{enumerate}

\end{definition}

Naturally, it follows that $\val (C')\geq\val(C)$ implies that $C'$ is at least as powerful as $C$, in the following sense.

\begin{restatable}[Value is operational]{proposition}{propValueOperational}
\label{prop:value_operational}
For any $C,C'\in\mathcal{C}$ such that $\val(C)\geq\val(C')$ and any $A\in\mathcal{S}$,
it holds that
\begin{align*}
    C' \to A &\implies C \to A \\
    A \to C &\implies A \to C'.
\end{align*}
\end{restatable}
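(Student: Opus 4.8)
The plan is to reduce both implications to a single application of the value-monotonicity relation followed by transitivity of the accessibility pre-order $\to$. First I would invoke Definition~\ref{def:value}: since $\val(C) \geq \val(C')$ for the two currency elements $C, C' \in \mathcal{C}$, the defining equivalence $\val(C) \geq \val(C') \iff C \to C'$ immediately yields $C \to C'$. This is the only place where the assumption $\val(C) \geq \val(C')$ is used, and it converts a numerical inequality into a concrete conversion statement inside the currency.

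With the relation $C \to C'$ in hand, both implications follow from transitivity. For the first, assume $C' \to A$; chaining with $C \to C'$ gives $C \to C' \to A$, hence $C \to A$. For the second, assume $A \to C$; chaining with $C \to C'$ gives $A \to C \to C'$, hence $A \to C'$. Here I rely on the fact, recalled in the introduction, that the accessibility relation induced by any resource theory $(S^\Omega, \cT)$ is a pre-order, and in particular transitive.

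I do not anticipate a real obstacle: the statement is essentially monotonicity of $\val$ combined with transitivity of $\to$, and no property of the target $\mathcal{S}$ beyond $A$ being a well-defined resource ($A \in \mathcal{S}$) is needed. The only point worth stating carefully is that the definition of $\val$ delivers $C \to C'$ from $\val(C) \geq \val(C')$ even when $C$ and $C'$ merely have equal value (so that they are equivalent rather than strictly ordered), which ensures the argument also covers the boundary case $\val(C) = \val(C')$.
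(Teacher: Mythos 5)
Your proposal is correct and follows exactly the paper's own argument: extract $C \to C'$ from the value inequality via Definition~\ref{def:value}, then concatenate with the hypothesis ($C \to C' \to A$, respectively $A \to C \to C'$) using transitivity of the pre-order. Your additional remark that the definition's ``$\iff$'' covers the case $\val(C) = \val(C')$ is a fine clarification but does not change the substance.
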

\begin{proof}
This result follows straightforwardly from $\val(C)\geq \val (C')$, which implies that $C \to C'$, together with the transitivity of $\to$, since we can concatenate individual processes to get
\begin{align*}
    C \to C', C' \to A &\implies C \to C' \to A \\
    A \to C, C \to C' &\implies A \to C \to C'.
\end{align*}
\end{proof}

We can now define the \emph{cost} and \emph{yield} of resources as follows.

\begin{definition}[Cost and yield of resources]
\label{def:cost_yield}
Let $(S^\Omega,\cT)$ be a resource theory with the currency $\mathcal{C}$ for the target $\mathcal S$. 
The \emph{cost} and \emph{yield} of a target resource $V\in\mathcal{S}$ in terms of the currency  $\mathcal{C}$ are defined as
\begin{align*}
\Cost{V} &:= \inf_{C\in\mathcal C} (\val C: C \to V),\\
\Yield{V} &:= \sup_{C\in\mathcal C} (\val C: V\to C).
\end{align*}
\end{definition}

The following theorem shows why currencies are so useful in resource theories. Namely, they allow us to derive necessary and sufficient conditions for resource conversion. 
With this theorem, we also recover the results of~\cite{Lieb2003, Lieb2014} more generally for all resource theories equipped with a currency.

\begin{restatable}[Conditions for resource conversion in terms of a currency]{theorem}{thmCurrencyConditions}
\label{thm:currency_conditions}
Let $(S^\Omega,\cT)$ be a resource theory with a currency $\mathcal{C}$ for a target $\mathcal S$.  Let  $V,W\in \mathcal S$. 
Then
\begin{enumerate}
\item if  $\Yield{V} > \Cost{W}$ then  $V \to W$, and

\item Cost and Yield are monotones, that is, if $V \to W$, then 
\begin{align*} 
\Cost{V} &\geq \Cost{W} \\
\Yield{V} &\geq \Yield{W}.
\end{align*}
\end{enumerate}
\end{restatable}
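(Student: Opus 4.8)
The plan is to prove the two parts in the order stated, since the sufficient condition (Part 1) and the monotonicity of cost and yield (Part 2) rest on the same two ingredients: the definitions of cost and yield as an infimum and supremum over currency elements, and the transitivity of the pre-order $\to$.

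For Part 1, I would start from the hypothesis $\Yield{V} > \Cost{W}$. By the definition of yield as a supremum, there exists a currency element $C \in \mathcal{C}$ with $V \to C$ and $\val(C)$ arbitrarily close to $\Yield{V}$; similarly, by the definition of cost as an infimum, there exists $C' \in \mathcal{C}$ with $C' \to W$ and $\val(C')$ arbitrarily close to $\Cost{W}$. Because the strict inequality $\Yield{V} > \Cost{W}$ leaves room, I can choose these witnesses so that $\val(C) \geq \val(C')$. Since the currency is ordered and $\val$ is monotone (Definition~\ref{def:value}), $\val(C) \geq \val(C')$ gives $C \to C'$. Chaining $V \to C \to C' \to W$ and invoking transitivity of $\to$ yields $V \to W$, as required.

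For Part 2, I would assume $V \to W$ and argue each monotonicity separately. For cost: take any $C \in \mathcal{C}$ witnessing $C \to V$; then $C \to V \to W$ gives $C \to W$ by transitivity, so every feasible $C$ for $V$ is also feasible for $W$, and the infimum defining $\Cost{W}$ is taken over a set at least as large as that for $\Cost{V}$, forcing $\Cost{V} \geq \Cost{W}$. For yield: take any $C \in \mathcal{C}$ witnessing $W \to C$; then $V \to W \to C$ gives $V \to C$, so every feasible $C$ for $W$ is feasible for $V$, and the supremum defining $\Yield{V}$ ranges over the larger set, giving $\Yield{V} \geq \Yield{W}$.

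The main subtlety, rather than a deep obstacle, lies in Part 1: handling the infimum and supremum carefully so that suitable witnesses $C, C'$ with $\val(C) \geq \val(C')$ genuinely exist. Here the \emph{strict} inequality $\Yield{V} > \Cost{W}$ is exactly what is needed — it guarantees a gap that survives the $\eps$-slack inherent in approximating a supremum and an infimum, so that one can pick $\val(C)$ above and $\val(C')$ below a common intermediate value. Everything else is a direct appeal to transitivity and to the fact that $\to$ combined with the natural set order makes forgetting information free, so that the chains of conversions compose without further hypotheses.
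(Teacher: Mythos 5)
Your proposal is correct and follows essentially the same route as the paper's proof: in Part 1 both arguments use the strict gap between yield and cost to extract currency witnesses $C, C'$ with $V \to C$, $C' \to W$ and $\val(C) \geq \val(C')$, invoke the order on $\mathcal{C}$ to get $C \to C'$, and chain $V \to C \to C' \to W$ by transitivity; Part 2 is the same containment-of-feasible-sets argument via transitivity in both. Your explicit handling of the $\eps$-slack in the supremum/infimum is slightly more careful than the paper's phrasing, but it is the same idea.
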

\begin{proof}
For the first statement, note that 
$\Yield{V} > \Cost{W}$ implies that 
there are some $C_y,C_c \in \mathcal C$ with 
$$ \Yield{V}\geq \val(C_y) > \val(C_c) \geq \Cost{V} $$
such that
$ V \to C_y $ and 
$ C_c \to V $ 
(they are achievable selling and buying currency values). 
Then we can compose the processes
\begin{align*}
V \to C_y \to C_c\to W.
\end{align*}
To show monotonicity of cost and yield, note that we can again employ transitivity to get that, for any $C,C'\in \mathcal C$, if $V\to W$ then
\begin{align*}
    C \to V &\implies C \to W \text{ and}\\
    W \to C' &\implies V \to C',
\end{align*}
and so the statement follows directly from the definitions of Cost and Yield as
\begin{align*}
\Cost{V} &:= \inf_{C\in\mathcal C} (\val C: C \to V),\\
\Yield{V} &:= \sup_{C\in\mathcal C} (\val C: V\to C).
\end{align*}
in Definition~\ref{def:cost_yield}.
\end{proof}

We can then also show the following.
\begin{restatable}[Cost is bigger than yield]{proposition}{propCostBiggerYield}
\label{prop:cost_bigger_yield}
Let $(S^\Omega,\cT)$ be a resource theory with a currency $\mathcal{C}$ for a target $\mathcal S$. Then, for any resource $V\in\mathcal S$,
$$ \Cost{V}\geq\Yield{V} .$$
\end{restatable}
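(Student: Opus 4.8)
The plan is to exploit transitivity of the pre-order together with the fact that value is a monotone. Intuitively, if one could sell $V$ for strictly more currency than it costs to buy it, one could run a cycle $C \to V \to C'$ with $\val(C') > \val(C)$, which would contradict the total order on $\mathcal C$. The whole argument is therefore a short ``no money pump'' observation, and the only care needed is in handling the infimum and supremum cleanly.

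First I would fix arbitrary witnesses: a currency resource $C \in \mathcal C$ with $C \to V$ (such elements exist by universality, and they are precisely the ones competing in the infimum defining $\Cost{V}$) and a currency resource $C' \in \mathcal C$ with $V \to C'$ (the elements competing in the supremum defining $\Yield{V}$). Composing these two processes via transitivity of $\to$ gives $C \to V \to C'$, and hence $C \to C'$.

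Next I would invoke Definition~\ref{def:value}: since $\mathcal C$ is totally ordered and $\val$ is monotone, the relation $C \to C'$ is equivalent to $\val(C) \geq \val(C')$. Thus every buying witness $C$ has value at least that of every selling witness $C'$, i.e.\ $\val(C) \geq \val(C')$ for all admissible pairs.

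Finally I would pass to the infimum and supremum. Fixing a selling witness $C'$, the inequality $\val(C) \geq \val(C')$ holds for all buying witnesses $C$, so taking the infimum over $C$ gives $\Cost{V} \geq \val(C')$; since this holds for every selling witness $C'$, taking the supremum over $C'$ yields $\Cost{V} \geq \Yield{V}$. I do not expect any genuine obstacle here: both sets of witnesses are non-empty by universality and bounded (since $\val$ maps into $\mathbb{R}_+$), so the inf and sup are well defined, and the quantifier order in this last step — fixing $C'$ before taking the inf over $C$ — is the only point that requires slight care.
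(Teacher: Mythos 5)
Your proof is correct and uses essentially the same argument as the paper: compose a buying process $C \to V$ with a selling process $V \to C'$ to get $C \to C'$, then invoke monotonicity of $\val$. The only cosmetic difference is that the paper phrases it as a contradiction with witnesses strictly straddling $\Cost{V}$ and $\Yield{V}$, whereas you argue directly over all witness pairs and then pass to the infimum and supremum; both rest on the same ``no money pump'' observation.
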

\begin{proof}
Suppose that $\Yield{V}>\Cost{V}$. Then we can find currency resources $C_y,C_c\in\mathcal C$ with
$$ \Yield{V}\geq \val (C_y) > \val (C_c) \geq \Cost{V} $$
such that
$ V \to C_y $
and 
$ C_c \to V .$
But then we can compose the processes to get 
$ C_c \to V \to C_y $
and so due to monotonicity of value we would need
$ \val (C_c )\geq \val (C_y ),$
contradicting our initial assumption that $\Yield{V}> \Cost{V}$.
\end{proof}

We now give a formal definition of \emph{tightness} and show that when considering resources for which the currency is tight, Theorem~\ref{thm:currency_conditions} becomes much stronger: the monotonicity of  cost and  yield become necessary and sufficient conditions for resource transformations.

\begin{definition}[Tightness]
A currency $C$ with target $\mathcal S$ is called \emph{tight} for a 
resource $V\in\mathcal S$ if 
$\Cost{V} = \Yield{V}$.
We then denote the set of resources for which the currency is tight by $\mathcal S_= \subseteq \mathcal S $.
\end{definition}

\begin{theorem}[Tightness yields necessary and sufficient condition]
\label{thm:tightness_order}
For two resources $V,W\in\mathcal S_=$ for which a currency $\mathcal C$ is tight, 
$$ V \to W \iff \Cost{V} \geq \Cost{W} . $$
The pre-order on the set $\mathcal S_=$ then becomes a full order (up to equivalences), that is, 
$$ \forall\, V,W \in \mathcal S_=,  \text{ either } V \to W \text{ or } W \to V .$$
\end{theorem}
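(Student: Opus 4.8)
The plan is to prove the biconditional $V \to W \iff \Cost{V} \geq \Cost{W}$ in two directions and then read off totality from the fact that costs are real numbers. The forward implication needs no tightness: it is exactly the monotonicity of cost from Theorem~\ref{thm:currency_conditions}, so I would dispose of it in one line. All the substance lies in the converse.

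For the converse I would exploit the achievability that tightness provides. Spelling out tightness as in Section~\ref{sec:nontechnical}, each $V \in \mathcal S_=$ comes with a single witnessing currency element $C_V \in \mathcal C$ satisfying both $C_V \to V$ and $V \to C_V$, necessarily with $\val(C_V) = \Cost{V} = \Yield{V}$ (since a buyer forces $\val(C_V)\geq\Cost{V}$ and a seller forces $\val(C_V)\leq\Yield{V}$, and the two coincide on $\mathcal S_=$). Choosing such witnesses $C_V$ for $V$ and $C_W$ for $W$, the hypothesis $\Cost{V} \geq \Cost{W}$ becomes $\val(C_V) \geq \val(C_W)$, which by the order property of the currency (Definition~\ref{def:value}) gives $C_V \to C_W$. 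I would then concatenate $V \to C_V$, $C_V \to C_W$, and $C_W \to W$ and invoke transitivity of $\to$ to obtain the chain
$$ V \to C_V \to C_W \to W, $$
and hence $V \to W$.

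With the biconditional in hand, totality is immediate and I would state it last: for any $V, W \in \mathcal S_=$ the real numbers $\Cost{V}$ and $\Cost{W}$ are comparable, so one of $\Cost{V} \geq \Cost{W}$ or $\Cost{W} \geq \Cost{V}$ holds, and the biconditional converts this into $V \to W$ or $W \to V$. The corresponding statement phrased through yield follows at once because $\Cost{V} = \Yield{V}$ throughout $\mathcal S_=$.

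The step I expect to be the main obstacle is the boundary case $\Cost{V} = \Cost{W}$. When the inequality is strict, tightness gives $\Yield{V} = \Cost{V} > \Cost{W}$ and the strict sufficient condition of Theorem~\ref{thm:currency_conditions} already delivers $V \to W$ with no mention of witnesses. It is only at equality that the strict condition fails, and there the argument genuinely needs a common witness whose value equals the shared cost and yield; without achievability the supremum defining $\Yield{V}$ and the infimum defining $\Cost{W}$ might merely be approached, so closing the chain would demand a limiting argument that an arbitrary resource theory need not support. For this reason I would rely on the achievability built into tightness rather than on the bare equality $\Cost{V} = \Yield{V}$.
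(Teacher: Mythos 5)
Your proof is correct and follows essentially the same route as the paper's: the forward direction is the monotonicity of cost from Theorem~\ref{thm:currency_conditions}, the converse closes the chain $V \to C_V \to C_W \to W$ using the achievability built into tightness together with the order on the currency, and totality follows from the total order on $\mathbb{R}_+$. Your explicit handling of the equality case $\Cost{V} = \Cost{W}$ simply makes precise the paper's terse remark that, when cost and yield are achievable, the non-strict inequality $\Yield{V} \geq \Cost{W}$ suffices as a sufficient condition.
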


\begin{proof}
This statement follows directly from $\Cost{V} = \Yield{V}$ and $\Cost{W} = \Yield{W}$ together with the proof of Theorem \ref{thm:currency_conditions}, where we can see that in the case when the cost and yield are achievable, $\Yield{V}\geq \Cost{W}$ suffices as a sufficient condition for $V\to W$.

The full order (up to equivalences) on $\mathcal S_=$ then follows directly from the full order on $\mathbb R_+ $ to which the cost maps.
\end{proof}

\subsection{Stage II: independent currency}

In Stage II, we now look at currencies that satisfy also the \emph{independence} property, namely that the currency resources are compatible with any state on the target, and that the currency can be transformed individually without disturbing the target.

\begin{definition}[Currency independent of target]
\label{def:independence}
Let $(S^\Omega,\cT)$ be a resource theory equipped with a currency $\mathcal C \subset S^\Omega$ for a target $\mathcal S$. Then the currency $\mathcal{C}$ is called \emph{independent} of the target if 
\begin{enumerate}

\item all specifications $C \in\mathcal{C}$ are \emph{compatible} with all specifications $V\in\mathcal S$, i.e.\ $ C \cap V \neq \emptyset $, and

\item if we can transform $C\to C'$, then we can do it without disturbing the target, that is, 
$$ C \to C' \implies C \cap V \to C' \cap V $$
for all $V\in\mathcal S$. 

\end{enumerate}
\end{definition}

The following proposition shows how a currency resource with higher value can implement all the resource transformations that a lower currency value can achieve. It extends Proposition~\ref{prop:value_operational} and implies its result by setting $V,C''=\Omega$ or $ W,C''=\Omega$ respectively.

\begin{restatable}[Value is operational (independent currencies)]{proposition}{propCurrencyValue}
\label{prop:value_operational_trafos}
Let $(S^\Omega,\cT)$ be a resource theory with a currency $\mathcal{C}$ independent of target $\mathcal S$. Then, for $C,C',C''\in \mathcal{C}$ and $V,W\in \mathcal S$, if $\val(C') \geq \val  (C)$, then
\begin{align*}
    V \cap C\to W \cap C'' &\implies V\cap C' \to W \cap C'' \\
    V \cap C''\to W \cap C' &\implies V\cap C'' \to W \cap C.
\end{align*}
\end{restatable}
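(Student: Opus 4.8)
The plan is to prove both implications by reducing them to the independence property (Definition~\ref{def:independence}) combined with the transitivity of the pre-order~$\to$, exactly mimicking the structure of the proof of Proposition~\ref{prop:value_operational}. The key observation is that $\val(C')\geq\val(C)$ is, by Definition~\ref{def:value}, equivalent to $C'\to C$, and the independence condition lets us lift this currency-level transformation into the composed space by intersecting with a fixed target specification.

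For the first implication, I would start from the hypothesis $C'\to C$ and apply the second clause of independence with the target resource $V\in\mathcal S$, which yields $C'\cap V \to C\cap V$. Written with the intersection in the order used in the statement, this reads $V\cap C' \to V\cap C$. Now I would simply chain this with the assumed transformation $V\cap C \to W\cap C''$ using transitivity:
\begin{align*}
    V\cap C' \to V\cap C, \quad V\cap C\to W\cap C'' &\implies V\cap C' \to W\cap C''.
\end{align*}
This is precisely the desired conclusion.

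For the second implication the structure is symmetric but the fixed target specification is now~$W$. Again from $\val(C')\geq\val(C)$ we have $C'\to C$, and applying independence with $W\in\mathcal S$ gives $W\cap C' \to W\cap C$. Chaining with the assumption $V\cap C''\to W\cap C'$ via transitivity gives
\begin{align*}
    V\cap C''\to W\cap C', \quad W\cap C'\to W\cap C &\implies V\cap C''\to W\cap C.
\end{align*}
Finally, I would note that setting $V,C''=\Omega$ or $W,C''=\Omega$ recovers Proposition~\ref{prop:value_operational}, using that $\Omega$ is the top element (free resource) so that intersecting with it acts trivially and the compatibility clause guarantees the intersections are non-empty and hence valid specifications.

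I do not expect a genuine obstacle here, since the result is an almost immediate corollary of the independence axiom and transitivity; the only point requiring mild care is bookkeeping the order of the intersection (whether independence is stated as $C\cap V$ or $V\cap C$) and confirming that intersection is symmetric as a set operation, so that the clause of Definition~\ref{def:independence} applies verbatim. One should also check that all the intersections appearing are non-empty, which is exactly guaranteed by the compatibility clause, so that every object in the chain is a legitimate element of~$S^\Omega$ on which $\to$ is defined.
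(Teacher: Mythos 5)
Your proof is correct and follows essentially the same route as the paper's: use $\val(C')\geq\val(C)\iff C'\to C$, lift this to $V\cap C'\to V\cap C$ (respectively $W\cap C'\to W\cap C$) via the independence condition, and conclude by transitivity. The additional bookkeeping remarks about non-emptiness and the $\Omega$ specialization are sound but not needed beyond what the paper itself records.
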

\begin{proof}
To show this statement, we employ transitivity together with independence. Independence means that
$ \val(C') \geq \val(C) \implies V \cap C' \to V \cap C $ for all $V\in\mathcal S$, and so we find that
\begin{align*}
    V \cap C \to W \cap C'' &\implies V \cap C' \to V \cap C \to W\cap C'', \\
    V \cap C'' \to W \cap C' &\implies V \cap C'' \to W \cap C' \to W\cap C,
\end{align*}
which guarantees the result due to transitivity.
\end{proof}

We now define the \emph{balance} of resource transformations conditioned on the available currency $C$ in the wallet as follows. Naturally, this is  only defined when the transformation is actually possible with $C$.

\begin{definition}[Balance]
\label{def:balance}
Let $(S^\Omega,\cT)$ be a resource theory with the currency $\mathcal{C}$ for a target $\mathcal S$ that satisfies the \emph{independence} condition (Definition~\ref{def:independence}). 
Then we define the \emph{balance} of transforming a resource $V$ into another resource $W$ on the target when having access to the currency resource $C\in\mathcal C$ as
\begin{align*}
\BalanceII{V}{W}{C}:= \sup_{C'\in\mathcal{C}}\{\val(C')-\val(C): \
V \cap C\to W\cap C'\}.
\end{align*}
\end{definition}

As we have discussed in Section~\ref{sec:nontechnical}, the balance will in general depend on the available currency $C$ in the wallet. If the currency is fair, there will not be such a dependence (see Proposition~\ref{prop:balance_unique}).

\subsection{Stage III: fair currency}

\begin{definition}[Fair currency]
\label{def:fairness}
A currency $\mathcal C$ with target $\mathcal S$ and value function $\val: \mathcal{C}\to\mathbb{R}_+$ is called \emph{fair} if for any $C_1, C_2 \in \mathcal C$ with
$ \Delta := \val (C_2) -  \val (C_1)  $
and $V,W\in \mathcal S$ such that 
$$V \cap C_1 \to W \cap  C_2, $$
the following holds:
\begin{enumerate}
    \item for any $C'_1\in \mathcal C$ with
    $ - \Delta \leq \val (C'_1)  < c_\text{sup} - \Delta $, 
    there exists a $C'_2\in\mathcal{C}$ such that
    $$ V \cap C'_1 \to W \cap  C'_2$$
    and
    $\val(C'_2)-\val(C'_1)=\Delta$.
    \item for any $C'_2\in\mathcal C$ with
    $ \Delta \leq \val(C'_2)$, 
    there exists a $C'_1\in\mathcal{C}$ such that
    $$ V \cap C'_1 \to W \cap  C'_2$$
    and
    $\val(C'_2)-\val(C'_1)=\Delta$.
\end{enumerate}
We say that a currency is \emph{good for the rich} (having less does not help) if the above two conditions hold for 
$\val (C'_1)\geq \val (C_1)$ and $\val (C'_2)\geq \val (C_2)$ 
respectively. 

Similarly, we call a currency \emph{good for the poor} (having more does not help) if they hold for $\val (C'_1)\leq \val (C_1)$ and $\val (C'_2)\leq \val (C_2)$.
\end{definition}

As explained in Section~\ref{sec:nontechnical}, fairness guarantees that, within suitable bounds, whether or not a transformation can be facilitated with currency in the wallet only depends on the difference in value between initial and final currency resource, but not on the absolute values of currency in the wallet.

Looking at the boundaries from fairness more precisely, we see that the left-hand ones,
$-\Delta\leq \val(C_1')$ and $\Delta\leq \val(C_2')$, simply makes sure that the value of $C_1'$ is not too low for the balance to be defined. 
The right-hand conditions concern large  currency values and transformations that allow us to extract extra currency. 
For example, if we start from a currency of high value $\val (C_1')\geq c_\text{sup} - \Delta$, the balance of the transformation must be less than $\Delta$, because we cannot exceed the boundary  $c_\text{sup}$ by definition.

\subsubsection{Having less does not help}

We will now discuss the implications of a currency that is good for the rich. For such a currency, having less does not help | there are no discounts for the poor.

\begin{proposition}[Balance non-decreasing  with available currency]
\label{prop:balance_increasing}
Let $\mathcal C$ be a currency that is good for the rich. Then, for $C'_1$ within the bounds specified by fairness in Definition~\ref{def:fairness}, 
\begin{align*}
    \val (C_1) \leq \val (C'_1)  
    \implies \BalanceII{V}{W}{C_1}\leq \BalanceII{V}{W}{C'_1} .
\end{align*}
\end{proposition}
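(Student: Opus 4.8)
The plan is to show that every difference value achievable in the balance at $C_1$ is also achievable in the balance at $C'_1$, and then conclude by comparing the two suprema. Since the balance is itself defined as a supremum, the cleanest route is to work with an arbitrary feasible witness rather than an optimal one.

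First I would unpack the definition: $\BalanceII{V}{W}{C_1}$ is the supremum over all $C_2\in\mathcal{C}$ satisfying $V\cap C_1 \to W\cap C_2$ of the quantity $\val(C_2)-\val(C_1)$. I would therefore fix an arbitrary such $C_2$, set $\Delta := \val(C_2)-\val(C_1)$, and record that $V\cap C_1 \to W\cap C_2$ holds. Next I would invoke the fairness hypothesis. Because $\mathcal{C}$ is good for the rich, we are given $\val(C_1)\leq\val(C'_1)$, and $C'_1$ is assumed to lie within the bounds of Definition~\ref{def:fairness}, the first fairness condition applies directly: there exists $C'_2\in\mathcal{C}$ with $V\cap C'_1 \to W\cap C'_2$ and $\val(C'_2)-\val(C'_1)=\Delta$.

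From this I would read off that the value $\Delta=\val(C_2)-\val(C_1)$ belongs to the set over which the supremum defining $\BalanceII{V}{W}{C'_1}$ is taken, so $\BalanceII{V}{W}{C'_1}\geq \val(C_2)-\val(C_1)$. Since $C_2$ was an arbitrary witness for a transformation out of $V\cap C_1$, I would then take the supremum over all such $C_2$ on the right-hand side, obtaining $\BalanceII{V}{W}{C'_1}\geq \BalanceII{V}{W}{C_1}$, which is exactly the claim.

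The main point requiring care—more a subtlety than a genuine obstacle—is that the balance is a supremum that need not be attained, so one cannot simply substitute the optimal final currency resource. Fixing an arbitrary feasible $C_2$ and deferring the supremum to the last step avoids this entirely. The only other hypothesis to keep track of is the boundary condition $-\Delta\leq\val(C'_1)<c_{\text{sup}}-\Delta$ guaranteeing that fairness may be applied; this coincides with the ``within the bounds'' assumption of the proposition and so is granted rather than established.
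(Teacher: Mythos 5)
Your proof is correct and follows essentially the same route as the paper's: both invoke the ``good for the rich'' condition to transfer any transformation $V \cap C_1 \to W \cap C_2$ to one starting from $C'_1$ with the same value difference, and then compare the suprema defining the two balances. Your version merely makes explicit (by fixing an arbitrary feasible witness $C_2$ and deferring the supremum to the end) what the paper states informally, which is a welcome sharpening but not a different argument.
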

\begin{proof}
If the currency is good for the rich, then if $\val (C_1) \leq \val (C'_1)$,   
anything that can be implemented with $C_1$ can also be implemented with $C'_1$ while inducing the same difference in value on the currency (within the bounds specified by Definition~\ref{def:fairness}). But then from the expression for the balance in Definition~\ref{def:balance},
\begin{align*}
\BalanceII{V}{W}{C_1}:= \sup_{C_2\in\mathcal{C}}\{\val C_2 -\val C_1: \
V \cap C_1 \to W\cap C_2\},
\end{align*}
it  follows that 
\begin{align*}
    \BalanceII{V}{W}{C_1} \leq \BalanceII{V}{W}{C'_1}.
\end{align*}
\end{proof}

\subsubsection{Having more does not help}

Here, we discuss the implications of a currency that is good for the poor. In such a theory, having more does not help | there are no discounts for the rich.

\begin{proposition}[Balance non-increasing  with available currency]
\label{prop:balance_decreasing}
Let $\mathcal C$ be a currency that is good for the poor. Then for $C'_1$ within the bounds specified by fairness in Definition~\ref{def:fairness},
\begin{align*}
    \val (C_1) \geq \val (C'_1 )
    \implies \BalanceII{V}{W}{C_1}\leq \BalanceII{V}{W}{C'_1} .
\end{align*}
\end{proposition}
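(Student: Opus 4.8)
The plan is to mirror the proof of Proposition~\ref{prop:balance_increasing}, swapping the direction of the value inequality so that the \emph{poorer} wallet $C'_1$ inherits every transaction available to the richer wallet $C_1$. Starting from Definition~\ref{def:balance}, I would write the balance as the supremum
\[
\BalanceII{V}{W}{C_1}= \sup_{C_2\in\mathcal{C}}\{\val(C_2)-\val(C_1): V \cap C_1\to W\cap C_2\},
\]
and aim to show that the set of achievable value-differences for $C_1$ is contained in the corresponding set for $C'_1$; the claimed inequality then follows at once by taking suprema.

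The key step is as follows. Fix any feasible transaction $V \cap C_1 \to W \cap C_2$ and set $\Delta := \val(C_2)-\val(C_1)$. Since $\val(C'_1)\leq\val(C_1)$ by hypothesis, the first condition of Definition~\ref{def:fairness}, in its \emph{good for the poor} form (valid precisely for $\val(C'_1)\leq\val(C_1)$), furnishes a $C'_2\in\mathcal C$ with $V \cap C'_1 \to W \cap C'_2$ and $\val(C'_2)-\val(C'_1)=\Delta$. Hence the same difference $\Delta$ is achievable starting from $C'_1$, so it lies in the feasible set defining $\BalanceII{V}{W}{C'_1}$. As $C_2$ was arbitrary, every element of the $C_1$-set appears in the $C'_1$-set, giving $\BalanceII{V}{W}{C_1}\leq\BalanceII{V}{W}{C'_1}$.

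The only thing to watch, rather than a genuine obstacle, is the bookkeeping of the admissibility window $-\Delta\leq\val(C'_1)<c_\text{sup}-\Delta$ appearing in Definition~\ref{def:fairness}. The lower bound $-\Delta\leq\val(C'_1)$ is exactly the statement that $C'_1$ can still afford the transaction, and is absorbed into the hypothesis that $C'_1$ lies ``within the bounds specified by fairness''. The upper bound is automatic, since $\val(C'_1)\leq\val(C_1)=\val(C_2)-\Delta\leq c_\text{sup}-\Delta$, with the boundary case $\val(C'_1)=\val(C_1)$ being trivial because then $C_1$ and $C'_1$ are equivalent and the two balances coincide. With the bounds dispatched, the argument is a direct, transitivity-free consequence of fairness, exactly dual to Proposition~\ref{prop:balance_increasing}.
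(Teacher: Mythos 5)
Your proof is correct and takes essentially the same route as the paper's: the \emph{good for the poor} form of condition 1 in Definition~\ref{def:fairness} shows that every value difference $\Delta$ achievable from $C_1$ is also achievable from the poorer wallet $C'_1$, and taking suprema in the balance of Definition~\ref{def:balance} yields the inequality. The paper's own proof is just a terser statement of this same set-containment argument; your explicit bookkeeping of the admissibility window (and the boundary case $\val(C'_1)=\val(C_1)$) is a harmless refinement of what the paper leaves implicit.
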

\begin{proof}
If the currency is good for the poor, then if $\val (C_1) \geq \val (C'_1)$,   
anything that can be done with $C_1$ can also be implemented with $C'_1$ (within the bounds specified by Definition~\ref{def:fairness}). But then from the expression for the balance in Definition~\ref{def:balance}
it  follows that 
\begin{align*}
    \val (C_1) \geq \val (C'_1)  
    \implies 
    \BalanceII{V}{W}{C_1} \leq \BalanceII{V}{W}{C'_1}.
\end{align*}
\end{proof}

\subsubsection{Both directions}

As explained in Section~\ref{sec:nontechnical}, fairness guarantees that, within suitable bounds, whether or not a transformation can be facilitated with currency only depends on the difference in value between initial and final currency resource in the wallet, but not on their absolute values. This of course implies that also the balance of a resource transformation only depends on this difference and not on the available currency $C$ in the wallet, as stated in the following proposition.

\begin{restatable}[Fairness: balance is independent of available currency]{proposition}{propBalanceUnique}
\label{prop:balance_unique}
Let $(S^\Omega,\cT)$ be a resource theory equipped with a fair currency $\mathcal C$ for a target $\mathcal S$. Then, for any two resources $V,W\in\mathcal S$ and all $C,C'\in \mathcal C$ within suitable bounds (as specified by fairness),  
$$ \BalanceII{V}{W}{C} = \BalanceII{V}{W}{C'}, $$ 
and we can define 
$$ \Balance{V}{W} := \sup_{C\in\mathcal C} \BalanceII{V}{W}{C} $$
as the unique balance of the transformation from $V$ to $W$ in terms of the currency (up to boundary effects).
\end{restatable}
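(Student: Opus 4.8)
The plan is to obtain the independence of balance from the available currency by simply combining the two monotonicity results already established for fair currencies. The key observation is that a fair currency (Definition~\ref{def:fairness}) is, by construction, simultaneously \emph{good for the rich} and \emph{good for the poor}, so that both Proposition~\ref{prop:balance_increasing} and Proposition~\ref{prop:balance_decreasing} apply to it. Read together, the first says that $\BalanceII{V}{W}{\cdot}$ is non-decreasing in the value of the available currency, and the second says it is non-increasing; the only way both can hold is for the balance to be constant on the admissible range.

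Concretely, I would fix two currency resources $C,C'\in\mathcal C$ lying within the bounds specified by fairness and, without loss of generality, assume $\val(C)\leq\val(C')$. Applying Proposition~\ref{prop:balance_increasing} with the roles $C_1=C$, $C_1'=C'$ gives $\BalanceII{V}{W}{C}\leq\BalanceII{V}{W}{C'}$. Applying Proposition~\ref{prop:balance_decreasing} with the roles reversed, i.e.\ $C_1=C'$ and $C_1'=C$ so that $\val(C_1)=\val(C')\geq\val(C)=\val(C_1')$, gives the opposite inequality $\BalanceII{V}{W}{C'}\leq\BalanceII{V}{W}{C}$. Sandwiching the two yields $\BalanceII{V}{W}{C}=\BalanceII{V}{W}{C'}$, and the case $\val(C)\geq\val(C')$ is identical by symmetry. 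Once all balances agree across admissible $C$, the supremum in $\Balance{V}{W}:=\sup_{C\in\mathcal C}\BalanceII{V}{W}{C}$ is attained by every such $C$ and equals their common value, which is precisely the claimed uniqueness.

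The main obstacle is not the logical core --- which is a two-line sandwich --- but the bookkeeping of the admissibility ranges. I would need to check that both propositions can be invoked for the \emph{same} pair $C,C'$: the two directions of fairness carry different boundary constraints (the conditions $-\Delta\leq\val(C_1')<c_\text{sup}-\Delta$ and $\Delta\leq\val(C_2')$ in Definition~\ref{def:fairness}), and the equality only holds on the intersection of the two admissible windows. This overlapping window is exactly what the qualifier ``within suitable bounds'' refers to, and I would make explicit that the stated equality, and hence the well-definedness of $\Balance{V}{W}$, holds on that intersection --- accounting for the caveat ``up to boundary effects'' near $c_\text{sup}$ and near the minimum affordable value.
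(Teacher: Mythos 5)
Your proposal is correct and follows essentially the same route as the paper: the paper's proof likewise states that the equality ``follows directly from Propositions~\ref{prop:balance_increasing} and~\ref{prop:balance_decreasing}'' and then notes well-definedness of the supremum. Your explicit sandwich argument and the bookkeeping of the admissibility windows simply spell out what the paper leaves implicit in the phrase ``within suitable bounds.''
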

\begin{proof}
It follows directly from Propositions~\ref{prop:balance_increasing} and~\ref{prop:balance_decreasing} that for any two resources $V,W\in\mathcal S$ and all $C,C'\in\mathcal C$, within suitable bounds (as specified by fairness)
\begin{align*}
    \BalanceII{V}{W}{C} = \BalanceII{V}{W}{C'}.
\end{align*}
Then, since the balance of transforming $V$ into $W$ always exists for some initial $C$ (at worst, it corresponds to the cost of $W$), the quantity
$$ \Balance{V}{W}:=\sup_{C \in \mathcal C} \BalanceII{V}{W}{C} $$
is always well-defined.
\end{proof}

As discussed in Section~\ref{sec:nontechnical}, the cost and the yield of resources are then related to the balance. Firstly, we can express the cost and yield of resources in terms of the balance as follows.

\begin{restatable}[Cost and yield in terms of balance, for fair currencies]{theorem}{propCostYieldBalance}
\label{thm:cost_yield_balance}
Let $(S^\Omega,\cT)$ be a resource theory equipped with a fair currency $\mathcal C$ for a target $\mathcal S$. Then, for any resource $V\in\mathcal S$,
\begin{align*}
    \Cost{V} &= - \Balance{\Omega}{V}, \\
    \Yield{V} &= \Balance{V}{\Omega}.
\end{align*}
\end{restatable}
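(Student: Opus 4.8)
The plan is to unfold the definitions of $\Cost{\cdot}$, $\Yield{\cdot}$ and $\Balance{\cdot}{\cdot}$ and use the fact that $\Omega$ is the bottom element of the currency, with $\val(\Omega)=c_{\min}=0$. The whole statement should reduce to recognizing that buying and selling $V$ are just the two special cases of a balanced transformation where one of the two target resources is the trivial resource $\Omega$, so that the currency difference $\val(C')-\val(C)$ collapses to a single value once we fix one endpoint at $\Omega$.

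First I would treat the yield. By Definition~\ref{def:balance}, $\Balance{V}{\Omega}$ equals $\sup_{C\in\mathcal C}\BalanceII{V}{\Omega}{C}$, where each conditional balance is $\sup_{C'}\{\val(C')-\val(C): V\cap C\to \Omega\cap C'\}$. Since $\Omega$ is the full state space, $\Omega\cap C'=C'$ and $V\cap\Omega = V$, so choosing $C=\Omega$ gives $\val(C)=0$ and the condition $V\to C'$; the supremum over $C'$ of $\val(C')$ subject to $V\to C'$ is exactly $\Yield{V}$ by Definition~\ref{def:cost_yield}. I would then argue that fairness (Proposition~\ref{prop:balance_unique}) makes the conditional balance independent of $C$, so evaluating at $C=\Omega$ is legitimate and yields $\Balance{V}{\Omega}=\Yield{V}$.

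Next I would handle the cost symmetrically. Here $\Balance{\Omega}{V}$ involves transformations $\Omega\cap C\to V\cap C'$, i.e. $C\to V\cap C'$. The idea is to fix the \emph{final} currency at the bottom, $C'=\Omega$, so that $V\cap C'=V$ and $\val(C')=0$; the balance contribution becomes $\val(\Omega)-\val(C)=-\val(C)$, maximized subject to $C\to V$, which is $-\inf\{\val(C):C\to V\}=-\Cost{V}$. Again invoking fairness to remove the dependence on the reference currency, I get $\Balance{\Omega}{V}=-\Cost{V}$, as claimed. Throughout I would lean on the independence property (Definition~\ref{def:independence}) to justify that intersecting with $\Omega$ behaves trivially and that these corner transformations genuinely realize the buying/selling processes.

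The main obstacle I anticipate is the careful bookkeeping around the fairness boundaries and the role of $\Omega$ as both a currency element and a target element. The subtlety is that $\Balance{V}{\Omega}$ in Proposition~\ref{prop:balance_unique} is defined as a supremum over the starting currency $C$, whereas I want to evaluate it at the specific choice $C=\Omega$; I must confirm that fairness guarantees the conditional balance is the same for all admissible $C$ (within the stated bounds) and that $C=\Omega$ lies within those bounds, since $\val(\Omega)=0$ is the minimal value and one direction of the boundary conditions in Definition~\ref{def:fairness} constrains small currency values. I would spell out that the relevant inequalities $-\Delta\leq\val(C_1')$ and $\Delta\leq\val(C_2')$ are satisfied in the degenerate cases at hand, so no information is lost by fixing one endpoint at the bottom, and that the optimization over $C'$ in the balance then coincides exactly with the optimization defining cost or yield.
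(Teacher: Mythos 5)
Your proposal is correct and follows essentially the same route as the paper's proof: unfold the definitions of cost, yield and balance, pin one endpoint of the transformation at $\Omega$ (final currency $\Omega$ for the cost, initial currency $\Omega$ for the yield), and invoke fairness via Proposition~\ref{prop:balance_unique} to remove the dependence on the conditioning currency, checking that the boundary conditions of Definition~\ref{def:fairness} are harmless since $\val(\Omega)=0$ and $\Yield{V}\geq 0$. One cosmetic remark: the identities $V\cap\Omega=V$ and $\Omega\cap C'=C'$ follow simply from $V,C'\subseteq\Omega$, not from the independence property, which is only needed for the balance to be defined at all.
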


\begin{proof}
To show the first statement, note that $ \Omega \in \mathcal C $ and so for any $C\in\mathcal C$ and $V \in \mathcal S$,
\begin{align*}
    \Cost{V} &:= \inf_{C\in\mathcal C} (\val(C): C \to V) \\
    \flag{\text{fairness}}&= \inf_{C,C'\in\mathcal C} (\val(C) - \val(C'): C \to V \cap C') \\
    &= - \Balance{\Omega}{V}.
\end{align*}
For the second statement, note that $\Omega$ is the least valuable currency resource, and $\Yield{V}\geq 0$. This means that boundary conditions are not relevant and so fairness guarantees that
$$ \Yield{V} = \BalanceII{V}{\Omega}{\Omega} = \Balance{V}{\Omega}. $$
\end{proof}

Secondly, we can also relate the balance for an arbitrary resource transformation to the yield and cost of the initial and final resource respectively.

\begin{theorem}[Balance bounds in terms of cost and yield, for fair currencies]
\label{thm:balance_cost_yield}
Let there be a resource theory with a fair currency $\mathcal{C}$ for a target $\mathcal S$. Then for any $V,W\in \mathcal S$,
$$\Balance{V}{W}\geq \Yield{V}-\Cost{W}-\tilde \eps$$
for arbitrarily small or zero $\tilde \eps\geq 0$ (depending on whether or not the cost and yield are achievable). \\
If the currency is tight for resources $V,W$, furthermore
$$\Balance{V}{W}=\Yield{V}-\Cost{W}=\Cost{V}-\Cost{W} .$$
\end{theorem}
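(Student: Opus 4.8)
The plan is to establish the lower bound first, in full generality, and then under tightness to trap the balance between two matching quantities by invoking the lower bound in both directions together with a ``no free lunch'' lemma. Throughout, fairness (via Proposition~\ref{prop:balance_unique}) is what lets me freely concatenate protocols regardless of the intermediate amount of currency.

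\emph{Lower bound (general case).} I would realise the naive ``sell-then-buy'' strategy inside the balance framework by routing the target through the trivial resource $\Omega$. By Theorem~\ref{thm:cost_yield_balance} we have $\Yield{V} = \Balance{V}{\Omega}$ and $\Cost{W} = -\Balance{\Omega}{W}$. For any $\delta>0$ I would pick a process $V \cap C_0 \to \Omega \cap C_1$ extracting $\val(C_1)-\val(C_0) \geq \Yield{V}-\delta$. Since the currency is fair, Proposition~\ref{prop:balance_unique} guarantees that $\Balance{\Omega}{W}$ is achievable starting from any currency within bounds, in particular from the resource $C_1$ just produced: there is $C_2$ with $\Omega \cap C_1 \to W \cap C_2$ and $\val(C_2)-\val(C_1) \geq -\Cost{W}-\delta$. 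Chaining $V\cap C_0 \to \Omega\cap C_1 \to W\cap C_2$ gives $V\cap C_0 \to W\cap C_2$ with $\val(C_2)-\val(C_0)\geq \Yield{V}-\Cost{W}-2\delta$; taking the supremum over starting currencies and letting $\delta\to 0$ yields $\Balance{V}{W}\geq \Yield{V}-\Cost{W}-\tilde\eps$, with $\tilde\eps=0$ exactly when cost and yield are achievable (the tight case).

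\emph{No-free-lunch lemma.} The new ingredient for the tight case is the matching upper bound, whose crux is the inequality $\Balance{V}{W}+\Balance{W}{V}\leq 0$. I would prove it by the same chaining idea applied to a round trip: concatenating near-optimal processes for $V\to W$ and for $W\to V$ (again using fairness, Proposition~\ref{prop:balance_unique}, to start the return leg from whatever currency the forward leg produced) gives a self-cycle $V\cap C \to V\cap C''$ with $\val(C'')-\val(C)$ arbitrarily close to $\Balance{V}{W}+\Balance{W}{V}$, so that $\BalanceII{V}{V}{C}\geq \Balance{V}{W}+\Balance{W}{V}-2\delta$. A fair currency (in particular one that is good for the poor) admits no pathological self-cycle of strictly positive balance, since this would collapse the order on $\mathcal C$, as analysed in Appendix~\ref{appendix:pathological}; hence $\BalanceII{V}{V}{C}\leq 0$, and letting $\delta\to 0$ gives the lemma.

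\emph{Conclusion.} Applying the lower bound to the reverse transformation gives $\Balance{W}{V}\geq \Yield{W}-\Cost{V}$, with $\tilde\eps=0$ because $V,W$ are tight and their cost and yield are achievable. Combining with the no-free-lunch lemma, $\Balance{V}{W}\leq -\Balance{W}{V}\leq \Cost{V}-\Yield{W}$. Together with the lower bound this produces the general sandwich $\Yield{V}-\Cost{W}\leq \Balance{V}{W}\leq \Cost{V}-\Yield{W}$; tightness, $\Cost{V}=\Yield{V}$ and $\Cost{W}=\Yield{W}$, then collapses both ends to the common value $\Yield{V}-\Cost{W}=\Cost{V}-\Cost{W}$, which is the claim. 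I expect the main obstacle to be the no-free-lunch lemma: it is the only place where fairness is genuinely indispensable, and its justification rests on ruling out pathological (catalytic) self-cycles rather than on any routine manipulation of the definitions.
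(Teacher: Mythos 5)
Your proof is correct, but it reaches the equality under tightness by a genuinely different route from the paper's. The lower bound is essentially the paper's own sell-then-buy argument in different packaging: the paper constructs it explicitly with near-optimal currency elements (splitting into the cases $\Cost{W}<\Yield{V}$ and $\Cost{W}\geq\Yield{V}$ and invoking fairness to shift the starting capital), whereas you route through $\Omega$ via Theorem~\ref{thm:cost_yield_balance} and Proposition~\ref{prop:balance_unique}; the two are interchangeable. The real divergence is the tight case. The paper computes $\Balance{V}{W}$ directly: splitting on the sign of the balance, it uses fairness to set one end of the transaction to $\Omega$, then uses tightness to identify $W\cap C_1$ (resp.\ $V\cap C_1$) with a single currency element of value $\val C_1+\Cost{W}$ (resp.\ $\val C_1+\Yield{V}$), which collapses the supremum to $\Yield{V}-\Cost{W}$. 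You instead prove a matching upper bound from the cycle inequality $\Balance{V}{W}+\Balance{W}{V}\leq 0$, obtained by chaining near-optimal round trips (legitimate, by Proposition~\ref{prop:balance_unique}) and then excluding self-cycles with $\BalanceII{V}{V}{C}>0$; that exclusion is indeed available to you, since Proposition~\ref{prop:pathological_II} shows that for a currency that is good for the poor such a resource would satisfy $\Yield{V}>\Cost{V}$, contradicting Proposition~\ref{prop:cost_bigger_yield} once the value function is monotone --- and neither of those results depends on the present theorem, so there is no circularity. Your route buys something the paper does not state: the two-sided bound $\Yield{V}-\Cost{W}\leq\Balance{V}{W}\leq\Cost{V}-\Yield{W}$ (up to $\tilde\eps$) for an \emph{arbitrary} fair currency, plus a reusable no-free-lunch lemma; it also avoids both the case split on the sign of the balance and the somewhat delicate equivalence $W\cap C_1\rightleftharpoons C$. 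What the paper's computation buys is self-containedness: it needs only fairness and tightness, not the pathological-case machinery of Appendix~\ref{appendix:pathological}, and it pins down the exact value in one pass. Both treatments share the same mild looseness about the fairness boundary conditions (staying below $c_{\sup}$ and being able to afford the intermediate steps), which is consistent with the paper's level of rigor.
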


\begin{proof}
This proof is split as follows:
\begin{enumerate}
    \item General fair currency 
    \begin{enumerate}
        \item Case $\Cost{W}<\Yield{V}$
        \item     Case $\Cost{W}\geq \Yield{V}$ 
    \end{enumerate}
    \item Tight currency
    \begin{enumerate}
        \item Case $\Balance{V}{W} > 0$
        \item Case $\Balance{V}{W} \leq 0$
    \end{enumerate}
\end{enumerate}

Before we begin, note that since the cost is defined as $\Cost{W} = \inf_{C \in \mathcal C} \{\val(C): C \to W \}$,
there will be a currency element $C\in\mathcal C$ such that 
$\val (C )= \Cost{W} + \eps$ for an arbitrarily small $\eps \geq 0$ and $C \to W $ (because the cost may not be attainable). The same reasoning holds for the yield and balance. For fair currencies, these quantities are all attainable and as such $\eps=0$.

\begin{enumerate}
    \item  General fair currency.

    \begin{enumerate}
 
       \item Case $\Cost{W}<\Yield{V}$

In this case, there exist elements $C_y, C_c\in \mathcal C$ such that $V \to C_y \to C_c\to  W$ and
$$\Yield{V}\geq\val(C_y) > \val(C_c)  \geq \Cost{W},$$
with 
$ \val (C_y )= \Yield{V}-\eps $ and
$ \val (C_c) = \Cost{W} + \eps' $
for some arbitrarily small or zero $\eps,\eps'\geq 0$ (depending on whether or not $\Yield{V}$ and $\Cost{W}$ are achievable).

Then, by fairness, there exists another currency resource $C_\alpha$ such that $ V \to C_y \to W \cap C_\alpha $ and
\begin{align*}
   0< \val (C_\alpha )&= \val (C_y) - \val (C_c) \\
    &= \Yield{V} - \eps - (\Cost{W} + \eps') .
\end{align*}
We obtain for the balance
\begin{align*}
    \Balance{V}{W}
    &= \sup_{C,C''} \{\val (C) - \val (C''): C''\cap V \to C  \cap W \} \\
    &\geq \val (C_\alpha )
    = \Yield{V} -\Cost{W} - \underbrace{(\eps + \eps')}_{\tilde \eps} .
\end{align*}

        \item Case $\Cost{W}\geq \Yield{V}$
        
Let $C_c\in\mathcal C$ be an achievable cost value of $W$ with
$ C_c \to W$ 
and
$\val(C_c) = \Cost{W} + \eps$ for arbitrarily small or zero $\eps\geq 0$. Then, due to fairness,
\begin{align*}
    \Yield{V}
    &= \Balance{V}{\Omega} \\
    &= \sup_{C''} \{\val (C_c) - \val (C''): C''\cap V \to C_c  \cap \Omega \} \\
    &= \val (C_c) - \val (C') + \eps'
\end{align*}
for some $C'\in\mathcal C$ such that $ C'\cap V \to C_c $ and for an arbitrarily small or zero $\eps'$. 
But then we can implement 
$ V \cap C'\to C_c \to W $,
and so
$$ \Balance{V}{W} \geq -\val(C') = \Yield{V}- \Cost{W} - \eps - \eps'$$
for arbitrarily small or zero $\eps,\eps'\geq 0$ (depending on whether or not $\Yield{V}$ and $\Cost{W}$ are achievable).

    \end{enumerate}
    \item Currency is tight for $V$ and $W$.
    
    Note that in this case, the yield and cost are attainable.
    \begin{enumerate}
    
        \item Case $\Balance{V}{W} > 0$
        
In this case
\begin{align*}
    \Balance{V}{W} &= \sup_{C_1,C_2\in\mathcal C} (\val C_1 - \val C_2: V \cap C_2 \to W \cap C_1) \\
    \flag{\text{fairness, } \Balance{V}{W}> 0 }
    &= \sup_{C_1\in\mathcal C} (\val C_1: V \to W \cap C_1) \\
    \flag{\Cost{W}=\Yield{W}}
    &= \sup_{C\in\mathcal C} (\val C - \Cost{W}: V \to C) \\
    &= \Yield{V}-\Cost{W},
\end{align*}
where in the third line we used that for a fair currency that is tight for $W$,
$ W \cap C_1 \rightleftharpoons C $
when $\val C = \val C_1 + \Cost{W}$.

        \item Case $\Balance{V}{W} \leq 0$
        
  In this case,      
        \begin{align*}
    \Balance{V}{W} &= \sup_{C_1,C_2\in\mathcal C} (\val C_2 - \val C_1: V \cap C_1 \to W \cap C_2) \\
    \flag{\text{fairness, } \Balance{V}{W}\leq 0 }
    &= \sup_{C_1\in\mathcal C} (-\val C_1: V \cap C_1 \to W ) \\
    \flag{\Cost{V}=\Yield{V}}
    &= \sup_{C\in\mathcal C} (-\val C+\Yield{V}: C \to W ) \\
    &= \Yield{V}-\Cost{W},
\end{align*}
where in the third line we used that $ V \cap C_1 \rightleftharpoons C$ with $\val C = \val C_1 + \Yield{V} $.

    \end{enumerate}
\end{enumerate}

\end{proof}

\section{Examples}
\label{appendix:example}
In this appendix we apply our notions of currencies to three resource theories: LOCC, thermal operations and unital maps.

\subsection{Bell pairs in LOCC}

Here
show that the familiar example of Bell pairs in LOCC satisfies Definitions~\ref{def:currency} and~\ref{def:independence} and can be understood as a Stage II currency.

\begin{proposition}[Bell pairs are a Stage II currency]
\label{prop:Bell_currency}

Consider the resource theory of two-party LOCC over the partition $A'\otimes (\bigotimes_i^N  A_i )  \| B'\otimes (\bigotimes_i^N  B_i) $, in  a global Hilbert space 
$$\hilbert_{\text{global}} = \underbrace{\left( \bigotimes_i^N ( A_i  \otimes B_i) \right)}_{\text{wallet}} \otimes  \underbrace{A' \otimes \tilde B'}_{\text{target}},$$
where each $A_i$ and $B_i$ is a qubit, and $A'$, $B'$ are composed of $N$ qubits each.

Take the  currency $\mathcal C$ composed of specifications of a certain number $n$ of copies of Bell pairs in the wallet,
$\mathcal C = \{ {\Psi^n} \}_{n=1}^N \cup \{ \Psi^0=\Omega\},$
with 
${\Psi^n} = \{ \sigma_{\text{global}}: \sigma_{A_1 B_1 \dots A_n B_n} = \pure \psi^{\otimes n} \} ,$
and $\ket \psi = (\ket{00} + \ket{11} )/ \sqrt2$.

Take the target $\mathcal S$ to be any set of specifications that are local in $A'\otimes B'$, for example $\mathcal S = \{\local{\rho_{A'B'}} \}_{\rho}$, with
$$\local{\rho_{A'B'}}= \{ \sigma_{\text{global}}: \sigma_{A'B'} = \rho_{A'B'} \} .$$

Then $\mathcal C$ forms a currency for target $\mathcal S$ according to Definition~\ref{def:currency} and satisfies the independence condition~\ref{def:independence}.

\end{proposition}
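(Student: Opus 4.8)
The plan is to verify in turn the two defining properties of a currency from Definition~\ref{def:currency} (order and universality) and then the two parts of Definition~\ref{def:independence} (compatibility and undisturbed transformation of the currency). Three of these four checks reduce to the disjoint tensor-product structure of wallet and target together with elementary subset inclusions; the only substantive step is the \emph{creation} half of universality, which rests on teleportation.

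For \textbf{order}, I would observe that fixing the first $n$ wallet pairs to the state $\pure\psi^{\otimes n}$ determines the marginal on the first $m\leq n$ pairs to be $\pure\psi^{\otimes m}$, so $\Psi^n \subseteq \Psi^m$ whenever $n\geq m$. Since the pre-order $\to$ always allows forgetting (passing to a superset), this gives $\Psi^n \to \Psi^m$ for $n\geq m$, so $\mathcal C$ is totally ordered with monotone value $\val(\Psi^n)=n$ and bottom element $\Psi^0=\Omega$. This both establishes the order axiom and shows the order is realized purely by inclusions, a fact I will reuse for independence.

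For \textbf{universality}, the \emph{selling} direction is immediate: every $V\in\mathcal S$ satisfies $V\subseteq \Omega=\Psi^0$, so $V\to\Psi^0$ by forgetting. The \emph{creation} direction is the crux: I claim $\Psi^N \to \local{\rho_{A'B'}}$ for every target state $\rho_{A'B'}$. Here I would have Alice locally prepare, in ancillas on her side of the cut, a purification $\ket{\phi_{A'B'E}}$ of $\rho_{A'B'}$, and then teleport the $N$-qubit subsystem $B'$ to Bob, consuming exactly the $N$ wallet Bell pairs of $\Psi^N$ (the dimension count $\log|B'|=N$ is what matches the number of available pairs). Because the pairs are exactly maximally entangled, teleportation succeeds deterministically, so the element-wise action of this LOCC map sends every $\sigma\in\Psi^N$ to a global state with marginal $\rho_{A'B'}$; hence $f(\Psi^N)\subseteq \local{\rho_{A'B'}}$, i.e.\ $\Psi^N \to \local{\rho_{A'B'}}$. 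I expect this teleportation step --- checking that a single protocol works simultaneously for all $\sigma\in\Psi^N$ and leaves the target marginal as required --- to be the main obstacle, though it is standard.

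Finally, for \textbf{independence} I would use that the wallet factors $A_iB_i$ and the target factors $A'B'$ are disjoint tensor factors. \emph{Compatibility} $C\cap V\neq\emptyset$ follows by exhibiting a global product state carrying the Bell marginal on the wallet and a purification of $\rho_{A'B'}$ on the target, which lies in both specifications. For the \emph{undisturbed} condition, I would invoke the observation from the order step that $\Psi^n\to\Psi^m$ is realized by the inclusion $\Psi^n\subseteq\Psi^m$; intersecting with any $V\in\mathcal S$ preserves inclusion, $\Psi^n\cap V\subseteq \Psi^m\cap V$, so $\Psi^n\cap V\to\Psi^m\cap V$ by forgetting, which is exactly the second clause of Definition~\ref{def:independence}. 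This completes the plan.
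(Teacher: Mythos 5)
Your proposal is correct and follows essentially the same route as the paper's proof: order from the inclusions $\Psi^n \subseteq \Psi^m$ (forgetting), universality from teleportation for creation and forgetting to $\Omega$ for selling, compatibility from a tensor-product witness, and non-disturbance from the fact that the currency order is realized by inclusions, which survive intersection with any target specification. One cosmetic point: your compatibility witness should simply place the (possibly mixed) state $\rho_{A'B'}$ itself on the target, since the fixed global space has no room for a purifying system; this only simplifies your argument and does not affect its validity.
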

\begin{proof}

For $\mathcal C$ to satisfy Definition~\ref{def:currency}, we need to check the following things:
\begin{enumerate}
\item Order: $\forall\, \Psi^n, \Psi^{m}\in \mathcal{C}$, either $\Psi^n\to \Psi^{m}$ or $\Psi^{m}\to \Psi^{n}$.

This is true since for $n\geq m$, $ \Psi^n \subseteq \Psi^m $, and so trivially by mere forgetting of information on some of the qubits
$$ n\geq m \implies \Psi^n \to \Psi^{m}. $$

Physically, any allowed transformation on  the wallet can only degrade the currency resource and result in a loss of available maximally entangled qubits.

\item Universality: for all target specifications $V\in \mathcal S$, there is a 
$\Psi^n\in \mathcal C$
such that 
$\Psi^n\to V$, and there is a $\Psi^m\in \mathcal C$
such that 
$V\to \Psi^m$. 

Due to teleportation, any bipartite state $\rho_{A'B'}$ on a certain number of qubits can be generated from a corresponding amount of maximally entangled qubits under LOCC~\cite{Bennett1982}. Going to a more general specification containing at least one local state is just a matter of forgetting. Hence, one can achieve any transformation
$$ \Psi^n\to V $$
for some $n\leq N$. Also, trivially $V \to \Omega$ and $\Omega\in\mathcal C$.
\end{enumerate}

To see that $\mathcal C$ also satisfies the independence condition in Definition~\ref{def:independence}, note that 
because the  wallet and the target $A'B'$ are different subsystems,
any state in the wallet is compatible with any state in the target 
(at the very least, we can combine any two local states with the tensor product). 
Furthermore, indeed $\Psi^n\to \Psi^{m}$ implies we can do so without disturbing the target: 
this is true 
since obviously merely forgetting information about some of the qubits in the wallet does not affect the state on the systems $A'B'$. 
More generally, any transformation in the wallet (such as one that replaces the maximally entangled state on some of the qubits by the state $\ket{00}$) on the two parties leaves the state on $A'B'$ invariant.

Finally, we can assign the value function $\val(\Psi^n)=n$ since $n\geq m$ is a necessary and sufficient condition for the transformation $\Psi^n\to \Psi^{m}$. 
Note that $n=0$ corresponds to the adopted convention of minimal value for $\Omega$.
\end{proof}

\subsection{Work in thermal operations}

Here we show that the usual notion of work in thermal operations (for block-diagonal states) is a Stage II currency.
This is an example of a currency that is not made of copies of individual states, that is, a currency without a subsystem structure inside the wallet. 

Just like in the case of Proposition~\ref{prop:Bell_currency}, this statement constitutes merely an example to show how for a specific target system we can construct a currency, and so it does not constitute a rigorous analysis of the most general cases of currencies in this theory.
For example, it would be interesting to study a currency made out of confidence regions around fixed values of energy \cite{Aberg2013a}.

\begin{proposition}[Stage II currency in quantum thermodynamics]
\label{prop:work_currency}
Consider the resource theory of thermal operations in a global Hilbert space $S \otimes W$, where each subsystem has dimension $d$, and identical individual Hamiltonians $H_i= \sum_k k \, \Delta \, \pure {E_k}_i$. 
Then the currency $ \mathcal{C} = \{C^k\}_k \cup \{\Omega \} $,
with 
$$ C^k = \{ \sigma_{CS}: \tr_S(\sigma_{CS}) = \pure{E_k}_C \} $$
is a Stage II currency for the target made out of local block-diagonal states on $S$, with the value function 
$\val(C^k) = E_k$,  $\val(\Omega)=0$.
\end{proposition}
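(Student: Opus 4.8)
The plan is to verify the three conditions that make $\mathcal{C}$ a Stage II currency: order and universality (Definition~\ref{def:currency}) together with independence (Definition~\ref{def:independence}), mirroring the structure of the Bell-pair proof (Proposition~\ref{prop:Bell_currency}). The underlying physical fact I would lean on throughout is that, for states block-diagonal in the energy eigenbasis, thermal operations reduce to thermo-majorization with respect to the Gibbs state, and that a pure energy eigenstate $\pure{E_k}$ on the wallet functions exactly as a ``battery'' storing work $E_k$.

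\emph{First}, for the \textbf{order} condition I would show that the currency elements $C^k$ are totally ordered by their energy. Since each $C^k$ is specified only through its wallet marginal $\pure{E_k}_C$, a transformation $C^k \to C^{k'}$ on the wallet corresponds to mapping one energy eigenstate to another while respecting the thermodynamic arrow: the eigenstate of higher energy can always be brought to one of lower energy (the excess energy is dissipated into the bath, which is always allowed), so $E_k \geq E_{k'} \implies C^k \to C^{k'}$. This establishes the total order and motivates the value function $\val(C^k)=E_k$, with $\Omega$ sitting at the bottom by convention $\val(\Omega)=0$. \emph{Second}, for \textbf{universality} over the target of local block-diagonal states on $S$, I would invoke the standard single-shot work results: any block-diagonal target state can be created from a sufficiently energetic pure wallet state, and conversely any such state can be ``burned'' down to $\Omega$ (trivially, since $V \to \Omega$ always holds and $\Omega \in \mathcal{C}$).

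\emph{Third}, for \textbf{independence}, compatibility $C^k \cap V \neq \emptyset$ follows because $S$ and $W$ are distinct subsystems, so any wallet marginal and any target marginal can always be realized simultaneously (at worst by a tensor product). The non-disturbance clause, $C^k \to C^{k'} \implies C^k \cap V \to C^{k'} \cap V$ for all target $V$, I would argue by noting that the transformation implementing $C^k \to C^{k'}$ acts only on the wallet energy degree of freedom and on the bath, leaving the target marginal on $S$ untouched; since the target specifications constrain only the $S$-marginal, the combined specification $C^k \cap V$ is mapped into $C^{k'} \cap V$.

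The main obstacle I anticipate is the non-disturbance clause under the genuine constraints of thermal operations: unlike LOCC, where ``forgetting'' qubits trivially preserves the target, here a thermal operation is a joint energy-conserving interaction with a bath, and one must be careful that manipulating the wallet energy can be done by an operation whose reduced action on $S$ is the identity. I would handle this by exhibiting an explicit thermal operation of the form ``swap the wallet level against a bath mode,'' which is energy-conserving and acts as the identity on $S$; because the target specifications are defined purely through the $S$-marginal and are closed under arbitrary operations on the complementary systems, the combined transformation stays inside $C^{k'} \cap V$. A secondary subtlety is confirming that restricting to \emph{block-diagonal} target states is genuinely needed (coherence cannot be created for free under time-translation-symmetric thermal operations), which justifies why the target is specified as block-diagonal rather than arbitrary.
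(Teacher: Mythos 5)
Your overall structure matches the paper's proof: both check order, universality, compatibility and non-disturbance, with thermo-majorization as the workhorse, and your treatment of non-disturbance (an explicit wallet--bath operation acting as $\mathcal{E}_C \otimes \id_S$) is if anything more careful than the paper's one-line justification. The gap is in universality. You invoke ``standard single-shot work results: any block-diagonal target state can be created from a sufficiently energetic pure wallet state,'' but the proposition fixes the wallet to have the \emph{same} Hamiltonian and dimension as $S$, so you must verify that this particular wallet is energetic enough. If you apply the standard work-of-formation accounting literally --- wallet goes from the charged eigenstate $\pure{E_{\max}}_C$ to its ground state while $S$ goes from Gibbs to $\rho_S$ --- then the worst-case target $\pure{E_{\max}}_S$ requires work $E_{\max} + kT\ln Z_S > E_{\max}$, which the wallet does not hold; the argument as stated falls short quantitatively.

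The paper closes exactly this hole with a construction your proposal never deploys for universality: because the two systems have identical Hamiltonians and dimensions, the swap between $C$ and $S$ is energy-conserving, hence a valid thermal operation. Starting from $C_{\max}$ (wallet marginal $\pure{E_{\max}}_C$, with $S$ arbitrary), swap the two systems so that $\pure{E_{\max}}$ now sits on $S$, and then act on $S$ alone: the maximal-energy eigenstate thermo-majorizes every block-diagonal state on $S$, so every target $V$ is reachable. The reason this evades the $kT\ln Z_S$ shortfall is that universality places no constraint on the final wallet state --- it may be left holding whatever state $S$ had before the swap --- whereas the standard formation results implicitly charge you for resetting the wallet to a pure low-energy state. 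Your proof becomes correct if you replace the appeal to generic single-shot results by this swap argument (which you in fact already have on hand, since you use a swap for the non-disturbance clause).
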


\begin{proof}
For block-diagonal states in the energy eigenbasis, whether or not two states can be transformed into one another in the resource theory of thermal operations can be determined by means of the thermo-majorization condition, which amounts to comparing the corresponding Lorentz curves of the quantum states~\cite{HorodeckiOppenheim2013,Gour2013}.
We may check the following points:
\begin{enumerate}
\item Order: $\forall \, C,C' \in \mathcal{C}$, either $C\to C'$ or $C'\to C$.

This is true since for $ E_i \geq E_j $, thermo-majorization guarantees that 
$ C^i \to C^j $.
Also, trivially $ C  \to \Omega $ for any $C\in\mathcal C$.

\item Universality: for all target specifications $V,W\in \mathcal S$, there is a $C\in \mathcal{C}$ such that 
$C \to V$, and a $C'\in\mathcal{C}$ such that $V\to C'$.

It is clear that the pure state on the maximum energy eigenvalue on $S$ thermo-majorizes any other state on $S$ that is block-diagonal in the energy eigenbasis. Hence, starting from the largest energy eigenstate on system $C$ we can simply swap systems $C$ and $S$ and then generate the required state on $S$, thus achieving any transformation
$ C_\text{max} \to V $,
with 
$C_\text{max}=
\{\sigma_{CS}: \tr_S(\sigma_{CS}) = \ket{E_\text{max}}\bra{E_\text{max}}_C \} $ and $E_\text{max}$ the maximum energy eigenvalue.
Also, trivially $V \to \Omega$ and $\Omega\in\mathcal C$.

\item Non-disturbance: for $C,C'\in\mathcal C$, we have that $C\to C'$ implies we can do so without disturbing the target.

Transformations on system $C$ do not affect the states in the target system $S$ since local quantum maps are of the form $\mathcal{E}_C\otimes \id_S$.

\item Compatibility: all $C \in\mathcal{C}$ can be composed with any target resource.

This is true because $C$ and $S$ are different subsystems, and so any state on $C$ is compatible with any state on $S$ (at the very least, we can combine any two states on $C$ and $S$ with the tensor product).
\end{enumerate}

\end{proof}

\subsection{Resource theory of unital maps}

\subsubsection{Stage I currency}

We now give an explicit Stage I currency for the resource theory of unital maps.

\begin{restatable}[Currency for unital maps]{proposition}{propCurrencyUnitalMaps}
\label{prop:currency_unital}
The set
$ \mathcal{C} 
= \{ C^k \}_{k \in \{1,\dots d\}} \cup \{ \Omega \} $
with
$ C^k = \left\{ \frac{\Pi^{k}}{k} \right\} $, where $\Pi^k$ denotes the projector onto the first $k$ vectors in a given basis, forms a  universal currency under unital maps on a Hilbert space of dimension $d$.
\end{restatable}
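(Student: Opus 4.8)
The plan is to verify the two defining conditions of Definition~\ref{def:currency} --- order and universality --- for the proposed set $\mathcal C = \{C^k\}_{k=1}^d \cup \{\Omega\}$. Throughout I will use the standard fact that on quantum states the resource theory of unital maps induces exactly the majorization pre-order: a state $\rho$ can be mapped to $\sigma$ by a unital TPCP map if and only if $\rho$ majorizes $\sigma$ (i.e.\ $\sigma \prec \rho$), together with the fact that the pre-order on specifications lifts this via the element-wise action and the inclusion order (as recalled in Section~\ref{sec:preliminaries}).

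First I would establish \textbf{order}. For two currency elements $C^k$ and $C^{\ell}$ with $k \leq \ell$, I would show $C^k \to C^{\ell}$ by checking that the uniform state $\Pi^k/k$ of rank $k$ majorizes the uniform state $\Pi^{\ell}/\ell$ of rank $\ell$. This is immediate from the definition of majorization: the sorted eigenvalue vector $(1/k,\dots,1/k,0,\dots)$ majorizes $(1/\ell,\dots,1/\ell,0,\dots)$ whenever $k \leq \ell$, since the flatter, higher-rank distribution is always majorized by the more concentrated one. Hence every pair of currency elements is comparable, and I would note that $\Omega \in \mathcal C$ by construction; since $C^d = \{\id/d\}$ is the maximally mixed state reachable from anything, $C^d$ and $\Omega$ are interconvertible (every state trivially reaches $\Omega$ by forgetting, and the discard-and-prepare map reaches $\id/d$), consistent with $\val(C^d)=\val(\Omega)=0$.

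Next I would establish \textbf{universality} for the global target $\mathcal S = S^\Omega$. For an arbitrary specification $V$, the ``selling'' direction $V \to C'$ is trivial: every resource reaches $\Omega = C'$ by forgetting information, so $\Omega \in \mathcal C$ suffices. The ``buying'' direction $C \to V$ is the substantive one: I would take $C = C^1 = \{\pure 1\}$, a pure state, and argue that a pure state majorizes every density operator, so for any single state $\nu \in V$ there is a unital map sending $\pure 1$ to $\nu$. The subtlety --- and the place I expect the main obstacle --- is that universality on \emph{specifications} requires a \emph{single} transformation $f \in \cT$ with $f(\{\pure 1\}) \subseteq V$, acting element-wise; since $C^1$ is a single state this is unproblematic (there is only one element to map, and I can choose the map realizing $\pure 1 \to \nu$ for any fixed $\nu \in V$, then the image $\{\nu\} \subseteq V$). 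So the element-wise condition is automatically satisfied here precisely because the currency buying-element is a singleton. I would close by remarking that this is exactly why the currency is built from sharp uniform states rather than from sets, and that the resulting value function $\val(C^k) = \log d - \log k$ is monotonic under $\to$, completing the verification that $\mathcal C$ is a universal currency.
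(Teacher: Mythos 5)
Your proposal is correct and follows essentially the same route as the paper's proof: order via the majorization relation between uniform states of different rank, the selling direction via $V \to \Omega$ with $\Omega \in \mathcal{C}$, and the buying direction by preparing a single state $\nu \in V$ from a low-rank (in the worst case, pure) currency state, using $\{\nu\} \subseteq V$. Your explicit remark that the element-wise action poses no problem because the currency element is a singleton makes precise a point the paper leaves implicit, but the underlying argument is the same.
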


\begin{proof}
Under unital maps, the pre-order on states is given by majorization \cite{Birkhoff1946, Hardy1952, Mendl2008, Gour2013, Weilenmann2015}. 
We need to check the following things:
\begin{enumerate}
    \item Order: the set $ \mathcal{C} $ of the currency is ordered (up to equivalence) by $\to$. 
    
    This holds because the states in $C$ are ordered by the rank (there is a clear order given by majorization coming from unital maps on $C$, see e.g.~\cite{Weilenmann2015}).
    
    \item Universality: for all target specifications $V$ (in this case all specifications on the whole $d$-dimensional system), there exists an element of the currency $ C \in \mathcal{C} $ such that
    $ C \to V $
    and an element of the currency 
    $ C' \in \mathcal{C} $ such that
    $ V \to C'$. 
    
    To see why the first statement holds, note that in order to prepare $V$, it is enough to prepare any one state $\rho\in V$, such as the state with the smallest maximum eigenvalue. 
    As majorization tells us (see e.g. Ref.~\cite{Weilenmann2015}), this will always be possible with unital maps from a currency state with small enough rank (in the worst case, from the pure state). 
    
    To show the second statement, note that $\Omega\in\mathcal C$, and we can always achieve $ V \to \Omega$.
\end{enumerate}
\end{proof}

In terms of this currency, we can now determine the cost of a specification as follows.

\begin{restatable}[Cost for unital maps]{proposition}{propCostUnitalMaps}
\label{prop:cost_majorization}
The cost of a specification $V$ under unital maps in terms of the currency $\mathcal{C}$ as above is given by
$$ \Cost{V} 
= \inf_{\rho \in V} \log m(\rho), $$
where 
$$ m(\rho) = 
    \frac{d}{\lfloor \lambda_\text{max}^{-1} (\rho) \rfloor} = \frac{d}{\lfloor 2^{H_\text{min}(\rho)}\rfloor}$$
with $\lambda_\text{max}(\rho) $ the largest eigenvalue of $\rho$ and $\lfloor \cdot \rfloor$ denoting the nearest integer (from below) to the enclosed expression.
\end{restatable}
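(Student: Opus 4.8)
The plan is to reduce the whole statement to the majorization pre-order and then to a single scalar comparison involving the largest eigenvalue. First I would recall that, on the level of states, $\rho \to \sigma$ under unital maps holds if and only if $\rho$ majorizes $\sigma$, exactly as used in Proposition~\ref{prop:currency_unital}. Since each currency element $C^k = \{\Pi^{k}/k\}$ is a singleton and transformations act element-wise, the relation $C^k \to \{\rho\}$ is equivalent to $\Pi^{k}/k$ majorizing $\rho$.

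The key lemma I would establish is that the uniform state $\Pi^{k}/k$ majorizes $\rho$ if and only if $\lambda_{\max}(\rho) \leq 1/k$. The ``only if'' direction is immediate from the first majorization inequality, $1/k = \lambda_{\max}(\Pi^{k}/k) \geq \lambda_{\max}(\rho)$. For the ``if'' direction I would check the remaining partial-sum inequalities: for $m \leq k$ the ordered partial sums of $\Pi^{k}/k$ equal $m/k \geq m\,\lambda_{\max}(\rho) \geq \sum_{i=1}^m \lambda_i^{\downarrow}(\rho)$, and for $m > k$ the left-hand side is already $1$. Thus the single constraint on the top eigenvalue is both necessary and sufficient for majorization by a flat state.

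Next I would lift this to specifications. Because the cost only requires \emph{preparing} some element of $V$ (that is, $f(C^k)\subseteq V$ with $C^k$ a singleton just means $f(\Pi^{k}/k) \in V$), I obtain that $C^k \to V$ if and only if there exists $\rho \in V$ with $\lambda_{\max}(\rho)\leq 1/k$, equivalently $k \leq \sup_{\rho\in V}\lfloor \lambda_{\max}^{-1}(\rho)\rfloor$, where the floor enters because $k$ ranges over integers. Since $\val(C^k)=\log d - \log k$ is strictly decreasing in $k$, the infimum in $\Cost{V}$ is attained at the largest admissible $k$, namely $k^{\ast} := \sup_{\rho\in V}\lfloor\lambda_{\max}^{-1}(\rho)\rfloor$; this supremum is genuinely attained because the floor takes only the finitely many integer values $1,\dots,d$, and $C^1$ (a pure state) prepares any single state, so the infimum is finite. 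The free element $\Omega$ does not alter this: $\Omega\to V$ forces the maximally mixed state $\id/d$ into $V$ (any unital map fixes $\id/d$), whence $k^{\ast}=d$ and the formula already returns $0=\val(\Omega)$. Substituting then gives $\Cost{V} = \log d - \log k^{\ast} = \inf_{\rho\in V}(\log d - \log\lfloor\lambda_{\max}^{-1}(\rho)\rfloor) = \inf_{\rho\in V}\log m(\rho)$, and the identity $\lambda_{\max}^{-1}(\rho) = 2^{\hmin(\rho)}$ yields the stated second form.

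The main obstacle I anticipate is the majorization lemma of the second paragraph, specifically making the ``if'' direction fully rigorous across all partial sums rather than only the top eigenvalue, together with the bookkeeping of the integer floor and the verification that the supremum over $V$ is actually attained, so that the infimum defining the cost is achieved rather than merely approached.
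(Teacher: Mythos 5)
Your proposal is correct and takes essentially the same route as the paper's own proof: reduce $C^k \to V$ to the existence of a single state transformation $\Pi^k/k \to \rho$ for some $\rho \in V$ (using that $C^k$ is a singleton and maps act element-wise), invoke the majorization criterion, which for the flat input collapses to the single condition $\lambda_{\max}(\rho) \leq 1/k$, and then optimize over integer $k$, giving $\Cost{V} = \inf_{\rho\in V}\bigl(\log d - \log\lfloor \lambda_{\max}^{-1}(\rho)\rfloor\bigr)$. Your added details --- the explicit two-direction proof of the flat-state majorization lemma, the attainment of the supremum via integrality, and the check that including $\Omega$ in the currency does not change the formula --- are all sound and merely make explicit what the paper leaves implicit (it asserts the flat-state condition and cites the majorization literature).
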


\begin{proof}
This proof consists of two parts: first, we reduce the problem of finding the cost of a specification $V$ to an optimization over simple state transformations. Then we apply known results on majorization and follow the results in Ref.~\cite{Weilenmann2015} to solve it.

Since unital maps take quantum states to quantum states,
$$ \left\{ \frac{\Pi^{k}}{k} \right\} 
\to V $$
is true if and only if there is a state 
$\rho\in V$ such that
$\left\{ \frac{\Pi^{k}}{k} \right\} 
\to \{\rho\}$.
Then we can employ the majorization condition, which states that under unital maps
$$ \rho \to \sigma \iff 
\forall j \leq d, \ 
\sum_{i=1}^j \lambda_\rho^i \geq \sum_{i=1}^j \lambda_\sigma^i, $$
where $\lambda_\rho^i$ denotes the $i$-th eigenvalue of $\rho$ (where we number the eigenvalues in decreasing order). Since the currency state
$ \frac{\Pi^k}{k} $ is uniform, we can implement 
$$ \frac{\Pi^k}{k} \to \rho $$
with unital maps if and only if
$$ \frac{1}{k} \geq \lambda_\text{max} (\rho). $$
Hence we find that indeed
\begin{align*}
     \Cost{V} &= \inf_{\rho\in V,k\leq d}
    \left(\log d - \log k: \left\{ \frac{\Pi^k}{k} \right\} 
    \to \{\rho \} \right) \\
    &= \inf_{\rho\in V,k\leq d}
    \left(\log d - \log k: \frac{1}{k} \geq \lambda_\text{max}(\rho) \right) \\
    &= \inf_{\rho\in V}
    \left(\log d - \log \lfloor \lambda_\text{max}^{-1} (\rho) \rfloor \right) .
\end{align*} 
\end{proof}

Similarly, we find the following result for the yield of a specification.

\begin{restatable}[Yield for unital maps]{proposition}{propYieldUnitalMaps}
\label{prop:yield_majorization}
The yield of a specification $V$ under unital maps in terms of the currency $\mathcal{C}$ as above satisfies
\begin{align*} 
\Yield{V} 
&=
\log d - \max_{\rho \in V^\P}  H_0 (\rho),
\end{align*}
where $V^\P$ denotes the convex hull of $V$~\cite{DelRio2015}, and $H_0(\rho) = \log \rank (\rho)$.
\end{restatable}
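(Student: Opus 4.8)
The plan is to mirror the structure of the cost proof (Proposition~\ref{prop:cost_majorization}), reducing the yield to an optimization over state transformations and then invoking the majorization characterization of unital maps. The key new ingredient, however, is handling the subtlety that a single unital map must succeed for \emph{every} state in the specification $V$ simultaneously, which is what forces the convex hull $V^\P$ and the order-zero \renyi entropy $H_0$ to appear (rather than $\hmin$ as in the cost).

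First I would rewrite the yield from Definition~\ref{def:cost_yield} as
\begin{align*}
\Yield{V} = \sup_{k\leq d}\left(\log d - \log k : V \to C^k\right),
\end{align*}
recalling $\val(C^k)=\log d - \log k$ and that larger $k$ (more mixed currency) is less valuable. The condition $V\to C^k$ means there is a single unital map $\E\in\cT$ with $\E(\{\nu\})=\{C^k\}$ for \emph{all} $\nu\in V$; since transformations act element-wise, this requires one fixed $\E$ sending every $\nu\in V$ to the uniform state $\Pi^k/k$. The core step is then to characterize for which $k$ such an $\E$ exists. Because unital maps are convex-linear on their action and majorization is the governing pre-order, a single map sends all of $V$ to $\Pi^k/k$ if and only if it sends every convex mixture in $V^\P$ to a state majorized by $\Pi^k/k$; equivalently, the map must handle the ``hardest'' (most mixed, highest-rank) element of $V^\P$. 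The uniform state $\Pi^k/k$ is majorized by $\sigma$ precisely when $\rank(\sigma)\leq k$, so the binding constraint is $k \geq \max_{\rho\in V^\P}\rank(\rho)$, i.e. $\log k \geq \max_{\rho\in V^\P} H_0(\rho)$.

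Putting this together, the optimal (smallest admissible, hence most valuable) currency target has $\log k = \max_{\rho\in V^\P} H_0(\rho)$ (or rather its integer ceiling, which I would note is exact here since ranks are integers and $k$ ranges over integers), giving
\begin{align*}
\Yield{V} = \log d - \max_{\rho\in V^\P} H_0(\rho),
\end{align*}
as claimed. I expect the main obstacle to be the careful justification of the reduction to the convex hull: one must argue rigorously that a single unital map exists achieving $\nu\mapsto \Pi^k/k$ for all $\nu\in V$ \emph{if and only if} $k$ dominates the maximal rank over $V^\P$. The ``only if'' direction is the delicate one --- I would establish it by noting that if $\E$ maps every $\nu\in V$ to $\Pi^k/k$, then by linearity it maps every convex combination to $\Pi^k/k$ as well, so every state in $V^\P$ must majorize $\Pi^k/k$, forcing the rank bound; the single-qubit example $V=\{\pure 0,\pure 1\}$ in the main text, where $V^\P$ contains the maximally mixed state and the yield collapses to $0$, is exactly the sanity check that this argument must reproduce.
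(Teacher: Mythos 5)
Your upper-bound direction is sound and coincides with the paper's: by element-wise action and linearity, a single unital map $\E$ with $\E(\nu)=\Pi^k/k$ for all $\nu\in V$ also sends every convex combination $\rho\in V^\P$ to exactly $\Pi^k/k$, so every $\rho\in V^\P$ must majorize $\Pi^k/k$, forcing $\rank(\rho)\leq k$ and hence $\Yield{V}\leq \log d - \max_{\rho\in V^\P}H_0(\rho)$. The genuine gap is in the converse (achievability) direction. You assert that $k\geq \max_{\rho\in V^\P}\rank(\rho)$ is also \emph{sufficient} for the existence of a single unital map taking all of $V$ to $\Pi^k/k$ (``the map must handle the hardest element of $V^\P$''), but you never prove it, and it is not automatic: a unital map that handles the highest-rank state of $V^\P$ need not do anything useful to the other elements of $V$. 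Your own sanity check makes this vivid: for $V=\{\pure0,\pure1\}$ the ``hardest'' element of $V^\P$ is $\id_2/2$, and the identity map sends it to $\Pi^2/2$ while failing on both elements of $V$. Note also that the backward implication of your ``if and only if'', as literally stated, is false: a map sending every element of $V^\P$ to \emph{some} state majorized by $\Pi^k/k$ need not attain the currency state exactly, which is what $V\to C^k$ demands.

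What is missing is precisely where the paper's proof does its real work. First, one must identify $\max_{\rho\in V^\P}\rank(\rho)$ with the dimension $d_{\text{eff}}$ of the smallest subspace supporting \emph{all} elements of $V$ simultaneously; the paper proves this with a convex-geometry argument about faces of the positive-semidefinite cone (a shortcut: if some $\nu\in V$ had support outside that of a rank-maximizing $\rho^*\in V^\P$, then $\tfrac12\rho^*+\tfrac12\nu\in V^\P$ would have strictly larger rank, a contradiction). Second, one must exhibit a \emph{single} unital map that works for all of $V$ at once; the paper writes it down explicitly,
$$\E(\rho) = \frac{\tr(\rho\,\Pi^{d_{\text{eff}}})}{d_{\text{eff}}}\,\Pi^{d_{\text{eff}}} + (\id_d - \Pi^{d_{\text{eff}}})\,\rho\,(\id_d - \Pi^{d_{\text{eff}}}),$$
which is unital and maps every state supported inside $\Pi^{d_{\text{eff}}}$ to $\Pi^{d_{\text{eff}}}/d_{\text{eff}}$, yielding $\Yield{V}\geq \log d - \log d_{\text{eff}}$. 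Without these two steps (the common-support identification and the explicit map), your proposal establishes only the inequality $\Yield{V}\leq \log d - \max_{\rho\in V^\P}H_0(\rho)$, not the claimed equality.
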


\begin{proof}
First we show that 
$
\max_{\rho \in V^\P}  H_0 (\rho) = 
\log d_{\text{eff}} ,
$
where $d_{\text{eff}} \leq d$ is the smallest number such that there exists a basis $\{\ket i\}_{i=1}^d$ such that 
$\tr(\rho\, \Pi^{d_{\text{eff}}}) =1$ for all $\rho \in V$, with 
$$ \Pi^{d_{\text{eff}}} = \sum_{i=1}^{d_{\text{eff}}} \pure{i}.$$
In other words, $d_{\text{eff}}$ is defined such that each element $\rho \in V$ can be expressed in the basis  $\{\ket i\}_{i=1}^d$  as 
\begin{align*}
    \rho= \begin{pmatrix}
    \begin{array}{c c c|}
         & & \\
         & \rho_{\text{eff}}& \\
         & &\\
         \hline
    \end{array}
    & 0 &\cdots &0\\
    0  & 0 \\
    \vdots & &\ddots \\
      0  && &0 
    \end{pmatrix},
\end{align*}
where $\rho_{\text{eff}}$ acts on a Hilbert space of dimension $d_{\text{eff}}$.
In particular, this means that we can express all elements of $V$ in the $d_{\text{eff}}-$dimensional basis $\{\ket i\}_{i=1}^{d_{\text{eff}}}$. 
This implies that $V$ itself lives in a vector space $\Gamma$ of dimension  $d_{\text{eff}}^2-1$, because it is composed of density matrices that can be expressed using only $d_{\text{eff}}$ basis elements.

Now we can look at the cone of positive semi-definite matrices on Hilbert spaces of dimension $d_{\text{eff}}$ (of which normalized density matrices form an affine slice). The extremal faces of this cone are formed by  
low-rank matrices, and the interior of the cone is made of all full-rank matrices,  \cite[Chapter II, Proposition 12.3]{Barvinok2002}.  
The density matrices in the set $V$ cannot all lie on the same face of the cone, as this would contradict the definition of $ d_{\text{eff}}$ (because then we would be able to project all points of  $V$ onto the same lower-dimensional space). 
This implies that if we take the convex hull of $V$, some points will lie in the interior of the cone and therefore have full rank, which implies 
$$\exists \, \rho \in V^\P: \quad H_0 (\rho) = \log d_{\text{eff}}.$$ 
This is also the maximal entropy  for $d_{\text{eff}}-$dimensional density matrices, and so we have shown that  
$\max_{\rho \in V^\P} H_0 (\rho) = \log d_{\text{eff}}$.

To show that $\Yield{V} \geq  \log d - 
\log d_{\text{eff}}$, we apply the unital map 
$$\E(\rho) = \frac{\tr(\rho \, \Pi^{d_{\text{eff}}})}{ d_{\text{eff}}} \  \Pi^{d_{\text{eff}}} + (\id_d -  \Pi^{d_{\text{eff}}} ) \, \rho \, (\id_d - \Pi^{d_{\text{eff}}} )$$
to each element $\rho \in V$. 
Because of the way we defined $\Pi^{d_{\text{eff}}} $, we obtain 
$$ \E (V) = \frac{\Pi^{d_{\text{eff}}} }{d_{\text{eff}}} \in C^{d_{\text{eff}}} ,$$
which is a currency element, and therefore 
$$ \Yield{V} \geq \val(C^{d_{\text{eff}}}) = \log d - \log d_{\text{eff}}.$$

To show that 
$\Yield{V} \leq \log d - \log d_{\text{eff}}$,
suppose that there exists a number $k < d_{\text{eff}}$ such that $V\to C^k$. That is, there is some unital map $\E'$ that maps every element in $V$ to the state $\Pi^k/k$. 
In particular, $\E'$ maps all elements of $V$ to the same face of the cone. 
But then, by linearity $\E'$ also maps convex combinations of states in $V$, which as we showed before lie in the interior of the cone (unless $V$ was already on a face, which would contradict the definition of $d_\text{eff}$), to $\Pi^k/k$. This is however not possible, because the rank of a state cannot decrease under unital maps (due to the majorization condition), and states in the interior have full rank  $d_\text{eff}$ while $\text{rank}(\Pi^k/k)=k<d_\text{eff}$ by assumption.  
Hence we have reached a contradiction.

\end{proof}

\subsubsection{Stage II currency}

Assuming a bipartite system $W\otimes S$, we can construct a Stage II currency for resources on system $S$ which consists of local resources on $W$.

\begin{restatable}[Stage II currency for unital maps]{proposition}{propCurrencyUnitalMapsII}
\label{prop:currency_unitalII}
Consider the resource theory of unital maps in a global Hilbert space $W \otimes S$, with $d_W := \dim W \geq \dim S =: d_S$.
The set
$ \mathcal{C} 
= \{ C^k \}_{k \in \{1,\dots d_W\}} \cup \{ \Omega \} $,
with
$$ C^k = \left\{ \sigma_{WS}: \tr_S(\sigma_{WS}) = \frac{\Pi^k_W}{k} \right \} ,$$
where $\Pi^k_W$ denotes the projector onto $k$ states in a given basis on system $W$,  
forms a currency for the target defined through
$$ V \in \mathcal S \iff V = \{ \sigma_{WS}: \tr_W(\sigma_{WS}) \in V_S \} $$
for some set
$ V_S $ of  reduced density matrices on system $S$. This currency furthermore satisfies \emph{independence} according to Definition~\ref{def:independence}.
\end{restatable}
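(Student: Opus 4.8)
The plan is to verify, in turn, the two conditions of Definition~\ref{def:currency} (order and universality) and the two conditions of Definition~\ref{def:independence} (compatibility and non-disturbance). Throughout I would lean on two facts: that under unital maps the pre-order on states is majorization (as used in Proposition~\ref{prop:currency_unital}), and that a channel of the form $\E_W\otimes\id_S$ acting only on the wallet leaves the target marginal untouched, since $\tr_W\circ(\E_W\otimes\id_S)=\tr_W$ whenever $\E_W$ is trace preserving.

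For \emph{order}, I would exhibit, for each pair $k\leq l$, a wallet-local unital map realising $C^k\to C^l$. Since $k\leq l$ gives $\Pi^k_W/k\succ\Pi^l_W/l$, majorization provides a unital $\E_W$ on $W$ with $\E_W(\Pi^k_W/k)=\Pi^l_W/l$; then $\E_W\otimes\id_S$ is unital and sends every $\sigma\in C^k$ (characterised by $\tr_S\sigma=\Pi^k_W/k$) to a state with $\tr_S=\Pi^l_W/l$, i.e.\ into $C^l$. Together with $C^k\to\Omega$ (mere forgetting) and $\Omega\in\mathcal C$, this makes $\mathcal C$ totally ordered by the rank $k$, smaller $k$ being more valuable. \emph{Compatibility} is then immediate: any pair of marginals on the two subsystems is jointly realised by a product, so for $C^k$ and $V$ with associated $V_S$ I pick any $\rho_S\in V_S$ and take $\sigma=(\Pi^k_W/k)\otimes\rho_S\in C^k\cap V$. \emph{Non-disturbance} comes essentially for free from the order step: the implementing maps are all of the form $\E_W\otimes\id_S$, which preserve $\tr_W$ and hence membership in $V$, so they send $C^k\cap V$ into $C^l\cap V$; the $\Omega$ cases are covered by the unital ``replace $W$ by maximally mixed'' channel $\sigma\mapsto(\id_W/d_W)\otimes\tr_W\sigma$, which also preserves $\tr_W$.

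The delicate point, and the step I expect to be the main obstacle, is \emph{universality}, specifically the buying direction $C\to V$ (selling is trivial via $V\to\Omega$, and $\Omega\in\mathcal S$ is recovered by taking $V_S$ to be all states). The difficulty is that a single unital map must send the \emph{whole} specification $C^1$ into $V$, yet $C^1$ contains states with arbitrary — even maximally mixed — $S$-marginals, so the naive discard-and-prepare channel (which is unital only when it prepares the maximally mixed state) cannot reach a peaked target. My plan to overcome this exploits $d_W\geq d_S$. Every element of $C^1$ is a product $\pure{1}_W\otimes\rho_S$, because a pure $W$-marginal forces a product state. Fixing any $\tau_S\in V_S$, I embed it in a $d_S$-dimensional subspace $W_0\subseteq W$, choose a unital $\E_W$ with $\E_W(\pure{1}_W)$ equal to that embedded $\tau_S$ (possible since a pure state majorizes everything), and finally apply $\operatorname{SWAP}_{W_0 S}$ between $W_0$ and $S$ (a unitary, hence unital, extended by the identity off $W_0\otimes S$).

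The composite $\operatorname{SWAP}_{W_0 S}\circ(\E_W\otimes\id_S)$ is unital, and on $\pure{1}_W\otimes\rho_S$ it first produces $\tau_S$ (in $W_0$) together with $\rho_S$ (in $S$), then swaps them, so the output has $S$-marginal exactly $\tau_S\in V_S$ for \emph{every} $\rho_S$. Hence $C^1\to V$, which establishes universality and, with the earlier steps, the whole proposition. I would remark that the same construction shows more generally that any $C^k$ with $1/k\geq\lambda_{\max}(\tau_S)$ can buy $V$, but $C^1$ already suffices for the existence claim in the definition.
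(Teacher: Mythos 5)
Your proposal is correct and follows essentially the same route as the paper's proof: compatibility via tensor products, order and non-disturbance via wallet-local unital maps (justified by majorization), and universality by combining the fact that a pure state unitally majorizes every state with a swap unitary between $S$ and a $d_S$-dimensional subspace of $W$. The only difference is cosmetic | you prepare the target state inside the wallet and then swap, whereas the paper swaps the pure state into $S$ first and then prepares there; and your explicit wallet-local construction of the order transformations gives non-disturbance for free, a point the paper asserts rather than re-derives.
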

\begin{proof}
We need to check the following things:
\begin{enumerate}
    \item Order: the set $ \mathcal{C} $ of the currency is ordered (up to equivalence) by $\to$.
    
    For the specification $C^k$, it holds that
    $$ C^k \leftrightharpoons  \{\tilde \rho_{SW}\}=\left\{\frac{\Pi^k_W}{k} \otimes \frac{\id_S}{d_S} \right\} $$
    since $\tilde \rho_{SW} \in C^k$ and we can always use unital operations to replace the state on $S$ by the fully mixed state.
    But then, the majorization condition for  unital maps on system $W$ tells us that 
    $$ k'\geq k \quad \iff \quad  \frac{\Pi^k_W}{k} \otimes \frac{\id_S}{d_S} \to \frac{\Pi^{k'}_W}{k'} \otimes \frac{\id_S}{d_S}. $$
    Since these states are interchangeable with the respective currencies, we have that 
    $ C^k \to C^{k'} $
    if and only if $k'\geq k$. 
    Finally, 
    $ C^k \to \Omega $
    for any $k$, and also 
    $ \Omega \to C^{d_W} $, 
    since we can always replace the state in $W$ by a fully mixed state by means of a unital operation. 
    
    Hence the currency resources are ordered (up to the equivalence of $C^{d_W}$ to $\Omega$). 
    
    \item Universality: for all target specifications $V \in \mathcal S$, there exists an element of the currency $ C \in \mathcal{C} $ such that
    $ C \to V $,
    and an element $C'\in\mathcal C$ such that
    $ V \to C'.$
    
    To see why the former holds, note that given $ d_W = \text{dim} (W) \geq \text{dim} (S) = d_S $, 
    we can always start from 
    the currency resource 
    $C^1 = \{ \sigma_{SW}: \tr_S(\sigma_{SW}) = \ket{1}\bra{1}_W \} $ and 
    apply the unitary
    $$ U_{SW} = \sum_{i\neq j;i,j=1}^{d_S} \ket{ij}\bra{ji}_{SW} + \sum_{i=d_S+1}^{d_W} \ket{ii}\bra{ii}_{SW}, $$
    where the second sum vanishes if $d_S=d_W$. This unitary essentially achieves a 
    swap of the state on system $S$ with part of $W$ of the same size. In particular, the reduced state on $S$ is now $\ket{1}\bra{1}_S$, and so one can reach the specification
    $$ Y= \{ \sigma_{SW}: \tr_W(\sigma_{SW}) = \ket{1}\bra{1} \} .$$
    We now use the idea from the proof of order, namely that $Y$ is interconvertible with the following state,
    $$ Y \leftrightharpoons \left\{ \tau_{SW}\right\}=\left\{\ket{1}\bra{1}_S \otimes \frac{\id_W}{d_W} \right\}. $$
    Then we can prepare any state in $V_S$ in system $S$ from the pure state in $S$ by means of a unital map on $S$ (since the pure state majorizes every other state, this is possible), that is, for any $\rho_S\in V_S$,
    $$ \left\{ \tau_{SW} \right\} \to \left\{\rho_S \otimes \frac{\id_W}{d_W} \right\}. $$
    As a result, one can reach a specification
    $$ X = \{ \sigma_{SW}: \tr_W(\sigma_{SW}) = \rho_S \} $$
    for any particular $ \rho_S \in V_S$. 
    But then clearly
    $ C_1 \to V. $
    
    To see why the second statement holds, note that $\Omega\in \mathcal C$ and clearly $V\to\Omega$ for any $V\in\mathcal S$.

    \item Compatibility: all specifications $ C^k \in \mathcal{C} $ (including $\Omega$) can be composed with any specification $V$ in the target $ \mathcal S $.
    
    This holds since we can trivially compose any state $\rho_S\in V_S$ in $S$ with any state $\frac{
    \Pi^k_W}{k} $ in $W$ via the tensor product. 
    Since both $V$ and $C^k$ contain only local information about the systems $S$ and $W$ respectively, at the very least  $\frac{
    \Pi^k_W}{k}  \otimes \rho_S \in V\cap C^k $.
    
    \item Non-disturbance: for two currency resources $C^k,C^l\in\mathcal C$, if we can transform $C^k \to C^l$, then we can do this without disturbing the target, i.e.\ 
    $$ C^k \to C^l \implies V \cap C^k \to V \cap C^l .$$
    
    This holds because a unital map on $W$ used in order to convert between the currency states only involves system $W$ and leaves the target space unchanged.
\end{enumerate}

\end{proof}

We may now compute the cost and yield of specifications.

\begin{theorem}[Cost and yield for unital maps with Stage II currency]
\label{thm:cost_yield_unitalII}
Let $S\otimes W$ be a bipartite system with
$d_W \geq d_S$. 
Then, in terms of the currency
$\mathcal C$ of Proposition~\ref{prop:currency_unitalII},
the cost and yield of  resources $V\in\mathcal S$
satisfy 
$$ \inf_{\rho_S \in V_S} \log d_S -  H_\text{min}(\rho_S) \leq \Cost{V} 
\leq \inf_{\rho_S \in V_S} \log \left(\frac{d_S}{\lfloor 2^{H_\text{min}(\rho_S)}\rfloor}\right) $$
and 
\begin{align*} 
\Yield{V}
&= \log d_S - \max_{\rho_S \in V_S^\P}  H_0 (\rho_S).
\end{align*}
Furthermore, in the limit of large currency dimension $d_W$, the cost of a resource $V$ 
can approximate the value  
$ \inf_{\rho_S\in V_S} \log d_S - \hmin(\rho_S) $
arbitrarily closely.  
That is, for any $\eps > 0 $ and any target dimension $d_S$, we can find a currency dimension $d_W\geq d_S$ such that
$$ \inf_{\rho_S\in V_S} \log d_S - \hmin(\rho_S) \geq \Cost{V} \geq \inf_{\rho_S\in V_S} \log d_S - \hmin(\rho_S) - \eps .$$
\end{theorem}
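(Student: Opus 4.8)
The plan is to reduce every claim to the majorisation characterisation of unital maps already used in Propositions~\ref{prop:cost_majorization} and~\ref{prop:yield_majorization}, and to bridge the two subsystems by passing between a currency element $C^k$ and its uniform global representative $\{\Pi^k_W/k\otimes\id_S/d_S\}$, to which $C^k$ is interconvertible (as established inside Proposition~\ref{prop:currency_unitalII}). Throughout I take $\val(C^k)=\log d_W-\log k$ and write $\lambda_{\max}(\rho_S)=2^{-\hmin(\rho_S)}$. The single repeatedly-used facts are that a unital map can only move a state down in majorisation, and the elementary partial-trace estimate $\lambda_{\max}(\tr_W\omega)\le d_W\,\lambda_{\max}(\omega)$.

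For the cost, the lower bound comes from testing $C^k\to V$ on the representative: since $\Pi^k_W/k\otimes\id_S/d_S\in C^k$, a witnessing unital map $f$ sends this uniform state of rank $kd_S$ to some $\omega_{SW}\in V$, whose $S$-marginal is a $\rho_S\in V_S$. Majorisation gives $\lambda_{\max}(\omega_{SW})\le 1/(kd_S)$, and the partial-trace bound then gives $\lambda_{\max}(\rho_S)\le d_W/(kd_S)$, i.e. $\log d_W-\log k\ge\log d_S-\hmin(\rho_S)$; infimising over admissible $k$ and over $\rho_S$ yields $\Cost{V}\ge\inf_{\rho_S\in V_S}(\log d_S-\hmin(\rho_S))$. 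For the upper bound I would exhibit a protocol: fix $\rho_S\in V_S$ and the largest integer $k\le (d_W/d_S)\,2^{\hmin(\rho_S)}$; since $\rho_S\otimes\id_W/d_W$ has largest eigenvalue $\lambda_{\max}(\rho_S)/d_W\le 1/(kd_S)$, majorisation gives $C^k\rightleftharpoons\{\Pi^k_W/k\otimes\id_S/d_S\}\to\{\rho_S\otimes\id_W/d_W\}\subseteq V$, so $\Cost{V}\le\log d_W-\log k$, which (exactly when $d_S$ divides the relevant product, and otherwise up to integer rounding) rewrites as the stated $\inf_{\rho_S\in V_S}\log(d_S/\lfloor 2^{\hmin(\rho_S)}\rfloor)$.

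The yield I would obtain by reducing to Proposition~\ref{prop:yield_majorization} on system $S$, writing $d_{\mathrm{eff}}:=2^{\max_{\rho_S\in V_S^\P}H_0(\rho_S)}$. For achievability I reset the wallet to $\id_W/d_W$ (a unital channel) and apply on $S$ the extraction map of Proposition~\ref{prop:yield_majorization}, reaching the uniform state $\Pi^{d_{\mathrm{eff}}}_S/d_{\mathrm{eff}}\otimes\id_W/d_W$ independently of the input; this majorises $\id_S/d_S\otimes\Pi^k_W/k\in C^k$ whenever $k\ge d_{\mathrm{eff}}d_W/d_S$, so the smallest such $k$ gives $\Yield{V}\ge\log d_S-\log d_{\mathrm{eff}}$. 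For the converse I use that some $\rho_S^\ast\in V_S^\P$ has full effective rank $d_{\mathrm{eff}}$: writing it as a convex mixture of elements of $V_S$, lifting each to $\rho_S^{(i)}\otimes\id_W/d_W\in V$, and invoking linearity of any unital $f$ with $f(V)\subseteq C^k$, the mixture maps to a state with $W$-marginal $\Pi^k_W/k$ and global rank at least $\mathrm{rank}(\rho_S^\ast\otimes\id_W/d_W)=d_{\mathrm{eff}}d_W$ (rank is non-decreasing under unital maps, since $\rho\succ\sigma\Rightarrow\mathrm{rank}\,\rho\le\mathrm{rank}\,\sigma$); as a bipartite rank is at most $d_S\,k$, this forces $k\ge d_{\mathrm{eff}}d_W/d_S$ and hence $\Yield{V}\le\log d_S-\log d_{\mathrm{eff}}$.

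Finally, for the large-$d_W$ statement I reuse the cost upper-bound construction with $k=\lfloor (d_W/d_S)2^{\hmin(\rho_S)}\rfloor$, whose value $\log d_W-\log\lfloor (d_W/d_S)2^{\hmin(\rho_S)}\rfloor$ tends to $\log d_S-\hmin(\rho_S)$ from above as $d_W\to\infty$ because $\log\lfloor y\rfloor\to\log y$; choosing $d_W$ so that $(d_W/d_S)2^{\hmin(\rho_S)}$ exceeds a threshold depending only on $\eps$ forces the gap below $\eps$ uniformly, and the matching lower bound already proven closes the sandwich. The hard part will be exactly this finite-size bookkeeping: the floors and ceilings that separate $\log(d_S/\lfloor 2^{\hmin}\rfloor)$ and $\log d_S-\hmin$, and that make the currency unfair in the first place, must be shown to be controlled by the integer $k$ one is free to choose against the wallet dimension, so that the discretisation gap vanishes in the limit.
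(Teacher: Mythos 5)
Your proposal is correct at the same level of rigor as the paper's own proof, and it follows essentially the same route: pass to the flat representative $C^k \rightleftharpoons \{\Pi^k_W/k \otimes \id_S/d_S\}$, characterize admissible $k$ by flat-state majorization (taking the largest admissible $k$ for the cost upper bound), run the effective-dimension/rank argument of Proposition~\ref{prop:yield_majorization} for the yield, and prove the limit statement by approximating $d_S\,\lambda_{\max}(\rho_S)$ by rationals $d_W/k$. The one genuinely different ingredient is your cost lower bound: the paper first establishes $V \rightleftharpoons \tilde V = \{\rho_S \otimes \id_W/d_W : \rho_S \in V_S\}$ and then applies the exact majorization criterion to product states, whereas you apply the witnessing map to the flat representative of $C^k$ and control the $S$-marginal of an arbitrary output $\omega \in V$ via $\lambda_{\max}(\tr_W\omega) \le d_W\,\lambda_{\max}(\omega)$. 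This is valid, dispenses with the interconvertibility step for $V$, and is arguably more robust. Likewise, your explicit yield converse (linearity of the witnessing map, convexity of $C^k$, rank monotonicity under unital maps, and $\rank(\sigma)\le d_S k$ for $\sigma \in C^k$) is precisely what the paper compresses into ``we can employ the same technique as in Proposition~\ref{prop:yield_majorization}''.

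One caution, which applies equally to the paper: the integer-rounding caveat you flag on the cost upper bound is real, and the paper does not resolve it either. The inequality the paper uses, $\lfloor d_W/(d_S \lambda_{\max})\rfloor \ge (d_W/d_S)\lfloor \lambda_{\max}^{-1}\rfloor$, can fail when $d_S \nmid d_W$: for $d_S=2$, $d_W=3$, $\lambda_{\max}(\rho_S)=4/5$, the only admissible currency index is $k=1$, so $\Cost{V}=\log 3$, which exceeds the stated bound $\log\bigl(2/\lfloor 5/4\rfloor\bigr)=1$. The same issue sits silently in your yield achievability: the smallest \emph{integer} $k \ge d_{\text{eff}}\,d_W/d_S$ is $\lceil d_{\text{eff}}\,d_W/d_S\rceil$, so your protocol gives only $\Yield{V} \ge \log d_W - \log\lceil d_{\text{eff}}\,d_W/d_S\rceil$, which matches the claimed $\log d_S - \log d_{\text{eff}}$ exactly when $d_S$ divides $d_{\text{eff}}\,d_W$ (your converse, giving the $\le$ direction, is clean). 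So both your argument and the paper's establish the stated exact formulas only under an implicit divisibility assumption such as $d_S \mid d_W$, and unconditionally only in the large-$d_W$ limit, where your sandwich argument is sound and coincides with the paper's.
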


\begin{proof}
We start from the definitions of cost and yield for this currency, which can be expressed as 
\begin{align*}
  \Cost{V} 
    &= \inf_{C\in\mathcal C} \{\val C: C \to V\} 
    = \log d_W - \sup_{1 \leq k \leq d_W} \{\log k: C^k \to V  \} \\
  \Yield{V} 
    &= \sup_{C \in\mathcal C} \{\val C: V \to C\} 
    = \log d_W - \inf_{1 \leq k \leq d_W} \{\log k: V \to C^k  \},
\end{align*}
for integer values of $k$.
We note that our  currency elements are inter-convertible with simpler ones, 
$$C^k = \left\{ \sigma_{WS}: \tr_S(\sigma_{WS}) = \frac{\Pi^{k}_W}{k} \right \}  
\rightleftharpoons 
\left\{ \frac{\Pi^{k}_W}{k} \otimes \frac{\id_S}{d_S} \right\}
=:  \tilde C^k,  
$$
because under unital maps we are always allowed to replace the state on $S$ by the maximally mixed state that is uncorrelated with $W$, and in the other direction  $\tilde C^k \subseteq C^k$.
Similarly, $V$ is inter-convertible with
$$ \tilde V = \left\{ \rho_S \otimes \frac{\id_W}{d_W} : \rho_S \in V_S \right\} $$
as we can always replace the state in $W$ by the maximally mixed state and in the other direction again $\tilde V \subseteq V$.
The definitions of cost and yield become
\begin{align*}
\Cost{V} &= \log d_W -  \sup_{k\leq d_W} \left\{ \log k : 
\left\{ \frac{\id_S}{d_S}\otimes \frac{\Pi^k_W}{k} \right\} 
\to \left\{  \rho_S \otimes \frac{\id_W}{d_W} : \rho_S \in V_S \right\} \right\} \\
\Yield{V} &= \log d_W -  \inf_{k\leq d_W} \left\{ \log k :
\left\{  \rho_S \otimes \frac{\id_W}{d_W} : \rho_S \in V_S \right\}
\to \left\{ \frac{\id_S}{d_S}\otimes \frac{\Pi^k_W}{k} \right\} \right\}.     
\end{align*}

For the cost, we can now argue analogously to Proposition~\ref{prop:cost_majorization}: since unital maps take states to states, the cost above is equivalent to the expression
$$ \Cost{V} = \log d_W -  \sup_{k\leq d_W, \rho_S \in V_S} \left\{\log k : 
\left\{ \frac{\id_S}{d_S}\otimes \frac{\Pi^k_W}{k} \right\} 
\to \left\{ \rho_S \otimes \frac{\id_W}{d_W} \right\} \right\}) .$$
Majorization  gives us a necessary and sufficient condition for this state transformation,
$$ \frac{\id_S}{d_S}\otimes \frac{\Pi^k_W}{k} 
\to \rho_S \otimes \frac{\id_W}{d_W}
\quad \iff \quad 
\frac{1}{k\  d_S} 
\geq  \frac{\lambda_\text{max}(\rho_S)}{d_W} $$
and so for the optimal $k$, 
$$ k = \sup_{\rho_S \in V_S} \left\lfloor \frac{d_W}{d_S \ \lambda_\text{max}(\rho_S)} \right\rfloor \geq \frac{d_W}{d_S} \sup_{\rho_S \in V_S} \left\lfloor \lambda_\text{max}^{-1}(\rho_S) \right\rfloor .$$
Hence 
\begin{align*}
 \Cost{V} 
   &= \log d_W - \log k \ \leq \ \log d_S - \sup_{\rho_S\in V_S} \log \lfloor \lambda_\text{max}^{-1}(\rho) \rfloor \\
   &= \inf_{\rho_S\in V_S} \log d_S - \log \left\lfloor 2^{\hmin(\rho_S)}\right\rfloor, 
\end{align*}
while in the other direction
\begin{align*}
  \Cost{V} 
    &\geq \log d_S + \inf_{\rho_S\in V_S} \log \lambda_\text{max}(\rho) \\
    &=  \log d_S - \sup_{\rho_S\in V_S} \hmin(\rho_S).
\end{align*}
Finally, we note that in the limit of large $d_W$, the real number $ d_S \cdot \lambda_\text{max}(\rho_S) \geq 1 $ can be approximated arbitrarily closely by a rational $\frac{d_W}{k}$ for $d_W,k\in\mathbb N$ with $k\leq d_W$. 
In particular, for any $\eps>0$, we can choose $d_W,k\in\mathbb N$ such that for the optimal $\rho_S$ 
$$ \log d_S - \hmin(\rho_S)\geq \log d_W - \log k \geq \log d_S - \hmin(\rho_S) - \eps .$$
But this implies that indeed
$$ \inf_{\rho_S\in V_S} \log d_S - \hmin(\rho_S) \geq \Cost{V} \geq  \inf_{\rho_S\in V_S} \log d_S - \hmin(\rho_S) - \eps. $$

For the yield, we can employ the same technique as in Proposition~\ref{prop:yield_majorization} because the output of the transformation
$$ \left\{  \rho_S \otimes \frac{\id_W}{d_W} : \rho_S \in V_S \right\}
\to \left\{ \frac{\id_S}{d_S}\otimes \frac{\Pi^k_W}{k} \right\} $$
from the expression for the yield above is a uniform state of rank $d_S \cdot k$ (we could see it as a currency state in $d=d_S\cdot d_W$). 
Hence, with $\tilde V = \left\{  \rho_S \otimes \frac{\id_W}{d_W} : \rho_S \in V_S \right\}$, we find
$$ \Yield{V} = \log d_S + \log d_W - \log d_\text{eff}(\tilde V) .$$
Then we note that 
$ d_\text{eff}(\tilde V) = d_W \cdot d_\text{eff}(V_S) $ and, again from Proposition~\ref{prop:yield_majorization}, 
$\log d_\text{eff}(V_S) = \max_{\rho_S\in V_S} H_0(\rho_S)$, 
and so
$$ \Yield{V} = \log d_S - \max_{\rho_S\in V_S} H_0(\rho_S) $$
as required.

\end{proof}

\subsubsection{Alternative currencies}

We may discuss alternative currencies for the resource theory of unital maps.

\begin{restatable}[Equivalent currency for unital maps]{proposition}{propEquivalentCurrency}
\label{prop:equivalent_currency}
The currency $\mathcal{C}$ of uniform states of different ranks is equivalent to the currency $\mathcal{C}_2  = \{C_2^k\}_k \cup \{\Omega\} $ made out of  the convex hull of differently sized sets of orthogonal pure states,
$$ C_2^k = \left(\{ \ket{i}\bra{i} \}_{1\leq i \leq k} \right)^\P , $$
in the sense that
$$ \forall \, C^k \in \mathcal{C}, C_2^k \in \mathcal{C}_2,  \quad
C^k \to C_2^k \text{ and } C_2^k \to C^k. $$

\end{restatable}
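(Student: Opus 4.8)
The plan is to establish the two conversions $C^k\to C_2^k$ and $C_2^k\to C^k$ separately for each fixed $k$; the remaining element $\Omega$ is shared by both currencies and needs no attention. Throughout I recall that the pre-order $\to$ combines a unital map with set inclusion, so that $V\subseteq W$ already gives $V\to W$ (via the identity map), and that transformations act element-wise, $f(V)=\bigcup_{\nu\in V}f(\{\nu\})$. A useful observation is that $C^k=\{\Pi^k/k\}$ is a single-state specification, so to realise $C_2^k\to C^k$ I will need a \emph{single} unital map sending \emph{every} element of $C_2^k$ to the one state $\Pi^k/k$.

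The direction $C^k\to C_2^k$ is immediate. Since
\[
\frac{\Pi^k}{k}=\frac1k\sum_{i=1}^k\pure{i}
\]
is the uniform convex mixture of the pure states $\pure{1},\dots,\pure{k}$, it lies in their convex hull, i.e.\ $\Pi^k/k\in C_2^k$. Hence $\{\Pi^k/k\}=C^k\subseteq C_2^k$, and forgetting information (taking the identity map) yields $C^k\to C_2^k$.

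For the harder direction $C_2^k\to C^k$ I will exhibit one unital map that collapses the whole set $C_2^k$ onto $\Pi^k/k$. Let $U_\pi$ denote the permutation unitaries that permute the first $k$ basis vectors and fix the remaining $d-k$, and define the twirl
\[
\E(\rho)=\frac{1}{k!}\sum_\pi U_\pi\,\rho\,U_\pi^\dagger .
\]
This map is unital, since $\E(\id)=\frac{1}{k!}\sum_\pi U_\pi\,\id\,U_\pi^\dagger=\id$, and completely positive and trace preserving as an average of unitary conjugations. By linearity it suffices to evaluate $\E$ on the extreme points of $C_2^k$: for each $i\leq k$, averaging $U_\pi\pure{i}U_\pi^\dagger=\pure{\pi(i)}$ over all permutations sends $i$ uniformly across $\{1,\dots,k\}$, so $\E(\pure{i})=\frac1k\sum_{j=1}^k\pure{j}=\Pi^k/k$. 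Since $\Pi^k/k$ is a fixed point of $\E$, linearity extends this to every convex combination, giving $\E(\rho)=\Pi^k/k$ for all $\rho\in C_2^k$. Because $C^k$ is a singleton, the element-wise action then yields $\E(C_2^k)=\{\Pi^k/k\}\subseteq C^k$, i.e.\ $C_2^k\to C^k$.

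The main obstacle is precisely this last step: a pointwise appeal to majorization would only show that each individual $\rho\in C_2^k$ can be taken to $\Pi^k/k$ by some unital map depending on $\rho$, which is insufficient for a specification, where one fixed transformation must handle the entire set at once. The symmetrisation over permutations is what provides such a uniform map, and checking it on the extremal pure states (together with linearity) is the cleanest way to verify it without case analysis.
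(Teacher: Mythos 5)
Your proof is correct, and its skeleton matches the paper's: the direction $C^k\to C_2^k$ is obtained in both cases from the inclusion $C^k\subseteq C_2^k$ (the uniform state is a convex mixture of the $k$ pure states, and inclusion plus the identity map gives the conversion), and the direction $C_2^k\to C^k$ is obtained by exhibiting a \emph{single} unital map that collapses all of $C_2^k$ onto $\Pi^k/k$ --- you correctly identify that this uniformity is the crux, since a map chosen per state would not realise a transformation of the specification. Where you differ is in the choice of that map. The paper reuses the map from the proof of Proposition~\ref{prop:yield_majorization},
$$ \E(\rho)=\frac{\tr(\rho\,\Pi^{k})}{k}\,\Pi^{k}+(\id-\Pi^{k})\,\rho\,(\id-\Pi^{k}), $$
which sends every state supported inside $\Pi^k$ (diagonal or not) to $\Pi^k/k$; you instead use the permutation twirl $\E(\rho)=\frac{1}{k!}\sum_\pi U_\pi\rho\, U_\pi^\dagger$ over the first $k$ basis vectors. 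Your twirl has the advantage that unitality and complete positivity are manifest (it is a mixture of unitary conjugations), and the check on extreme points plus linearity is clean. Its limitation is that it only collapses states diagonal in the fixed basis: a coherent superposition such as $(\ket{1}+\ket{2})/\sqrt{2}$ is a fixed point of the twirl. Your argument therefore leans on the fact that $C_2^k$ consists precisely of convex mixtures of the basis states $\pure{i}$ --- which is true here, so the proof goes through --- whereas the paper's projector-based map is the more robust choice (it would also handle specifications containing coherences inside the support) and comes for free from the yield computation. Both routes are valid.
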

\begin{proof}
For the first statement, 
we define 
$$ \val(C_2) := \min_k \left(\log d - \log k: \frac{\Pi^k}{k}\in C_2\right) $$
and so it is easy to see that for 
$ C \in \mathcal{C} $,
$$ \val(C) = \val(C_2) \implies C \to C_2 $$
since $C \subseteq C_2$. 
The other direction, $ C_2 \to C $, can be easily established from a unital map that replaces the currency state by a uniform state of rank of the effective dimension of $C_2$ as in the proof of Proposition~\ref{prop:yield_majorization}, yielding precisely $C$. 
\end{proof}

\begin{proposition}[Nonequivalent currency]

The statement of Proposition \ref{prop:equivalent_currency} does not hold for a set $ \mathcal C_3 = \{ C_3^k \}_k \cup \{\Omega\} $ made of differently sized sets of orthogonal states without the convex hull,
$$ C_3^k = \{ \ket{i}\bra{i} \}_{1\leq i\leq k} ,$$
since (as a counterexample)
$$ \left\{ \frac{\id_2}{2}\right\} \nrightarrow \{\ket{1}\bra{1},\ket{2}\bra{2}\} ,$$
i.e.\ $C^2 \nrightarrow C_3^2$. 
The set $\mathcal C_3$ does nevertheless form a currency for $S^\Omega$. 
\end{proposition}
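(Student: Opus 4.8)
The plan is to verify the two assertions separately: the counterexample, which shows that $\mathcal C_3$ is \emph{not} equivalent to the currency $\mathcal C$, and the positive claim that $\mathcal C_3$ is nonetheless a currency for the full target $S^\Omega$.

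For the counterexample I would exploit the fact that the maximally mixed state is a fixed point of every unital map: if $\E$ is unital then $\E(\id_2/2)=\id_2/2$. Since transformations act element-wise and $C^2=\{\id_2/2\}$ is a singleton, the image of $C^2$ under any allowed transformation is again $\{\id_2/2\}$, which can never be contained in $C_3^2=\{\ket{1}\bra{1},\ket{2}\bra{2}\}$ because $\id_2/2$ equals neither pure state. Hence $C^2\nrightarrow C_3^2$, so the direction $C^k\to C_3^k$ of the interconvertibility underlying Proposition~\ref{prop:equivalent_currency} fails, and $\mathcal C_3$ cannot be equivalent to $\mathcal C$ in the sense of that proposition.

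Next I would check Definition~\ref{def:currency}. For the \emph{order} axiom it suffices to observe the nesting $C_3^k\subseteq C_3^l$ for $k\le l$, which gives $C_3^k\to C_3^l$ by forgetting (identity map followed by the subset relation), together with $C_3^k\to\Omega$ for every $k$ and $\Omega\in\mathcal C_3$ by construction; this already makes every pair comparable, so $\to$ restricts to a total pre-order on $\mathcal C_3$ and therefore admits a monotone value function (any strictly decreasing assignment to $\mathbb R_+$ with $\val(\Omega)=0$ will do). To confirm that the $C_3^k$ are genuinely inequivalent, so that the currency is non-degenerate, I would show $C_3^l\nrightarrow C_3^k$ for $l>k$: a single unital map $\E$ with $\E(C_3^l)\subseteq C_3^k$ would, by linearity, send the uniform mixture $\Pi^l/l$ to a convex combination of at most $k$ pure states, i.e.\ to a state of rank $\le k$; but majorization forbids a unital map from lowering the rank below $l=\rank(\Pi^l/l)$, a contradiction. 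The same fixed-point remark shows $\Omega\nrightarrow C_3^d$, so $\Omega$ sits strictly at the bottom.

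Finally, for \emph{universality} with target $\mathcal S=S^\Omega$ the selling direction is immediate, since $V\to\Omega$ for every specification $V$ and $\Omega\in\mathcal C_3\cap\mathcal S$. For the buying direction I would use the top element $C_3^1=\{\ket{1}\bra{1}\}$: to reach an arbitrary $V$ it is enough to reach a single state $\rho\in V$, and because a pure state majorizes every density operator there is a unital map carrying $\ket{1}\bra{1}$ to any chosen $\rho\in V$, whence $C_3^1\to\{\rho\}\subseteq V$. This completes the verification that $\mathcal C_3$ is a currency for $S^\Omega$. I expect the only mildly delicate point to be the strict-order step $C_3^l\nrightarrow C_3^k$: it is tempting but wrong to argue state-by-state, since each individual $\ket{i}\bra{i}$ can certainly be sent to a single pure state; the subtlety is that one fixed map must handle all of them simultaneously, and testing it on their uniform mixture is what exposes the rank obstruction. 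Everything else reduces to the fixed-point property of the maximally mixed state and the elementary forgetting/subset relations.
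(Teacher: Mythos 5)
Your proof is correct. The second half (verifying that $\mathcal C_3$ is a currency for $S^\Omega$) follows essentially the same route as the paper: nesting $C_3^k \subseteq C_3^{k'}$ gives comparability, strictness $C_3^{k'} \nrightarrow C_3^k$ is obtained by testing a putative single unital map on the uniform mixture $\Pi^{k'}/k'$ and invoking the rank/majorization obstruction (the paper phrases this via monotonicity of $H_0$, which is the same thing), and universality comes from $C_3^1 \to \{\rho\} \subseteq V$ via majorization together with $V \to \Omega$. The subtlety you flag at the end | that one must test a \emph{fixed} map on the mixture rather than argue state-by-state | is precisely the content of the paper's argument. Where you genuinely diverge is the counterexample. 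The paper derives $\left\{ \id_2/2 \right\} \nrightarrow \{\ket{1}\bra{1},\ket{2}\bra{2}\}$ from its cost formula (Proposition~\ref{prop:cost_majorization}): it computes $\Cost{\{\ket{1}\bra{1},\ket{2}\bra{2}\}} = 1 > 0 = \Cost{\left\{ \id_2/2\right\}}$ and then uses that Cost is a monotone along the pre-order. You instead use the fixed-point property of unital maps, $\E(\id_2/2)=\id_2/2$, plus element-wise action: the image of the singleton $\{\id_2/2\}$ under any allowed transformation is again $\{\id_2/2\}$, which is not a subset of $C_3^2$. Your argument is more elementary and self-contained | it needs nothing beyond the definition of the pre-order | and it is arguably the cleanest possible proof of this particular instance; the paper's version, by contrast, showcases the cost monotone it has just developed and generalizes immediately to any pair of resources whose costs are ordered the wrong way. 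Both are valid, so there is no gap.
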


\begin{proof}
To see that it is indeed true that in a qubit $S$
$$ C^2 = \left\{ \frac{\id_2}{2}\right\} \nrightarrow \{\ket{1}\bra{1},\ket{2}\bra{2}\} = C_2^2,$$
it is enough to employ Proposition~\ref{prop:cost_majorization}: we see that in terms of the currency $\mathcal{C}$,
\begin{align*} 
&\Cost{\{\ket{1}\bra{1},\ket{2}\bra{2}\}} \\
&= \inf_{\rho \in \{\ket{1}\bra{1},\ket{2}\bra{2}\}} 
[\log 2 - \log \lfloor 2^{H_\text{min} (\rho)} \rfloor ] \\
&= 1 > 0 
= \Cost{\left\{ \frac{\id_2}{2} \right\}},
\end{align*}
and so since $\Cost{\cdot}$ is a monotone along the pre-order, 
$ \left\{ \frac{\id_2}{2} \right\} 
\nrightarrow \{\ket{1}\bra{1},\ket{2}\bra{2}\} $.

To see that $\mathcal C_3$ is still a currency, note that it satisfies order and universality for $S^\Omega$: universality holds because under unital maps, $ C_3^1 = \{\ket{1}\bra{1}\}$ can be transformed into any state $\rho$ (this follows from simple majorization), and so for any $V\in S^\Omega$,
$ C_3^1 \to V $. In the other direction, $V\to \Omega$ for any $V \in S^\Omega$ and $\Omega\in\mathcal C_3$. \\
To see why order holds, note that for $k'\geq k$, simply $ C_3^k \subseteq C_3^{k'} $ and so $C_3^k \to C_3^{k'}$. To see that this only holds in one direction, that is, that 
$C_3^{k'} \nrightarrow C_3^k$ for $k'> k$, note that if there was a map that takes $C_3^{k'}$ to $C_3^k$, then it would have to take all states in $C_3^{k'}$ into some state in $C_3^k$. But such a map would then also take the uniform state over the first $k'$ eigenvalues to something that has support only on the first $k$ eigenvalues, so that 
$$ \pi^{k'} := \left\{ \frac{\Pi^{k'}}{k'} \right\} \to \{\tau\} $$
for some $\tau$ such that $\tr(\Pi^k \tau)=1$. But then $H_0 (\tau) \leq H_0(\pi^{k'})$, which contradicts the fact that $H_0$ is a monotone under unital maps (this follows from the majorization condition). Hence 
$$ k'\geq k \iff C_3^k \to C_3^{k'} .$$
\end{proof}

\subsubsection{Approximations and probability of failure}

As a special case of our framework, we can model cases where we allow for some error probability in the output of a transformation, or where we need to make sure that our protocol is robust against errors in the initial resources. In the first case, the transformation will become cheaper to implement since we can save some currency by betting on a successful outcome. In the second case, the error makes it harder to still achieve the final resource, so that the transformation becomes more expensive.

To illustrate how this would work, we can look at the special case of resources that correspond to approximate quantum states, $\mathcal B^\eps(\rho)$, according to some metric like the trace distance or the purified distance based on the fidelity. More generally, we may look at cases beyond such metrics, namely arbitrary coarse-grainings of resources represented by so-called \emph{approximation structures} as introduced in~\cite{DelRio2015}, which again give us a notion of epsilon balls, $\mathcal B^\eps(\cdot)$, for parameters $0\leq \eps \leq 1$.

We can now introduce such \emph{smoothing} both at the input and the output of transformations, to address the two types of error probability. 
For example, we then find the following result for the cost of an approximate quantum state.

\begin{proposition}[Cost of approximations for unital maps]
In terms of the Stage I currency $\mathcal C$ introduced before, the  specification $\mathcal B^\eps(\rho)$ for some $0\leq \eps \leq 1$ (where the smoothing is taken according to some metric on state space such as the trace distance or the fidelity) satisfies
\begin{align*}
     \Cost{B^\eps(\rho)} &= \log d - \sup_{\sigma \in B^\eps(\rho)} \log \lfloor 2^{\hmin(\sigma)} \rfloor \\
     &= \log d - \log \lfloor 2^{\hmin^\eps(\rho)}\rfloor
\end{align*}
\end{proposition}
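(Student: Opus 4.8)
The plan is to get the first equality for free from Proposition~\ref{prop:cost_majorization} and then to spend the real effort on the second equality, which amounts to interchanging a floor with a supremum. Since $\ball^\eps(\rho)$ is nothing but a specification---the set of all states $\sigma$ lying within $\eps$ of $\rho$ in the chosen metric---I would apply Proposition~\ref{prop:cost_majorization} verbatim with $V=\ball^\eps(\rho)$, obtaining
\begin{align*}
\Cost{B^\eps(\rho)} = \inf_{\sigma\in\ball^\eps(\rho)} \left(\log d - \log\lfloor 2^{\hmin(\sigma)}\rfloor\right) = \log d - \sup_{\sigma\in\ball^\eps(\rho)}\log\lfloor 2^{\hmin(\sigma)}\rfloor .
\end{align*}
This is exactly the first line; the only manipulation is turning the infimum over $V$ into a supremum after pulling out the overall sign.

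For the second equality I would recall the standard definition of the smooth min-entropy, $\hmin^\eps(\rho):=\sup_{\sigma\in\ball^\eps(\rho)}\hmin(\sigma)$, so that the claim collapses to the purely analytic identity
\begin{align*}
\sup_{\sigma\in\ball^\eps(\rho)} \lfloor 2^{\hmin(\sigma)}\rfloor = \left\lfloor 2^{\,\sup_{\sigma\in\ball^\eps(\rho)}\hmin(\sigma)}\right\rfloor .
\end{align*}
Because $x\mapsto\lfloor 2^x\rfloor$ is monotonically non-decreasing, the ``$\leq$'' direction is immediate: each term on the left is bounded above by the right-hand side, so their supremum is too. The ``$\geq$'' direction carries the content and needs the supremum of $\hmin$ to be genuinely \emph{attained} inside the ball.

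To secure attainment I would use two facts. First, $\hmin(\sigma)=-\log\lmax(\sigma)$ is continuous in $\sigma$, since the largest eigenvalue of a density operator depends continuously on the operator (and stays bounded away from zero). Second, for the metrics named in the statement---trace distance or the purified distance---the ball $\ball^\eps(\rho)$ is a closed and bounded subset of the finite-dimensional set of density matrices, hence compact. A continuous function on a compact set attains its supremum, so there exists $\sigma^\star\in\ball^\eps(\rho)$ with $\hmin(\sigma^\star)=\hmin^\eps(\rho)$. Evaluating the left-hand supremum at this $\sigma^\star$ gives $\lfloor 2^{\hmin(\sigma^\star)}\rfloor=\lfloor 2^{\hmin^\eps(\rho)}\rfloor$, which closes the ``$\geq$'' direction and hence the identity.

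The main obstacle is precisely this interchange of floor and supremum. The map $\lfloor 2^x\rfloor$ is discontinuous, jumping at each integer threshold of $2^x$, so the identity would fail if the supremum of $\hmin$ were only approached but never reached; the entire argument therefore hinges on compactness of the smoothing ball together with continuity of $\lmax$, after which monotonicity of $x\mapsto\lfloor 2^x\rfloor$ does the rest. For the more exotic approximation structures of~\cite{DelRio2015} one would additionally have to verify that the corresponding $\eps$-balls are compact (or at least that the relevant supremum is attained), but for the standard metrics quoted here this is automatic.
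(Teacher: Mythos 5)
Your proof is correct and follows the same skeleton as the paper's own argument: apply Proposition~\ref{prop:cost_majorization} to $V=\ball^\eps(\rho)$ to get the first line, then invoke the definition $\hmin^\eps(\rho)=\sup_{\sigma\in\ball^\eps(\rho)}\hmin(\sigma)$ for the second. The difference is one of rigor, and it is in your favor: the paper asserts that the interchange
$$ \sup_{\sigma\in\ball^\eps(\rho)}\log\left\lfloor 2^{\hmin(\sigma)}\right\rfloor \;=\; \log\left\lfloor 2^{\sup_{\sigma\in\ball^\eps(\rho)}\hmin(\sigma)}\right\rfloor $$
follows ``immediately'' from the definition of $\hmin^\eps$, whereas you correctly identify that this step carries real content. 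Since $x\mapsto\lfloor 2^x\rfloor$ is discontinuous, the identity can genuinely fail when the supremum of $\hmin$ is only approached: if $2^{\hmin^\eps(\rho)}$ were an integer $n$ with $\hmin(\sigma)<\hmin^\eps(\rho)$ for every $\sigma$ in the ball, the left-hand side would be at most $\log(n-1)$ while the right-hand side equals $\log n$. Your attainment argument --- closedness of the $\eps$-ball in trace or purified distance inside the compact set of finite-dimensional density matrices, together with continuity of $\lmax$ (bounded below by $1/d$, so $\hmin$ is continuous) --- is exactly what rules this out, and your closing remark that more exotic approximation structures would require re-checking compactness is a pertinent caveat the paper does not make. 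In short: same route, but your write-up closes a small genuine gap that the paper's proof papers over.
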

\begin{proof}
The proof of this follows directly from the definition of $\hmin^\eps$ as
$$ \hmin^\eps(\rho) = \sup_{\sigma\in\mathcal B^\eps(\rho)} \hmin(\sigma) ,$$
since then we immediately get 
\begin{align*}
    \Cost{B^\eps(\rho)} &= \log d - \sup_{\sigma \in \mathcal B^\eps(\rho)} \log \lfloor 2^{\hmin(\sigma)} \rfloor \\
    &= \log d - \log \lfloor 2^{\sup_{\sigma\in\mathcal B^\eps(\rho)} \hmin(\sigma)} \rfloor \\
    &= \log d - \log \lfloor 2^{\hmin^\eps(\rho)}\rfloor.
\end{align*}
\end{proof}

More generally, the larger the smoothing we introduce, the smaller the cost (Section~\ref{sec:technical}),
$$ \eps \geq \eps' \implies \Cost{\mathcal B^\eps(\rho)} \leq \Cost{\mathcal B^{\eps'}(\rho)} .$$

On the other hand, introducing the smoothing on the input, we can look at the yield of an approximate quantum state. 
Let us look at the particular example where we are interested in the yield of $\mathcal B^\eps(\ket{1}\bra{1})$ in a $d$-dimensional system for $0 < \eps < 1$. In this case, the resource in principle has full support on the whole basis, since for example 
$ \sigma = (1-\eps) \ket{1}\bra{1} +  \frac{\eps}{d-1} \left( \sum_{i=2}^{d} \ket{i}\bra{i} \right) \in B^\eps(\ket{1}\bra{1}) $ and $\Yield{\sigma}=0$. But then clearly the only resource (except $\Omega$) in the currency $\mathcal C$ that can be reached from $B^\eps(\ket{1}\bra{1})$ is $C^d=\{\frac{\id_d}{d}\}$, and so 
$ \Yield{B^\eps(\ket{1}\bra{1})}=0 $. 

More generally in the resource theory of unital maps, due to the same reasoning, the yield of an approximate state as specified by an error parameter $\eps>0$ without further limitations (such as additional knowledge about the state) or further relaxations (such as allowing to reach an approximate currency resource) will be zero for this currency $\mathcal C$. 
This is partly because the currency is relatively course-grained (it contains only flat states), but also because it consists of exact states rather than itself allowing for a probability of error. By adapting the currency in this way, we could hence arrive at different, perhaps more useful, statements about the yield of approximate states.

Naturally, we find that more generally for arbitrary approximation structures~\cite{DelRio2015} and currencies with large enough target to include the approximate states respectively,
$$ \eps \geq \eps' \implies \Yield{\mathcal B^\eps(\rho)} \leq \Yield{\mathcal B^{\eps'}(\rho)}.$$

More generally, we could look at the balance of going between smooth input and output specifications, for example
\begin{align*}
    V^{\eps} &:= \{ \sigma_{WS}: \sigma_S \in \mathcal B^\eps(\rho_S) \} ,\\
    W^{\eps'} &:=\{ \sigma_{WS}: \sigma_S \in \mathcal B^{\eps'}(\tau_S) \}.
\end{align*}
One immediate statement is that for $\delta \geq 0$,
\begin{align*} 
\BalanceII{V^{\eps}}{W^{\eps'}}C &\geq \BalanceII{V^{\eps+\delta}}{W^{\eps'}}C \\
\BalanceII{V^{\eps}}{W^{\eps'}}C &\leq \BalanceII{V^{\eps}}{W^{\eps'+\delta}}C .
\end{align*}
One may use other properties of the theory (like robustness under an approximation structure \cite{DelRio2015}) to tighten these bounds further.

\section{Pathological cases}
\label{appendix:pathological}
\subsection{Main ideas and statements}

Here we aim to get a better understanding of pathological resources, that is resources for which $\BalanceII{V}{V}{C}>0$ for some $C\in\mathcal C$, and see why such resources cannot exist if the currency is fair. 
To this end, 
it is instructive to weaken the definition of currencies by dropping monotonicity of value, that is, to allow for two currency resources $C_1$ and $C_2$ that $C_1 \to C_2$ even though $\val C_1 < \val C_2$. 
This allows us to study more precisely how pathological resources can help us generate more from less currency and when we may use them | under such a relaxed currency, we may also get resources $V\in\mathcal S$ for which $\Cost{V}<\Yield{V}$. 
In the end, note that we can recover the full order (up to equivalences) on the currency by ``renormalizing'' the value function, that is, by identifying all the values of inter-convertible currency resources, thus guaranteeing monotonicity of value and eliminating resources with $\Cost{V}<\Yield{V}$.\footnote{In fact, monotonicity of value is not essential for most of the proofs in this paper. We could also start from a weaker  (if less intuitive) definition of currency without a monotonous value, and check when that becomes relevant.}

We shall now look at the impact of fairness in one or both directions on resources $V$ with 
$$ \BalanceII{V}{V}{C}>0 $$
for some $C \in\mathcal C$, as well as their relation to resources for which $\Cost{V}< \Yield{V}$. We will also discuss what happens if we put back in monotonicity of value, and when this leads to a contradiction, that is, when the theory excludes pathological resources with $\BalanceII{V}{V}{C}>0$.

\subsubsection{Fairness in one direction}

\paragraph*{Good for the rich.} In a Stage II currency that is good for the rich,
we will see that if there are resources for which $\BalanceII{V}{V}{C}>0$, then the theory allows the agent to generate arbitrary amounts of currency as soon as the agent has access to $V$ and $C$.
More precisely, 
we can show that
if 
$\BalanceII{V}{V}{C} >0 $ for some $V\in\mathcal S$ and $C\in\mathcal C$, 
the theory essentially becomes trivial after a threshold of having access to $V$ and $C$, in the sense that (Proposition~\ref{prop:pathological_I}) 
$$ \BalanceII{V}{V}{C} \approx c_{\sup} - \val C .$$
This also means that if we would put back in monotonicity of value at this point (as long as the currency is not also fair in the other direction, we may do this), we would get that the cost of such a resource $V$ together with $C$ would essentially have to reach the top of the currency (Proposition~\ref{prop:cost_top}),
$$ \Cost{V} \gtrsim c_{\sup} - \val C .$$
This is because when one has enough currency to buy the resource $V$ and still keep the currency $C$ on the side, then one could generate arbitrary amounts of currency, so the theory needs to make sure that $ \Cost{V} + \val(C) $ essentially already corresponds to the highest possible value in the currency.

Finally, we can also look at how this relates to resources for which $\Cost{V}<\Yield{V}$ (if we were to allow them). 
We can show that if the theory is good for the rich,  then
\begin{align*} 
&\Cost{V}<\Yield{V} 
\implies \BalanceII{V}{V}{C}> 0
\end{align*}
for all currency resources $C\in\mathcal C$ (Proposition~\ref{prop:condition_balance_positive}). In particular 
$ \BalanceII{V}{V}{\Omega} > 0 $, which would imply that
$$ \Yield{V}\approx c_{\sup} .$$

\paragraph*{Good for the poor.} When the currency is good for the poor, we can show that a resource for which 
$\BalanceII{V}{V}{C}>0$ for some $C\in\mathcal C$ must also satisfy 
$\Yield{V}>\Cost{V}$. 
Namely, we get from $\BalanceII{V}{V}{C}>0$ that (Proposition~\ref{prop:pathological_II})
$$ \Yield{V}\geq\BalanceII{V}{V}{C}>0=\Cost{V}. $$
This of course would directly contradict the order on the currency as shown in Proposition~\ref{prop:cost_bigger_yield}, and so we may not reintroduce monotonicity here without obtaining a contradiction (see below).

\subsubsection{Fairness in both directions}
For a currency that is fair in both directions, 
we can show that (Corollary~\ref{cor:Midas_I})
\begin{align*}
    &\Cost{V} < \Yield{V} 
    \iff \Balance{V}{V}>0 .
\end{align*}
If such a resource exists, we furthermore see that the theory becomes essentially trivial, because 
\begin{align*} 
\Cost{V} = 0 \quad \text{ and } \quad
\Yield{V} &= \Balance{V}{V} = c_{\sup} 
\end{align*}
for any target resource $V\in\mathcal S$ (Corollary~\ref{cor:Midas_II}). Since such a theory allows us to generate arbitrary money for free like king Midas (because in particular $\Balance{\Omega}{\Omega}>0$), we call such a pathological theory a \emph{Midas theory} and the resource $V$ a Midas resource.

As fair theories are in particular good for the poor, it is easy to see that we obtain a contradiction when introducing back the monotonicity on value into the currency. Namely, all currency resources are inter-convertible, and we would have to set the value of all to be equal | but then we could not get $\Cost{V}>\Yield{V}$ or $\Balance{V}{V}>0$ in the first place.

\subsection{Formal statements and proofs}

\subsubsection{Having less does not help (good for the rich)}

\begin{proposition}[Pathological case I]
\label{prop:pathological_I}
Let $\mathcal C$ be a currency that is fair for the rich. Then, if the theory has a pathological resource $V\in\mathcal S$ such that 
$\BalanceII{V}{V}{C}>0$ for some $C\in\mathcal C$,
then
$$ c_{\sup} - \val (C) \geq \BalanceII{V}{V}{C}\geq c_{\sup} - \val (C) - \alpha $$
for some margin $\alpha$ that 
corresponds to the difference between $C$ and the next higher currency resource $C'$ \footnote{if there is no one next higher resource, one can choose a resource arbitrarily close to $C$}, $\alpha=\val(C') - \val(C)$.
\end{proposition}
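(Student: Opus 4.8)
The plan is to prove the two inequalities separately: the upper bound is immediate from the definition, while the lower bound requires an iterative ``bootstrapping'' argument that exploits the good-for-the-rich property to turn a single free gain into an accumulation that saturates the currency.

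For the upper bound I would simply unfold the definition of balance. Since $\BalanceII{V}{V}{C} = \sup_{C''\in\mathcal C}\{\val(C'') - \val(C): V\cap C \to V \cap C''\}$ and every currency value obeys $\val(C'')\leq c_{\sup}$ by definition of $c_{\sup}=\sup_{\mathcal C}\val$, each admissible difference is bounded by $c_{\sup}-\val(C)$, and hence so is the supremum. This gives $\BalanceII{V}{V}{C}\leq c_{\sup}-\val(C)$ with no appeal to fairness.

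For the lower bound, the idea is that a pathological resource generates currency for free and that good-for-the-rich fairness lets us repeat this from ever larger wallets. First, from $\BalanceII{V}{V}{C}>0$ I extract a $C_+\in\mathcal C$ with $V\cap C\to V\cap C_+$ and $\val(C_+)>\val(C)$. Letting $C'$ be the next higher currency resource above $C$, so $\alpha=\val(C')-\val(C)>0$ and $\val(C')\leq\val(C_+)$, monotonicity of value gives $C_+\to C'$, and independence (Definition~\ref{def:independence}) upgrades this to $V\cap C_+\to V\cap C'$; transitivity then yields the \emph{base transformation} $V\cap C\to V\cap C'$ whose value gain is exactly the minimal step $\alpha$. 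Collapsing the free gain down to this small step is the crucial move, since it is what keeps the final margin equal to $\alpha$. Next I iterate the base transformation using good-for-the-rich fairness with $\Delta=\alpha$: setting $C^{(0)}=C$, $C^{(1)}=C'$, as long as $\val(C^{(n)})<c_{\sup}-\alpha$ the fairness clause (applied with starting resource $C^{(n)}$, which satisfies $\val(C^{(n)})\geq\val(C)$) produces a $C^{(n+1)}\in\mathcal C$ with $\val(C^{(n+1)})=\val(C^{(n)})+\alpha$ and $V\cap C^{(n)}\to V\cap C^{(n+1)}$, and transitivity extends the chain to $V\cap C\to V\cap C^{(n+1)}$. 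Because each step raises the value by $\alpha>0$, after finitely many steps (or unboundedly, when $c_{\sup}=\infty$) we reach the first index $n^\ast$ with $\val(C^{(n^\ast)})\geq c_{\sup}-\alpha$, the last step being legitimate since $\val(C^{(n^\ast-1)})<c_{\sup}-\alpha$. Feeding $C^{(n^\ast)}$ into the definition of balance gives $\BalanceII{V}{V}{C}\geq\val(C^{(n^\ast)})-\val(C)\geq c_{\sup}-\val(C)-\alpha$. The case with no unique successor is handled by the footnote: pick $C'$ with $\val(C')$ arbitrarily close to $\val(C)$ yet below $\val(C_+)$, making $\alpha$ as small as desired.

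The main obstacle I anticipate is the bookkeeping of the fairness boundary conditions through the iteration: at every step one must check that the wallet $C^{(n)}$ lies in the admissible window $\val(C)\leq\val(C^{(n)})<c_{\sup}-\alpha$ demanded by the good-for-the-rich clause, and that the resources $C^{(n+1)}$ it supplies genuinely belong to $\mathcal C$ (which is precisely what the fairness definition asserts, so the induction is self-sustaining). The conceptual subtlety, and the reason the margin is exactly $\alpha$ rather than the full balance, is recognizing that one should first reduce the free gain to the single minimal step $C\to C'$ before iterating; iterating directly with the original (possibly large) gain would bound the margin only by $\BalanceII{V}{V}{C}$ itself and give a strictly weaker estimate.
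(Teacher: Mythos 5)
Your proof is correct and follows essentially the same route as the paper's: both establish the minimal-step transformation $V\cap C \to V\cap C'$ with gain $\alpha$, iterate it upward via good-for-the-rich fairness until hitting $c_{\sup}-\alpha$, and obtain the upper bound trivially from the definition of $c_{\sup}$. Your explicit justification of the base step (extracting $C_+$, then using value monotonicity and independence) and the induction bookkeeping of the fairness window simply make rigorous what the paper's proof asserts in compressed form.
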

\begin{proof}
Let $\BalanceII{V}{V}{C}>0$. Then for the next higher resource $C'\in\mathcal C$ \footnote{if there is no one next higher resource, we may choose a resource arbitrarily close to $C$} with 
$$ \alpha := \val(C') - \val(C) > 0 $$ it holds that
$$ V\cap C \to V \cap C' .$$
But then, since the currency is good for the rich, this can be repeated starting from $C'$ and all higher values above until we reach the top of the currency (as we can go in steps of $\alpha$, the top means $c_{\sup}- \alpha$), and so
$$ \BalanceII{V}{V}{C} \geq c_{\sup} - \val(C) - \alpha . $$
In the special case where $c_{\sup}=\infty$, note that by the same rationale as above we can reach any resource in the currency, and so in this case $\BalanceII{V}{V}{C} = c_{\sup} - \val(C) = \infty $.

Finally, note that the balance starting from $C$ is trivially upper bounded by $c_{\sup} - \val(C)$ or else one could exceed the currency limit by performing the operation, that is, trivially
$$ c_{\sup} - \val(C) \geq \BalanceII{V}{V}{C}.$$
\end{proof}

\begin{proposition}[Cost reaches the top]
\label{prop:cost_top}
In a resource theory with a currency $C\in \mathcal C$ that is good for the rich, 
a resource $V\in\mathcal S$ such that $\BalanceII{V}{V}{C}>0$ for some $C\in\mathcal C$ satisfies
$$ \Cost{V} \geq c_{\sup} - \val C - \beta $$
for some $\beta\geq \alpha + \eps + \eps' $ with $\alpha$ as in Proposition~\ref{prop:pathological_I} and $\eps,\eps'\geq 0$ arbitrarily small or zero (depending on whether or not $\Cost{V}$ and $c_{\sup}$ are achievable).
\end{proposition}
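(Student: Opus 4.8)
The plan is to turn the informal intuition of the main text — ``to buy $V$ and keep $C$ on the side, then generate currency up to the top'' — into a single chain of transformations on the currency, and to close it off with monotonicity of value. The three ingredients are: the near-top balance supplied by Proposition~\ref{prop:pathological_I}, a near-optimal buying resource for $V$, and the good-for-the-rich axiom used to pay for $V$ while retaining $C$. First I would fix the buying resource: by Definition~\ref{def:cost_yield} there is a $C_c\in\mathcal C$ with $C_c\to V$ and $\val(C_c)=\Cost{V}+\eps$ for arbitrarily small $\eps\geq 0$. Read in the balance language of Definition~\ref{def:balance}, this transformation is $\Omega\cap C_c\to V\cap\Omega$, with currency difference $\Delta=\val(\Omega)-\val(C_c)=-\val(C_c)$.

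Next I would exploit good-for-the-rich to buy $V$ starting from a richer wallet. Applying Definition~\ref{def:fairness} (in the direction $\val(C_1')\geq\val(C_c)$) to the process above, for any $C_1'\in\mathcal C$ with $\val(C_1')\geq\val(C_c)$ and below the upper boundary there is a $C_2'\in\mathcal C$ with $C_1'\to V\cap C_2'$ and $\val(C_2')-\val(C_1')=-\val(C_c)$. I would choose the smallest available $C_1'$ with $\val(C_1')\geq\Cost{V}+\val(C)$; then $\val(C_2')\geq\val(C)$, so $C_2'\to C$ by the total order on $\mathcal C$, and independence (Definition~\ref{def:independence}) upgrades this to $V\cap C_2'\to V\cap C$. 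In other words, a single currency resource $C_1'$ of value at most $\Cost{V}+\val(C)+\alpha$ suffices to reach the composed resource $V\cap C$, with the granularity of $\mathcal C$ contributing at most one step $\alpha$.

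Finally I would feed $V\cap C$ into the pathological generation. Proposition~\ref{prop:pathological_I} yields a $C^*\in\mathcal C$ with $V\cap C\to V\cap C^*$ and $\val(C^*)\geq c_{\sup}-\alpha-\eps'$, where $\eps'\geq 0$ absorbs the possible non-attainability of $c_{\sup}$. Discarding $V$ is free, since $V\cap C^*\subseteq C^*$ gives $V\cap C^*\to C^*$ by forgetting (take $f=\id$). Concatenating, $C_1'\to V\cap C\to V\cap C^*\to C^*$, and monotonicity of value forces $\val(C_1')\geq\val(C^*)$, i.e. $\Cost{V}+\val(C)+\alpha+\eps\geq c_{\sup}-\alpha-\eps'$. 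Rearranging gives $\Cost{V}\geq c_{\sup}-\val(C)-\beta$ with a $\beta$ of the asserted form $\beta\geq\alpha+\eps+\eps'$ (up to an extra step $\alpha$ from the choice of $C_1'$, which one can fold into the margin).

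The main obstacle is the middle step: reaching $V\cap C$ at a currency outlay of only $\Cost{V}+\val(C)$. Since $\mathcal C$ is assumed merely totally ordered with a real value function and carries \emph{no} additive or subsystem structure, one cannot simply glue a ``buying'' resource $C_c$ and a ``sidelined'' resource $C$ into one currency element of value $\val(C_c)+\val(C)$; instead the simultaneous possession of $V$ and $C$ must be extracted from the good-for-the-rich axiom applied to the buying process itself, with independence used only afterwards to trade the leftover $C_2'$ down to the required $C$. Secondary care is needed with the upper-boundary hypothesis of Definition~\ref{def:fairness} (which holds because $\val(C)<c_{\sup}$ whenever the pathology leaves room to grow) and with the achievability margins $\eps,\eps'$ and the granularity $\alpha$, all of which are exactly what the slack $\beta\geq\alpha+\eps+\eps'$ is there to absorb.
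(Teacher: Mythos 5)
Your proof is correct in outline and follows the same skeleton as the paper's: a near-optimal buying resource $C_c$ with $\val(C_c)=\Cost{V}+\eps$, an application of good-for-the-rich to reach $V\cap C$, the pump of Proposition~\ref{prop:pathological_I} to climb to $\val(C^*)\geq c_{\sup}-\alpha-\eps'$, and monotonicity of value on the concatenated chain to close. The difference lies in the middle step, and there the paper's route is cleaner. The paper applies the \emph{second} clause of Definition~\ref{def:fairness} (fixing the \emph{final} currency to be $C$) to the buying process $C_c\to V$: this directly produces a single element $C'\in\mathcal C$ with $\val(C')=\val(C_c)+\val(C)$ \emph{exactly} and $C'\to V\cap C$. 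That route needs no independence axiom, no trade-down step, and no granularity discussion, and it yields the sharp margin $\beta=\alpha+\eps+\eps'$.

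You instead apply the \emph{first} clause (fixing the initial currency $C_1'$), which forces two extra ingredients. First, you must invoke Stage II independence (Definition~\ref{def:independence}) to upgrade $C_2'\to C$ to $V\cap C_2'\to V\cap C$; this is legitimate here, since the hypothesis $\BalanceII{V}{V}{C}>0$ presupposes a Stage II currency, but it is avoidable. Second, you must exhibit a $C_1'$ just above the required value, and your claim that the ``smallest available'' $C_1'$ overshoots by at most $\alpha$ is not automatic: $\alpha$ measures the gap above $C$, not the granularity of $\mathcal C$ near $\val(C_c)+\val(C)$, and an infimum over currency values need not be attained. This can be repaired by using the ladder of currency elements at values $\val(C)+k\alpha$ whose existence good-for-the-rich forces in the proof of Proposition~\ref{prop:pathological_I}, but as written it is a gap in rigor, and even once repaired it costs an extra $\alpha$ in the margin ($\beta=2\alpha+\eps+\eps'$), which only the loose phrasing ``some $\beta\geq\alpha+\eps+\eps'$'' tolerates. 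Two smaller points: to guarantee $\val(C_2')\geq\val(C)$ you need $\val(C_1')\geq\val(C_c)+\val(C)=\Cost{V}+\eps+\val(C)$, not merely $\geq\Cost{V}+\val(C)$; and you never address $c_{\sup}=\infty$, which the paper treats separately (there the same chain forces $\Cost{V}=\infty$, since a fixed finite-valued $C_1'$ would otherwise have to dominate currency elements of arbitrarily large value).
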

\begin{proof}
Let $C_c$ be a resource such that 
$$ C_c \to V $$
and 
$ \val(C_c) = \Cost{V}+\eps $
for some arbitrarily small or zero $\eps\geq 0$, depending on whether or not the cost of $V$ is achievable. Then, because the theory is fair for the rich, if
$ \val(C_c) + \val C < c_{\sup} $
then there is a resource $C'\in\mathcal C$ such that
$$ C' \to C \cap V  $$
and 
$ \val (C') = \val(C_c) + \val(C) $.
But then we can use the fact that
$$ \BalanceII{V}{V}{C} \geq c_{\sup} - \val(C) - \alpha $$
to see that (unless $c_{\sup}=\infty$) then in fact
$$ C' \to V \cap C \to V \cap C'' $$
for some $C''\in\mathcal C$ satisfying
$$ \val(C'') \geq c_{\sup} - \alpha - \eps' $$
for arbitrarily small or zero $\eps'\geq 0$ (depending on whether or not $c_{\sup}$ is achievable by a currency resource). But then since in particular
$ C' \to C'' $, we must have that
$ \val(C') \geq \val C''$ and so
$ \val(C') \geq c_{\sup} - \alpha - \eps' $.
But then
$$ \Cost{V} + \eps + \val(C) = \val(C_c) + \val(C) = \val(C') \geq c_{\sup} - \alpha - \eps' $$
and so
$$ \Cost{V} \geq c_{\sup} - \val(C) - \alpha - \eps - \eps' .$$
Finally, if $c_{\sup}=\infty$ then it has to be the case that also $\Cost{V}=\infty$, because otherwise we could follow the rationale above, and starting from $C_c$ generate arbitrarily large currency values larger than $\val(C_c)$, which together with the monotonicity of value yields a contradiction.
\end{proof}

\begin{proposition}[Resources with smaller cost than yield]
\label{prop:condition_balance_positive}
In a theory with a currency $\mathcal C$ that is good for the rich, for any resource $V\in\mathcal S$
\begin{align*} 
&\Cost{V}<\Yield{V} \\
&\implies \BalanceII{V}{V}{C}> 0
\end{align*}
for all $C\in\mathcal C$. Furthermore, if $\Cost{V}<\Yield{V}$, then also
$$ \Yield{V} \geq c_{\sup} - \alpha  $$
for $\alpha$ equal to the value of the smallest currency above $0$ \footnote{if there is no one smallest, we may choose one arbitrarily close to zero}.
\end{proposition}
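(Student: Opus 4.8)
The plan is to first reduce the claim for an arbitrary $C\in\mathcal C$ to the single base case $C=\Omega$, and then to construct the required ``money-for-nothing'' transformation out of a sell-then-rebuy protocol. For the reduction, note that $\Omega\in\mathcal C$ has minimal value $\val(\Omega)=0\leq\val(C)$ for every $C\in\mathcal C$, so Proposition~\ref{prop:balance_increasing} (for a currency good for the rich, the balance is non-decreasing in the available currency) gives $\BalanceII{V}{V}{C}\geq\BalanceII{V}{V}{\Omega}$ for all $C$ within the usual bounds. Hence it suffices to prove $\BalanceII{V}{V}{\Omega}>0$.

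To produce the base-case transformation, I would use $\Cost{V}<\Yield{V}$ to select currency elements $C_c,C_y\in\mathcal C$ with $C_c\to V$, $V\to C_y$ and $\val(C_y)>\val(C_c)$ (taking them arbitrarily close to the cost and yield, or equal when these are achievable). The selling step reads in the composite picture as $V\cap\Omega\to\Omega\cap C_y$, since $V\cap\Omega=V$ and $\Omega\cap C_y=C_y$. For the rebuying step I would apply good-for-the-rich to the buying transformation $\Omega\cap C_c\to V\cap\Omega$, which has currency difference $\Delta=-\val(C_c)$: because $\val(C_y)\geq\val(C_c)$, fairness in this direction yields a $C'\in\mathcal C$ with $\val(C')-\val(C_y)=\Delta$, i.e.\ $\val(C')=\val(C_y)-\val(C_c)>0$, such that $\Omega\cap C_y\to V\cap C'$. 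Concatenating,
\begin{align*}
V=V\cap\Omega \to \Omega\cap C_y \to V\cap C',
\end{align*}
so $\BalanceII{V}{V}{\Omega}\geq\val(C')=\val(C_y)-\val(C_c)>0$. Operationally this is exactly ``sell $V$, buy it back more cheaply, and keep the change''; the role of good-for-the-rich is precisely to guarantee that the surplus currency can be set aside during the repurchase.

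For the quantitative ``furthermore'' claim I would chain two observations. On the one hand, $\BalanceII{V}{V}{\Omega}>0$ triggers Proposition~\ref{prop:pathological_I} at $C=\Omega$, giving the lower bound $\BalanceII{V}{V}{\Omega}\geq c_{\sup}-\val(\Omega)-\alpha=c_{\sup}-\alpha$, with $\alpha$ the value of the smallest currency above $0$. On the other hand, any feasible $C'$ in the supremum defining $\BalanceII{V}{V}{\Omega}$ satisfies $V\to V\cap C'\to C'$, the second step being mere forgetting since $V\cap C'\subseteq C'$; hence $\val(C')\leq\Yield{V}$ and therefore $\BalanceII{V}{V}{\Omega}\leq\Yield{V}$. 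Combining the two bounds gives $\Yield{V}\geq c_{\sup}-\alpha$.

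The main obstacle I anticipate is the base-case transformation, and specifically the bookkeeping of the boundary conditions attached to fairness: applying good-for-the-rich with $C_1'=C_y$ requires $\val(C_y)<c_{\sup}-\Delta=c_{\sup}+\val(C_c)$, which must be verified and handled separately when $c_{\sup}=\infty$ or when cost and yield are not attained (using the usual arbitrarily-small slack $\eps$). The remaining steps — the reduction via Proposition~\ref{prop:balance_increasing} and the upper bound $\BalanceII{V}{V}{\Omega}\leq\Yield{V}$ — are routine once independence and the forgetting-is-free property of the pre-order are invoked.
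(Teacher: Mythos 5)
Your proposal is correct and follows essentially the same route as the paper's own proof: both select near-optimal $C_c \to V$ and $V \to C_y$, apply the good-for-the-rich property to the buying step $\Omega \cap C_c \to V \cap \Omega$ to obtain $V \to C_y \to V\cap C'$ with $\val(C')=\val(C_y)-\val(C_c)>0$, extend from $\Omega$ to all $C$ via Proposition~\ref{prop:balance_increasing}, and then combine Proposition~\ref{prop:pathological_I} at $C=\Omega$ with $\Yield{V}\geq\BalanceII{V}{V}{\Omega}$ (forgetting $V\cap C'\subseteq C'$) for the furthermore claim. Your handling of achievability slack and the fairness boundary conditions is, if anything, slightly more careful than the paper's.
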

\begin{proof}
Let $V$ be a resource such that $\Cost{V}<\Yield{V}$. Then there are currency resources $C_c,C_y\in\mathcal C$ with achievable cost and yield values such that
$$ \Cost{V}\leq\val C_c < \val C_y \leq \Yield{V} $$ 
and
$$ C_c \to V, \ V \to C_y $$ 
are possible processes. 
But then, because anything that can be done with $C_c$ can still be done with $C_y$, in particular, one can buy $V$ and have some currency resource left over. This means that there is also a currency resource 
$ C_\eps $ with 
$\val{C_\eps}=\eps=\val C_y - \val C_c > 0 $
such that
$$ V \to C_y \to V \cap C_\eps $$
is a possible process. \\
Then clearly 
$ \BalanceII{V}{V}{\Omega}\geq \eps > 0 $
and so because the currency is fair for the rich,
$$ \BalanceII{V}{V}{C}\geq \eps > 0 $$
for all $C\in\mathcal C$.

Furthermore, due to Proposition~\ref{prop:pathological_I} and $\val(\Omega)=0$, then in fact
$$ \BalanceII{V}{V}{\Omega}\geq c_{\sup} - \alpha $$
for $\alpha$ equal to the value of the smallest currency value above $0$ (if there is no one smallest, it could be any arbitrarily small resource). But then
$$ \Yield{V}\geq \BalanceII{V}{V}{\Omega} \geq c_{\sup} - \alpha .$$

\end{proof}

\subsubsection{More does not help, fairness for the poor}
\begin{proposition}[Pathological case II]
\label{prop:pathological_II}
Let $\mathcal C$ be a currency that is fair for the poor. Then for pathological resources $V$ such that
$\BalanceII{V}{V}{C}>0$ for some $C\in\mathcal C$,
$$ \Yield{V}\geq\BalanceII{V}{V}{C}>0=\Cost{V}. $$
\end{proposition}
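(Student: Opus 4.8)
The plan is to prove the chain $\Yield{V}\geq\BalanceII{V}{V}{C}>0=\Cost{V}$ in two essentially independent pieces: an upper bound relating the self-balance to the yield (the left inequality) and the vanishing of the cost (the right equality); the strict middle inequality is just the pathological hypothesis.

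For the yield bound I would first push the reference wallet down to the trivial resource. Since $\mathcal C$ is fair for the poor, Proposition~\ref{prop:balance_decreasing} gives that the balance is non-increasing in the available currency, so with $\val(\Omega)=0\leq\val(C)$ we obtain $\BalanceII{V}{V}{\Omega}\geq\BalanceII{V}{V}{C}>0$; note that $\Omega$ lies within the fairness bounds precisely because the pathological gain $\Delta>0$ renders the affordability constraint $\val(\cdot)\geq-\Delta$ vacuous. I would then unfold the definition: as $V\cap\Omega=V$ and $\val(\Omega)=0$, one has $\BalanceII{V}{V}{\Omega}=\sup_{C'}\{\val(C'):V\to V\cap C'\}$. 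For any admissible $C'$, passing to a superset is always free, so $V\cap C'\to C'$ and hence $V\to C'$ by transitivity, giving $\val(C')\leq\Yield{V}$. Taking the supremum yields $\Yield{V}\geq\BalanceII{V}{V}{\Omega}\geq\BalanceII{V}{V}{C}$, which also shows $\Yield{V}>0$. This step genuinely needs fairness: the analogous bound fails if one tries to forget directly from $V\cap C$, since $V\cap C\to V$ runs the wrong way, so routing through $\Omega$ is essential.

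The right equality $\Cost{V}=0$ is the delicate part, and I expect it to be the main obstacle. Since values are non-negative, $\Cost{V}\geq0$ is automatic, so the content is producing $V$ at no net charge. The mechanism I would invoke is the operational meaning of fairness for the poor as free borrowing: borrow just enough currency to buy $V$ at its cost, collect the pathological income $V\to V\cap C''$ with $\val(C'')>0$ furnished by the previous paragraph, and repay the loan out of that income, so that $V$ is obtained for free. Making this rigorous requires combining a near-optimal purchase $C_c\to V$ with the income process and sliding the starting wallet towards $\Omega$ using fairness. The obstruction is exactly the boundary condition: the income step is a net gain ($\Delta>0$) and can be shifted down freely, but the purchase step is a net loss ($\Delta=-\Cost{V}$) whose affordability bound $\val(C_1')\geq\Cost{V}$ forbids starting below the cost, i.e. the income from $V$ can only be harvested once $V$ is already held. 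I would break this chicken-and-egg either by iterating the borrow-create-repay cycle, each round fully funded by the retained income of the previous one, and passing to the limit, or by arguing inside the weakened currency of Appendix~\ref{appendix:pathological} where monotonicity of value is dropped, so that the induced route $C_c\to V\to V\cap C''$ already witnesses a value-zero path to $V$. The care needed is to keep these limits within the fairness bounds without tacitly restoring the monotonicity whose breakdown is the whole point; once $\Cost{V}=0$ is secured, comparing it with $\Yield{V}>0$ and Proposition~\ref{prop:cost_bigger_yield} exhibits the promised collapse of the currency order.
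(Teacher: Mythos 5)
Your proposal is correct and follows the paper's own proof: the yield bound is argued identically (use good-for-the-poor to push the balance down to $\Omega$, then note $V \cap C' \to C'$ so that $V \to C'$), and your first resolution of the chicken-and-egg --- iterating the bundled purchase-plus-income cycle, with each round's required starting capital lowered by the retained income --- is precisely the paper's argument for $\Cost{V}=0$. The one refinement is that no passage to the limit is needed: each round lowers the value of a currency resource achieving $C \to V$ by the fixed $\eps>0$, so after finitely many rounds that value drops below $\eps$, at which point the bundled process $C'_c \to V \cap C_\eps$ has positive net gain, its affordability bound $\val(C_1')\geq -\Delta$ becomes vacuous, and good-for-the-poor lets you start from $\Omega$ itself, yielding $\Omega \to V$ and hence $\Cost{V}=0$ exactly.
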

\begin{proof}
We will first show that if the currency is good for the poor and $\BalanceII{V}{V}{C}>0$ for some $C\in\mathcal C$, then $\Cost{V}=0$.

Namely, let $C_c$ be a currency resource such that $C_c \to V$. But then, if
$\BalanceII{V}{V}{C} > 0$ for some $C\in\mathcal C$, then also
$\BalanceII{V}{V}{\Omega} > 0 $
due to fairness for the poor. 
This means that there is a $C_\eps$ with $\val C_\eps = \eps > 0$ such that we find
$$ C_c \to V \to V \cap C_\eps .$$
Hence, because the theory is good for the poor in fact there is a resource $C'_c\in\mathcal C$
with $\val C'_c = \val C_c - \eps $
such that 
$$ C'_c \to V . $$
We can repeat the argument many times until we 
reach $\val(C'_c) < \eps$. In this case $\BalanceII{\Omega}{V}{C'_c}>0$ and so clearly also 
$\Cost{V}=0$.

To show that $\Yield{V}\geq\BalanceII{V}{V}{C}$, we again use that  
$\BalanceII{V}{V}{\Omega}\geq \BalanceII{V}{V}{C}>0$, and 
also that $ V \to V \cap C'$ is more difficult to perform than $ V \to  C' $ for any $V$ and $C'$ | so trivially
$\Yield{V}\geq\BalanceII{V}{V}{C}$.

\end{proof}

\subsubsection{Both directions}

\begin{corollary}[Pathological resources in a Midas theory]
\label{cor:Midas_I}
In a theory with a fair currency, for any resource $V$ in the target
$$ \Cost{V} < \Yield{V} \iff \Balance{V}{V}>0 .$$
\end{corollary}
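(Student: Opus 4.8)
The plan is to establish the two directions of the equivalence separately, each by invoking one of the single-direction fairness results already proved, and to glue them together using the fact that for a fair currency the balance $\BalanceII{V}{W}{C}$ does not depend on the available currency $C$ (Proposition~\ref{prop:balance_unique}), so that it coincides with the single quantity $\Balance{V}{W}$. The crucial preliminary observation is that a fair currency is, by Definition~\ref{def:fairness}, simultaneously good for the rich and good for the poor, so both Propositions~\ref{prop:condition_balance_positive} and~\ref{prop:pathological_II} are available to us.

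For the forward direction I would assume $\Cost{V} < \Yield{V}$. Since the currency is good for the rich, Proposition~\ref{prop:condition_balance_positive} immediately yields $\BalanceII{V}{V}{C} > 0$ for all $C \in \mathcal C$. Invoking Proposition~\ref{prop:balance_unique}, this common value equals $\Balance{V}{V}$, so we conclude $\Balance{V}{V} > 0$, as desired.

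For the backward direction I would assume $\Balance{V}{V} > 0$. By Proposition~\ref{prop:balance_unique} we have $\BalanceII{V}{V}{C} = \Balance{V}{V} > 0$ for any $C$ within the bounds specified by fairness, and in particular we may pick such a $C$. Since the currency is good for the poor, Proposition~\ref{prop:pathological_II} then applies and gives $\Yield{V} \geq \BalanceII{V}{V}{C} > 0 = \Cost{V}$, so that $\Cost{V} < \Yield{V}$. Combining the two implications closes the equivalence.

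The main (and really only) obstacle is bookkeeping around the quantifiers: Proposition~\ref{prop:condition_balance_positive} produces positivity of the balance \emph{for all} starting currencies $C$, whereas Proposition~\ref{prop:pathological_II} hypothesizes positivity \emph{for some} $C$, and one must check that passing between these formulations introduces no gap. This is exactly what balance-independence for fair currencies (Proposition~\ref{prop:balance_unique}) guarantees, so the step is sound provided we are careful to stay within the admissible bounds on $C$ from Definition~\ref{def:fairness}; no additional computation is needed.
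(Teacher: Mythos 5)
Your proposal is correct and follows essentially the same route as the paper, whose proof simply states that the corollary follows directly from Propositions~\ref{prop:pathological_II} and~\ref{prop:condition_balance_positive}; you use the same two propositions for the same two directions, merely making explicit the glue step via Proposition~\ref{prop:balance_unique} that the paper leaves implicit. No gap.
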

\begin{proof}
This follows directly from Propositions~\ref{prop:pathological_II} and~\ref{prop:condition_balance_positive}.

\end{proof}

\begin{corollary}
\label{cor:Midas_II}
In a theory with a fair currency and a Midas resource $V$, it holds that
$$ \Yield{V} \geq c_{\sup} - \alpha $$
and
$$ \Balance{V}{V} \geq c_{\sup} - \alpha $$
for some margin $\alpha$ that can be chosen as $\alpha=\val C_\alpha$ for any currency resource $C_\alpha$ with very small value (but above zero). \\
Furthermore, such a theory is essentially trivial in the sense that also
$$ \Balance{\Omega}{\Omega} \geq c_{\sup} - \alpha .$$
\end{corollary}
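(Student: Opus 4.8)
The plan is to assemble this corollary from the pathological-case propositions already proved for each direction of fairness, exploiting that a fair currency is simultaneously good for the rich and good for the poor. I would first record that, by Corollary~\ref{cor:Midas_I}, a Midas resource $V$ is exactly one with $\Cost{V}<\Yield{V}$, equivalently $\Balance{V}{V}>0$.

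For the bound on the yield I would invoke Proposition~\ref{prop:condition_balance_positive} (good for the rich), which already delivers $\Yield{V}\geq c_{\sup}-\alpha$ with $\alpha$ the value of the smallest currency resource above zero; this is the first claimed inequality verbatim. For the bound on the balance I would use that $\Cost{V}<\Yield{V}$ forces $\BalanceII{V}{V}{\Omega}>0$ (again Proposition~\ref{prop:condition_balance_positive}), so Proposition~\ref{prop:pathological_I} applies at $C=\Omega$. Since $\val(\Omega)=0$, that proposition yields $\BalanceII{V}{V}{\Omega}\geq c_{\sup}-\alpha$, and because $\Balance{V}{V}=\sup_{C}\BalanceII{V}{V}{C}\geq\BalanceII{V}{V}{\Omega}$ the second inequality follows.

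The essential-triviality statement $\Balance{\Omega}{\Omega}\geq c_{\sup}-\alpha$ is where both directions of fairness must be combined, and this is the step I expect to require the most care. The idea is that a Midas resource is both free to acquire and free to discard, so it can act as a catalyst that pumps the currency up even when one starts and ends with nothing. Concretely, I would first use that a fair currency is good for the poor, so Proposition~\ref{prop:pathological_II} gives $\Cost{V}=0$; since for fair currencies the cost is attainable (as noted in the proof of Theorem~\ref{thm:balance_cost_yield}), and $\Omega$ is the unique currency resource of value $0$ up to equivalence, this means $\Omega\to V$. Next I would take the high-value endpoint $C''$ produced by the balance process $V\to V\cap C''$ with $\val(C'')\geq c_{\sup}-\alpha$, which the previous paragraph guarantees. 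Finally I would discard $V$ by the trivial forgetting step $V\cap C''\subseteq\Omega\cap C''=C''$, so $V\cap C''\to C''$. Chaining these gives
\begin{align*}
\Omega\to V\to V\cap C''\to C'',
\end{align*}
hence $\Omega\to C''$ and $\BalanceII{\Omega}{\Omega}{\Omega}\geq\val(C'')\geq c_{\sup}-\alpha$, from which $\Balance{\Omega}{\Omega}\geq c_{\sup}-\alpha$.

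The main obstacle is bookkeeping around attainability and the fairness boundaries rather than any deep difficulty: I must ensure the cost and balance values used are genuinely achieved and that each fairness invocation stays within the admissible range of Definition~\ref{def:fairness}, so that the endpoint $C''$ with $\val(C'')\geq c_{\sup}-\alpha$ really exists. The forgetting step $V\cap C''\subseteq C''$ is automatic because $\Omega$ is the whole state space, so no extra symmetric-independence assumption on discarding the target is needed.
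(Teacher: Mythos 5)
Your proof of the first two inequalities is exactly the paper's: both follow from Propositions~\ref{prop:condition_balance_positive} and~\ref{prop:pathological_I} applied with $C=\Omega$, just as you write. For the triviality statement $\Balance{\Omega}{\Omega}\geq c_{\sup}-\alpha$ you take a genuinely different route. The paper uses the gap $\Cost{V}<\Yield{V}$ to pick achievable $C_c,C_y\in\mathcal C$ with $C_c\to V\to C_y$ and $\val(C_c)<\val(C_y)$; since $\Omega\cap C_c=C_c$ and $\Omega\cap C_y=C_y$, this immediately gives $\BalanceII{\Omega}{\Omega}{C_c}>0$, and fairness (Proposition~\ref{prop:balance_unique}) together with Proposition~\ref{prop:pathological_I} applied to the target resource $\Omega$ then lifts this to the stated bound. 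You instead use goodness-for-the-poor (Proposition~\ref{prop:pathological_II}) to get $\Cost{V}=0$, invoke attainability of the cost for fair currencies to conclude $\Omega\to V$, and then chain $\Omega\to V\to V\cap C''\to C''$, discarding $V$ at the end by the set inclusion $V\cap C''\subseteq C''=\Omega\cap C''$. Both arguments are sound. Your version has the merit of producing the full quantitative bound $c_{\sup}-\alpha$ in one explicit chain, whereas the paper's written proof only derives $\Balance{\Omega}{\Omega}>0$ explicitly and leaves the amplification implicit; your discard step is also legitimately free, since inclusion of specifications always implies convertibility. The price is that your first link $\Omega\to V$ hinges on the attainability claim that the paper asserts (without proof) inside the proof of Theorem~\ref{thm:balance_cost_yield}, while the paper's route needs no attainability at all: the strict inequality $\Cost{V}<\Yield{V}$ by itself guarantees achievable $C_c,C_y$ with a positive value gap. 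If you want to keep your structure but shed that dependence, start the chain at a cheap achievable $C_\epsilon\to V$ with $\val(C_\epsilon)$ arbitrarily small instead of at $\Omega$, and absorb $\val(C_\epsilon)$ into the margin $\alpha$.
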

\begin{proof}
The first two statements follow immediately from Proposition~\ref{prop:condition_balance_positive} and Proposition~\ref{prop:pathological_I} by choosing $C=\Omega$. For the last statement, note that since we can find $C_c,C_y\in\mathcal C$ such that
$$ C_c \to V \to C_y $$ with 
$\val C_c < \val C_y $, 
$$ \BalanceII{\Omega}{\Omega}{C_c}>0 .$$
But then due to fairness this is true for the unique balance,
$$ \Balance{\Omega}{\Omega}>0 .$$
\end{proof}

\section{A short note on monotones}
\label{appendix:monotones}

The question of whether or not a particular resource can be transformed into another   can in general be difficult to answer. This may already be the case for traditional resource theories, where the task corresponds to identifying the pre-order on state space, but is even more involved for specifications as one may have to deal with large sets.

Just like with currencies, often we can decide whether or not $V \to W$ more easily by means of \emph{monotones}. These are functions on resources that are easy to calculate (otherwise they are not that useful) and monotonous under $\to$. Then, if a monotonously decreasing function $f$ (such as the free energy for a thermodynamic process), if $f(V)\geq f(W)$ does not hold, we immediately know that $V\nrightarrow W$. 

\begin{definition}[Monotones]
Let $(S^\Omega, \cT)$ be a resource theory. 
A \emph{monotone} is a function $M: S^\Omega \to \mathbb R$ such that
$$V \to W \implies M(V) \geq M(W).$$
A \emph{complete family of monotones} is a family of real functions $\{M_i\}_{i \in \I }$ such that
\begin{align*}
V \to W \Leftrightarrow M_i(V) &\geq M_i(W).
\end{align*}
Note: we could equally have $\leq$ instead of $\geq$ in the expressions above.
\end{definition}

A complete family of monotones is hence a set of monotones that completely characterizes the pre-order $\to$. This, however, can in principle be an infinitely large set, and so it is not always a useful means of deciding whether $V\to W$ in the resource theory. Examples of a complete family of monotones would be the R\'enyi entropies for the trumping relation $\prec_T$ or the $\alpha$-free energies for thermodynamic processes with exact catalysis~\cite{Brandao2013b}.

Monotones are related to currencies: if a theory has a currency, $\Cost{\cdot}$ and $\Yield{\cdot}$ are always monotones (Theorem~\ref{thm:currency_conditions}). 
However, in the other direction not any monotone would constitute a good characterization of value. 
This is 
because the actual numbers assigned through the monotone need not have any operational significance beyond their monotonicity. For example, intuitively we would expect that the difference in value of two resources should be linked to a \emph{cost} of transforming one into the other: supplying another resource with the missing value should enable an agent to implement the transformation. We would thus expect a value function to satisfy certain (sub- or super-) additivity properties, depending on its exact operational meaning. This is of course not guaranteed for general monotones. 

On the other hand, while currencies ensure an operational meaning of the monotones $\Cost{\cdot}$ and $\Yield{\cdot}$, the cost and yield of resources in general do  not constitute a complete set of monotones, as the theory may have a richer  structure. If the currency is tight, however, the monotonicity of the cost of resources actually constitutes a sufficient condition for resource conversion (Theorem~\ref{thm:tightness_order}).

% ............ List of theorems ...........

\newpage

\renewcommand{\lofname}{List of theorems and propositions}

\listoftheorems[ignoreall,
show={theorem,proposition}]

%\newpage 
\renewcommand{\lofname}{List of definitions}

\listoftheorems[ignoreall,
show={definition}]

%............. Bibliography ..............
%\newpage

%\bibliographystyle{naturemag}
%\bibliography{resources}

\begin{thebibliography}{10}
\expandafter\ifx\csname url\endcsname\relax
  \def\url#1{\texttt{#1}}\fi
\expandafter\ifx\csname urlprefix\endcsname\relax\def\urlprefix{URL }\fi
\providecommand{\bibinfo}[2]{#2}
\providecommand{\eprint}[2][]{\url{#2}}

\bibitem{DelRio2015}
\bibinfo{author}{del Rio, L.}, \bibinfo{author}{Kraemer, L.} \&
  \bibinfo{author}{Renner, R.}
\newblock \bibinfo{title}{{Resource theories of knowledge}} \bibinfo{pages}{28}
  (\bibinfo{year}{2015}).
\newblock \urlprefix\url{http://arxiv.org/abs/1511.08818}.

\bibitem{Lieb1999}
\bibinfo{author}{Lieb, E.~H.} \& \bibinfo{author}{Yngvason, J.}
\newblock \bibinfo{title}{{The physics and mathematics of the second law of
  thermodynamics}} \bibinfo{pages}{1--101} (\bibinfo{year}{1999}).
\newblock \eprint{9708200v2}.

\bibitem{Lieb2003}
\bibinfo{author}{Lieb, E.~H.} \& \bibinfo{author}{Yngvason, J.}
\newblock \bibinfo{title}{{The Mathematical Structure of the Second Law of
  Thermodynamics}}  (\bibinfo{year}{2003}).
\newblock \eprint{0204007v2}.

\bibitem{Lieb2013}
\bibinfo{author}{Lieb, E.~H.} \& \bibinfo{author}{Yngvason, J.}
\newblock \bibinfo{title}{{The entropy concept for non-equilibrium states}}
  (\bibinfo{year}{2013}).
\newblock \urlprefix\url{http://arxiv.org/abs/1305.3912}.
\newblock \eprint{1305.3912}.

\bibitem{Lieb2014}
\bibinfo{author}{Lieb, E.~H.} \& \bibinfo{author}{Yngvason, J.}
\newblock \bibinfo{title}{Entropy meters and the entropy of non-extensive
  systems}.
\newblock \emph{\bibinfo{journal}{Proceedings of the Royal Society of London A:
  Mathematical, Physical and Engineering Sciences}}
  \textbf{\bibinfo{volume}{470}} (\bibinfo{year}{2014}).

\bibitem{Gallego2015}
\bibinfo{author}{Gallego, R.}, \bibinfo{author}{Eisert, J.} \&
  \bibinfo{author}{Wilming, H.}
\newblock \bibinfo{title}{{Defining work from operational principles}}
  \bibinfo{pages}{20} (\bibinfo{year}{2015}).
\newblock \urlprefix\url{http://arxiv.org/abs/1504.05056}.
\newblock \eprint{1504.05056}.

\bibitem{Coecke2014}
\bibinfo{author}{Coecke, B.}, \bibinfo{author}{Fritz, T.} \&
  \bibinfo{author}{Spekkens, R.~W.}
\newblock \bibinfo{title}{{A mathematical theory of resources}}
  \bibinfo{pages}{31} (\bibinfo{year}{2014}).
\newblock \urlprefix\url{http://arxiv.org/abs/1409.5531}.
\newblock \eprint{1409.5531}.

\bibitem{Fritz2015}
\bibinfo{author}{Fritz, T.}
\newblock \bibinfo{title}{{The mathematical structure of theories of resource
  convertibility I}} \bibinfo{pages}{67} (\bibinfo{year}{2015}).
\newblock \urlprefix\url{http://arxiv.org/abs/1504.03661}.
\newblock \eprint{1504.03661}.

\bibitem{Bennet1996}
\bibinfo{author}{Bennett, C.~H.} \bibinfo{author}{Brassard, G.}    \bibinfo{author}{Popescu, S.}  \bibinfo{author}{Schumacher, B.}  \bibinfo{author}{Smolin, J.~A.}  \bibinfo{author}{Wootters, W.~ K.} 
\newblock \bibinfo{title}{Purification of Noisy Entanglement and Faithful  Teleportation via Noisy Channels}.
\newblock \emph{\bibinfo{journal}{Phys.\ Rev.\ Lett.}}
  \textbf{\bibinfo{volume}{76}}, \bibinfo{pages}{722--726}
  (\bibinfo{year}{1996}).
\newblock \urlprefix\url{http://link.aps.org/doi/10.1103/PhysRevLett.76.722}.

\bibitem{Bennet1996a}
\bibinfo{author}{Bennett, C.~H.}, \bibinfo{author}{Bernstein, H.~J.},
  \bibinfo{author}{Popescu, S.} \& \bibinfo{author}{Schumacher, B.}
\newblock \bibinfo{title}{{Concentrating partial entanglement by local
  operations}}.
\newblock \emph{\bibinfo{journal}{Physical Review A}}
  \textbf{\bibinfo{volume}{53}}, \bibinfo{pages}{2046--2052}
  (\bibinfo{year}{1996}).
\newblock
  \urlprefix\url{http://link.aps.org/doi/10.1103/PhysRevA.53.2046}.


\bibitem{Popescu1997}
\bibinfo{author}{Popescu, S.} \& \bibinfo{author}{Rohrlich, D.}
\newblock \bibinfo{title}{{Thermodynamics and the measure of entanglement}}.
\newblock \emph{\bibinfo{journal}{Phys.\ Rev.\ A}}
  \textbf{\bibinfo{volume}{56}}, \bibinfo{pages}{R3319----R3321}
  (\bibinfo{year}{1997}).
\newblock \eprint{9610044}.

\bibitem{Horodecki2009}
\bibinfo{author}{Horodecki, R.}, \bibinfo{author}{Pawel, H.},
  \bibinfo{author}{Horodecki, M.} \& \bibinfo{author}{Horodecki, K.}
\newblock \bibinfo{title}{{Quantum entanglement}}.
\newblock \emph{\bibinfo{journal}{Rev. Mod. Phys.}}
  \textbf{\bibinfo{volume}{81}}, \bibinfo{pages}{865--942}
  (\bibinfo{year}{2009}).
\newblock \urlprefix\url{http://link.aps.org/doi/10.1103/RevModPhys.81.865}.

\bibitem{Brandao2011a}
\bibinfo{author}{Brand\~{a}o, F. G. S.~L.}, \bibinfo{author}{Horodecki, M.},
  \bibinfo{author}{Oppenheim, J.}, \bibinfo{author}{Renes, J.~M.} \&
  \bibinfo{author}{Spekkens, R.~W.}
\newblock \bibinfo{title}{{Resource Theory of Quantum States Out of Thermal
  Equilibrium}}.
\newblock \emph{\bibinfo{journal}{Physical Review Letters}}
  \textbf{\bibinfo{volume}{111}}, \bibinfo{pages}{12} (\bibinfo{year}{2013}).
\newblock \urlprefix\url{http://link.aps.org/doi/10.1103/PhysRevLett.111.250404
  http://arxiv.org/abs/1111.3882}.

\bibitem{Horodecki2003}
\bibinfo{author}{Horodecki, M.}, \bibinfo{author}{Horodecki, P.} \&
  \bibinfo{author}{Oppenheim, J.}
\newblock \bibinfo{title}{{Reversible transformations from pure to mixed states
  and the unique measure of information}}.
\newblock \emph{\bibinfo{journal}{Physical Review A}}
  \textbf{\bibinfo{volume}{67}}, \bibinfo{pages}{062104}
  (\bibinfo{year}{2003}).
\newblock \urlprefix\url{http://link.aps.org/doi/10.1103/PhysRevA.67.062104}.

\bibitem{Janzing2000}
\bibinfo{author}{Janzing, D.}, \bibinfo{author}{Wocjan, P.},
  \bibinfo{author}{Zeier, R.}, \bibinfo{author}{Geiss, R.} \&
  \bibinfo{author}{Beth, T.}
\newblock \bibinfo{title}{{The thermodynamic cost of reliability and low
  temperatures: Tightening Landauer's principle and the Second Law}}
  \bibinfo{pages}{18} (\bibinfo{year}{2000}).
\newblock \urlprefix\url{http://arxiv.org/abs/quant-ph/0002048}.

\bibitem{Horodecki2013a}
\bibinfo{author}{Horodecki, M.} \& \bibinfo{author}{Oppenheim, J.}
\newblock \bibinfo{title}{{Fundamental limitations for quantum and nanoscale
  thermodynamics}}.
\newblock \emph{\bibinfo{journal}{Nature communications}}
  \textbf{\bibinfo{volume}{4}}, \bibinfo{pages}{2059} (\bibinfo{year}{2013}).
\newblock \urlprefix\url{http://www.ncbi.nlm.nih.gov/pubmed/23800725}.

\bibitem{Renes2014}
\bibinfo{author}{Renes, J.~M.}
\newblock \bibinfo{title}{{Work Cost of Thermal Operations in Quantum and Nano
  Thermodynamics}}  (\bibinfo{year}{2014}).
\newblock \urlprefix\url{http://arxiv.org/abs/1402.3496}.
\newblock \eprint{1402.3496}.

\bibitem{Goold2015}
\bibinfo{author}{Goold, J.}, \bibinfo{author}{Huber, M.},
  \bibinfo{author}{Riera, A.}, \bibinfo{author}{del Rio, L.} \&
  \bibinfo{author}{Skrzypczyk, P.}
\newblock \bibinfo{title}{{The role of quantum information in thermodynamics
  --- a topical review}} \bibinfo{pages}{31} (\bibinfo{year}{2015}).
\newblock \urlprefix\url{http://arxiv.org/abs/1505.07835}.

\bibitem{Vaccaro2003}
\bibinfo{author}{Vaccaro, J.~A.}, \bibinfo{author}{Anselmi, F.} \&
  \bibinfo{author}{Wiseman, H.~M.}
\newblock \bibinfo{title}{Entanglement of identical particles and reference
  phase uncertainty}.
\newblock \emph{\bibinfo{journal}{International Journal of Quantum
  Information}} \textbf{\bibinfo{volume}{01}}, \bibinfo{pages}{427--441}
  (\bibinfo{year}{2003}).
\newblock
  \urlprefix\url{http://www.worldscientific.com/doi/abs/10.1142/S0219749903000346}.

\bibitem{Vaccaro2008}
\bibinfo{author}{Vaccaro, J.~A.}, \bibinfo{author}{Anselmi, F.},
  \bibinfo{author}{Wiseman, H.~M.} \& \bibinfo{author}{Jacobs, K.}
\newblock \bibinfo{title}{Tradeoff between extractable mechanical work,
  accessible entanglement, and ability to act as a reference system, under
  arbitrary superselection rules}.
\newblock \emph{\bibinfo{journal}{Physical Review A}}
  \textbf{\bibinfo{volume}{77}}, \bibinfo{pages}{032114}
  (\bibinfo{year}{2008}).
\newblock
  \urlprefix\url{http://journals.aps.org/pra/abstract/10.1103/PhysRevA.77.032114}.

\bibitem{Bartlett2007}
\bibinfo{author}{Bartlett, S.~D.}, \bibinfo{author}{Rudolph, T.} \&
  \bibinfo{author}{Spekkens, R.~W.}
\newblock \bibinfo{title}{{Reference frames, superselection rules, and quantum
  information}}.
\newblock \emph{\bibinfo{journal}{Reviews of Modern Physics}}
  \textbf{\bibinfo{volume}{79}}, \bibinfo{pages}{555--609}
  (\bibinfo{year}{2007}).
\newblock
  \urlprefix\url{http://journals.aps.org/rmp/abstract/10.1103/RevModPhys.79.555}.

\bibitem{Gour2008}
\bibinfo{author}{Gour, G.} \& \bibinfo{author}{Spekkens, R.~W.}
\newblock \bibinfo{title}{{The resource theory of quantum reference frames:
  manipulations and monotones}}.
\newblock \emph{\bibinfo{journal}{New Journal of Physics}}
  \textbf{\bibinfo{volume}{10}}, \bibinfo{pages}{033023}
  (\bibinfo{year}{2008}).
\newblock
  \urlprefix\url{http://stacks.iop.org/1367-2630/10/i=3/a=033023}.

\bibitem{Marvian2013}
\bibinfo{author}{Marvian, I.} \& \bibinfo{author}{Spekkens, R.~W.}
\newblock \bibinfo{title}{{Modes of asymmetry: the application of harmonic
  analysis to symmetric quantum dynamics and quantum reference frames}}
  \bibinfo{pages}{1--23} (\bibinfo{year}{2013}).
\newblock \urlprefix\url{http://arxiv.org/abs/1312.0680}.
\newblock \eprint{arXiv:1312.0680v1}.

\bibitem{Marvian2014}
\bibinfo{author}{Marvian, I.} \& \bibinfo{author}{Spekkens, R.~W.}
\newblock \bibinfo{title}{{Extending Noether's theorem by quantifying the
  asymmetry of quantum states.}}
\newblock \emph{\bibinfo{journal}{Nature communications}}
  \textbf{\bibinfo{volume}{5}}, \bibinfo{pages}{3821} (\bibinfo{year}{2014}).
\newblock \urlprefix\url{http://arxiv.org/abs/1404.3236}.
\newblock \eprint{1404.3236}.

\bibitem{Marvian2015}
\bibinfo{author}{Marvian, I.}, \bibinfo{author}{Spekkens, R.~W.} \&
  \bibinfo{author}{Zanardi, P.}
\newblock \bibinfo{title}{{Quantum speed limits, coherence and asymmetry}}
  \bibinfo{pages}{14} (\bibinfo{year}{2015}).
\newblock \urlprefix\url{http://arxiv.org/abs/1510.06474}.

\bibitem{Baumgratz2013}
\bibinfo{author}{Baumgratz, T.}, \bibinfo{author}{Cramer, M.} \&
  \bibinfo{author}{Plenio, M.~B.}
\newblock \bibinfo{title}{{Quantifying Coherence}}.
\newblock \emph{\bibinfo{journal}{Arxiv preprint}} \bibinfo{pages}{10 pages}
  (\bibinfo{year}{2013}).
\newblock \urlprefix\url{http://arxiv.org/abs/1311.0275}.
\newblock \eprint{1311.0275}.

\bibitem{Lostaglio2014b}
\bibinfo{author}{Lostaglio, M.}, \bibinfo{author}{Korzekwa, K.},
  \bibinfo{author}{Jennings, D.} \& \bibinfo{author}{Rudolph, T.}
\newblock \bibinfo{title}{{Quantum coherence, time-translation symmetry and
  thermodynamics}} \bibinfo{pages}{10} (\bibinfo{year}{2014}).
\newblock \urlprefix\url{http://arxiv.org/abs/1410.4572}.
\newblock \eprint{1410.4572}.

\bibitem{Winter2015}
\bibinfo{author}{Winter, A.} \& \bibinfo{author}{Yang, D.}
\newblock \bibinfo{title}{{Operational Resource Theory of Coherence}}
  \bibinfo{pages}{10} (\bibinfo{year}{2015}).
\newblock \urlprefix\url{http://arxiv.org/abs/1506.07975}.

\bibitem{Lostaglio2015}
\bibinfo{author}{Lostaglio, M.}, \bibinfo{author}{Jennings, D.} \&
  \bibinfo{author}{Rudolph, T.}
\newblock \bibinfo{title}{{Description of quantum coherence in thermodynamic
  processes requires constraints beyond free energy.}}
\newblock \emph{\bibinfo{journal}{Nature communications}}
  \textbf{\bibinfo{volume}{6}}, \bibinfo{pages}{6383} (\bibinfo{year}{2015}).
\newblock
  \urlprefix\url{http://www.nature.com/ncomms/2015/150310/ncomms7383/full/ncomms7383.html}.

\bibitem{Korzekwa2015}
\bibinfo{author}{Korzekwa, K.}, \bibinfo{author}{Lostaglio, M.},
  \bibinfo{author}{Oppenheim, J.} \& \bibinfo{author}{Jennings, D.}
\newblock \bibinfo{title}{{The extraction of work from quantum coherence}}
  \bibinfo{pages}{14} (\bibinfo{year}{2015}).
\newblock \urlprefix\url{http://arxiv.org/abs/1506.07875}.
\newblock \eprint{1506.07875}.

\bibitem{Matera2015}
\bibinfo{author}{Matera, J.~M.}, \bibinfo{author}{Egloff, D.},
  \bibinfo{author}{Killoran, N.} \& \bibinfo{author}{Plenio, M.~B.}
\newblock \bibinfo{title}{{On the Resource Theory of Control of Quantum
  Systems}} \bibinfo{pages}{8} (\bibinfo{year}{2015}).
\newblock \urlprefix\url{http://arxiv.org/abs/1512.07486}.

\bibitem{Caratheodory1909}
\bibinfo{author}{Carath\'{e}odory, C.}
\newblock \bibinfo{title}{{Untersuchungen \"{u}ber die Grundlagen der
  Thermodynamik}}.
\newblock \emph{\bibinfo{journal}{Mathematische Annalen}}
  \textbf{\bibinfo{volume}{67}}, \bibinfo{pages}{355--386}
  (\bibinfo{year}{1909}).
\newblock \urlprefix\url{http://link.springer.com/10.1007/BF01450409}.

\bibitem{Giles1964}
\bibinfo{author}{Giles, R.}
\newblock \emph{\bibinfo{title}{{Mathematical Foundations of Thermodynamics}}}
  (\bibinfo{publisher}{Elsevier}, \bibinfo{year}{1964}).
\newblock
  \urlprefix\url{http://www.sciencedirect.com/science/article/pii/B9780080100715500014}.

\bibitem{Skrzypczyk2014}
\bibinfo{author}{Skrzypczyk, P.}, \bibinfo{author}{Short, A.~J.} \&
  \bibinfo{author}{Popescu, S.}
\newblock \bibinfo{title}{{Work extraction and thermodynamics for individual
  quantum systems.}}
\newblock \emph{\bibinfo{journal}{Nature communications}}
  \textbf{\bibinfo{volume}{5}}, \bibinfo{pages}{4185} (\bibinfo{year}{2014}).
\newblock
  \urlprefix\url{http://www.nature.com/ncomms/2014/140627/ncomms5185/full/ncomms5185.html}.

\bibitem{Aberg2013}
\bibinfo{author}{Aberg, J.}
\newblock \bibinfo{title}{{Catalytic Coherence}}  (\bibinfo{year}{2013}).
\newblock \urlprefix\url{http://arxiv.org/abs/1304.1060}.
\newblock \eprint{1304.1060}.

\bibitem{Masanes2014}
\bibinfo{author}{Masanes, L.} \& \bibinfo{author}{Oppenheim, J.}
\newblock \bibinfo{title}{{A derivation (and quantification) of the third law
  of thermodynamics}}.
\newblock \emph{\bibinfo{journal}{arXiv preprint arXiv:1412.3828}}
  \bibinfo{pages}{6} (\bibinfo{year}{2014}).
\newblock \urlprefix\url{http://arxiv.org/abs/1412.3828}.
\newblock \eprint{1412.3828}.

\bibitem{Alhambra2016}
\bibinfo{author}{Alhambra, A.~M.}, \bibinfo{author}{Masanes, L.},
  \bibinfo{author}{Oppenheim, J.} \& \bibinfo{author}{Perry, C.}
\newblock \bibinfo{title}{{The second law of quantum thermodynamics as an
  equality}} \bibinfo{pages}{9} (\bibinfo{year}{2016}).
\newblock \urlprefix\url{http://arxiv.org/abs/1601.05799}.

\bibitem{Brandao2015}
\bibinfo{author}{Brand\~{a}o, F. G. S.~L.} \& \bibinfo{author}{Gour, G.}
\newblock \bibinfo{title}{{The general structure of quantum resource theories}}
  \bibinfo{pages}{5} (\bibinfo{year}{2015}).
\newblock \urlprefix\url{http://arxiv.org/abs/1502.03149}.
\newblock \eprint{1502.03149}.

\bibitem{Halpern2014b}
\bibinfo{author}{Halpern, N.~Y.}
\newblock \bibinfo{title}{{Beyond heat baths II: Framework for generalized
  thermodynamic resource theories}} \bibinfo{pages}{31} (\bibinfo{year}{2014}).
\newblock \urlprefix\url{http://arxiv.org/abs/1409.7845}.
\newblock \eprint{1409.7845}.

\bibitem{Weilenmann2015}
\bibinfo{author}{Weilenmann, M.}, \bibinfo{author}{Kr\"{a}mer, L.},
  \bibinfo{author}{Faist, P.} \& \bibinfo{author}{Renner, R.}
\newblock \bibinfo{title}{{Axiomatic relation between thermodynamic and
  information-theoretic entropies}} \bibinfo{pages}{15} (\bibinfo{year}{2015}).
\newblock \urlprefix\url{http://arxiv.org/abs/1501.06920}.
\newblock \eprint{1501.06920}.

\bibitem{Horodecki2011}
\bibinfo{author}{Horodecki, M.} \& \bibinfo{author}{Oppenheim, J.}
\newblock \bibinfo{title}{{Fundamental limitations for quantum and nano
  thermodynamics}}  (\bibinfo{year}{2011}).
\newblock \urlprefix\url{http://arxiv.org/abs/1111.3834}.

\bibitem{Faist2012}
\bibinfo{author}{Faist, P.}, \bibinfo{author}{Dupuis, F.},
  \bibinfo{author}{Oppenheim, J.} \& \bibinfo{author}{Renner, R.}
\newblock \bibinfo{title}{{A Quantitative Landauer's Principle}}
  \bibinfo{pages}{16} (\bibinfo{year}{2012}).
\newblock \urlprefix\url{http://arxiv.org/abs/1211.1037}.
\newblock \eprint{1211.1037}.

\bibitem{Mendl2008}
\bibinfo{author}{Mendl, C.~B.} \& \bibinfo{author}{Wolf, M.~M.}
\newblock \bibinfo{title}{{Unital Quantum Channels – Convex Structure and
  Revivals of Birkhoff’s Theorem}}.
\newblock \emph{\bibinfo{journal}{Communications in Mathematical Physics}}
  \textbf{\bibinfo{volume}{289}}, \bibinfo{pages}{1057--1086}
  (\bibinfo{year}{2009}).
\newblock \urlprefix\url{http://link.springer.com/10.1007/s00220-009-0824-2}.

\bibitem{Birkhoff1946}
\bibinfo{author}{Birkhoff, G.}
\newblock \bibinfo{title}{{Three Observations on Linear Algebra}}.
\newblock \emph{\bibinfo{journal}{Univ.\ Nac.\ Tucum\'{a}n.\ Rev.\ Ser.\ A}}
  \textbf{\bibinfo{volume}{5}}, \bibinfo{pages}{147--151}
  (\bibinfo{year}{1946}).

\bibitem{Gour2013}
\bibinfo{author}{Gour, G.}, \bibinfo{author}{M\"{u}ller, M.~P.},
  \bibinfo{author}{Narasimhachar, V.}, \bibinfo{author}{Spekkens, R.~W.} \&
  \bibinfo{author}{Halpern, N.~Y.}
\newblock \bibinfo{title}{{The resource theory of informational nonequilibrium
  in thermodynamics}}  (\bibinfo{year}{2013}).
\newblock \urlprefix\url{http://arxiv.org/abs/1309.6586}.
\newblock \eprint{1309.6586}.

\bibitem{Hardy1952}
\bibinfo{author}{Hardy, G.~H.}, \bibinfo{author}{Littlewood, J.~E.} \&
  \bibinfo{author}{P\'{o}lya, G.}
\newblock \emph{\bibinfo{title}{{Inequalities}}} (\bibinfo{publisher}{Cambridge
  University Press}, \bibinfo{year}{1952}).
\newblock
  \urlprefix\url{https://books.google.com/books?id=t1RCSP8YKt8C\&pgis=1}.

\bibitem{Uhlmann1970}
\bibinfo{author}{Uhlmann, A.}
\newblock \bibinfo{title}{{On the Shannon entropy and related functionals on
  convex sets}}.
\newblock \emph{\bibinfo{journal}{Reports on Mathematical Physics}}
  \textbf{\bibinfo{volume}{1}}, \bibinfo{pages}{147--159}
  (\bibinfo{year}{1970}).
\newblock
  \urlprefix\url{http://www.sciencedirect.com/science/article/pii/0034487770900091}.

\bibitem{Ruch1970}
\bibinfo{author}{Ruch, E.} \& \bibinfo{author}{Schoenhofer, A.}
\newblock \bibinfo{title}{{Theorie der Chiralitaetsfunktionen}}.
\newblock \emph{\bibinfo{journal}{Theoretica Chimica Acta}}
  \textbf{\bibinfo{volume}{19}}, \bibinfo{pages}{225--287}
  (\bibinfo{year}{1970}).
\newblock \urlprefix\url{http://link.springer.com/10.1007/BF00532232}.

\bibitem{Ruch1975}
\bibinfo{author}{Ruch, E.}
\newblock \bibinfo{title}{{The diagram lattice as structural principle A. New
  aspects for representations and group algebra of the symmetric group B.
  Definition of classification character, mixing character, statistical order,
  statistical disorder; A general principle for the time}}.
\newblock \emph{\bibinfo{journal}{Theoretica Chimica Acta}}
  \textbf{\bibinfo{volume}{38}}, \bibinfo{pages}{167--183}
  (\bibinfo{year}{1975}).
\newblock \urlprefix\url{http://link.springer.com/10.1007/BF01125896}.

\bibitem{Ruch1976}
\bibinfo{author}{Ruch, E.} \& \bibinfo{author}{Mead, A.}
\newblock \bibinfo{title}{{The principle of increasing mixing character and
  some of its consequences}}.
\newblock \emph{\bibinfo{journal}{Theoretica Chimica Acta}}
  \textbf{\bibinfo{volume}{41}}, \bibinfo{pages}{95--117}
  (\bibinfo{year}{1976}).
\newblock
  \urlprefix\url{http://www.springerlink.com/content/hj46g14p024u2w10/}.

\bibitem{Ruch1978JCP}
\bibinfo{author}{Ruch, E.}, \bibinfo{author}{Schranner, R.} \&
  \bibinfo{author}{Seligman, T.~H.}
\newblock \bibinfo{title}{{The mixing distance}}.
\newblock \emph{\bibinfo{journal}{The Journal of Chemical Physics}}
  \textbf{\bibinfo{volume}{69}}, \bibinfo{pages}{386--392}
  (\bibinfo{year}{1978}).

\bibitem{Ruch1980JMAA}
\bibinfo{author}{Ruch, E.}, \bibinfo{author}{Schranner, R.} \&
  \bibinfo{author}{Seligman, T.~H.}
\newblock \bibinfo{title}{{Generalization of a theorem by Hardy, Littlewood,
  and P\'{o}lya}}.
\newblock \emph{\bibinfo{journal}{Journal of Mathematical Analysis and
  Applications}} \textbf{\bibinfo{volume}{76}}, \bibinfo{pages}{222--229}
  (\bibinfo{year}{1980}).

\bibitem{Horn1985}
\bibinfo{author}{Horn, R.~A.} \& \bibinfo{author}{Johnson, C.~R.}
\newblock \emph{\bibinfo{title}{{Matrix analysis}}}
  (\bibinfo{publisher}{Cambridge University Press}, \bibinfo{year}{1985}).

\bibitem{Joe1990}
\bibinfo{author}{Joe, H.}
\newblock \bibinfo{title}{{Majorization and divergence}}.
\newblock \emph{\bibinfo{journal}{Journal of Mathematical Analysis and
  Applications}} \textbf{\bibinfo{volume}{148}}, \bibinfo{pages}{287--305}
  (\bibinfo{year}{1990}).
\newblock
  \urlprefix\url{http://www.sciencedirect.com/science/article/pii/0022247X9090002W}.

\bibitem{Bhatia1997}
\bibinfo{author}{Bhatia, R.}
\newblock \emph{\bibinfo{title}{{Matrix Analysis}}}.
\newblock Graduate Texts in Mathematics (\bibinfo{publisher}{Springer},
  \bibinfo{year}{1997}).

\bibitem{Nielsen2001}
\bibinfo{author}{Nielsen, M.~A.} \& \bibinfo{author}{Vidal, G.}
\newblock \bibinfo{title}{{Majorization and the interconversion of bipartite
  states}}.
\newblock \emph{\bibinfo{journal}{Quantum Information \& Computation}}
  \textbf{\bibinfo{volume}{1}}, \bibinfo{pages}{76--93} (\bibinfo{year}{2001}).
\newblock \urlprefix\url{http://dl.acm.org/citation.cfm?id=2011326.2011331}.

\bibitem{Marshall2011}
\bibinfo{author}{Marshall, A.~W.}, \bibinfo{author}{Olkin, I.} \&
  \bibinfo{author}{Arnold, B.~C.}
\newblock \emph{\bibinfo{title}{{Inequalities: Theory of Majorization and Its
  Applications}}}.
\newblock Springer Series in Statistics (\bibinfo{publisher}{Springer New
  York}, \bibinfo{address}{New York, NY}, \bibinfo{year}{2011}).
\newblock \urlprefix\url{http://link.springer.com/10.1007/978-0-387-68276-1}.

\bibitem{Brandao2013b}
\bibinfo{author}{Brandao, F. G. S.~L.}, \bibinfo{author}{Horodecki, M.},
  \bibinfo{author}{Ng, N. H.~Y.}, \bibinfo{author}{Oppenheim, J.} \&
  \bibinfo{author}{Wehner, S.}
\newblock \bibinfo{title}{{The second laws of quantum thermodynamics}}
  \bibinfo{pages}{39} (\bibinfo{year}{2013}).
\newblock \urlprefix\url{http://arxiv.org/abs/1305.5278}.
\newblock \eprint{1305.5278}.

\bibitem{Renner2004}
\bibinfo{author}{Renner, R.} \& \bibinfo{author}{Wolf, S.}
\newblock \bibinfo{title}{Smooth {R\'{e}nyi} entropy and applications}.
\newblock In \emph{\bibinfo{booktitle}{Proc.\ International Symposium on
  Information Theory}}, \bibinfo{pages}{233} (\bibinfo{organization}{IEEE},
  \bibinfo{year}{2004}).

\bibitem{Renner2005}
\bibinfo{author}{Renner, R.}
\newblock \emph{\bibinfo{title}{{Security of quantum key distribution}}}.
\newblock Ph.D. thesis, \bibinfo{school}{ETH Zurich} (\bibinfo{year}{2005}).
\newblock \urlprefix\url{http://arxiv.org/abs/quant-ph/0512258}.
\newblock \eprint{0512258}.

\bibitem{Datta2009IEEE}
\bibinfo{author}{Datta, N.}
\newblock \bibinfo{title}{{Min- and Max-Relative Entropies and a New
  Entanglement Monotone}}.
\newblock \emph{\bibinfo{journal}{Information Theory, IEEE Transactions on}}
  \textbf{\bibinfo{volume}{55}}, \bibinfo{pages}{2816--2826}
  (\bibinfo{year}{2009}).

\bibitem{Tomamichel2012}
\bibinfo{author}{Tomamichel, M.}
\newblock \emph{\bibinfo{title}{{A framework for non-asymptotic quantum
  information theory}}}.
\newblock Ph.D. thesis, \bibinfo{school}{ETH Zurich} (\bibinfo{year}{2012}).
\newblock \urlprefix\url{http://arxiv.org/abs/1203.2142}.
\newblock \eprint{1203.2142}.

\bibitem{Landauer1961}
\bibinfo{author}{Landauer, R.}
\newblock \bibinfo{title}{{Irreversibility and Heat Generation in the Computing
  Process}} \bibinfo{pages}{183--191} (\bibinfo{year}{1961}).

\bibitem{Bennett1982}
\bibinfo{author}{Bennett, C.~H.}
\newblock \bibinfo{title}{{The thermodynamics of computation | a review}}.
\newblock \emph{\bibinfo{journal}{International Journal of Theoretical
  Physics}} \textbf{\bibinfo{volume}{21}}, \bibinfo{pages}{905--940}
  (\bibinfo{year}{1982}).
\newblock \urlprefix\url{http://link.springer.com/10.1007/BF02084158}.

\bibitem{Uhlmann1980}
\bibinfo{author}{Alberti, P.} \& \bibinfo{author}{Uhlmann, A.}
\newblock \bibinfo{title}{{A problem relating to positive linear maps on matrix
  algebras}}.
\newblock \emph{\bibinfo{journal}{Reports on Mathematical Physics}}
  \textbf{\bibinfo{volume}{18}} (\bibinfo{year}{1980}).
\newblock
  \urlprefix\url{http://www.sciencedirect.com/science/article/pii/003448778090083X}.

\bibitem{Chefles2004}
\bibinfo{author}{Chefles, A.}, \bibinfo{author}{Jozsa, R.} \&
  \bibinfo{author}{Winter, A.}
\newblock \bibinfo{title}{On the existence of physical transformations between
  sets of quantum states}.
\newblock \emph{\bibinfo{journal}{International Journal of Quantum
  Information}} \textbf{\bibinfo{volume}{2}}, \bibinfo{pages}{11--21}
  (\bibinfo{year}{2004}).
\newblock
  \urlprefix\url{http://www.worldscientific.com/doi/abs/10.1142/S0219749904000031}.

\bibitem{Heinosaari2012}
\bibinfo{author}{Heinosaari, T.}, \bibinfo{author}{Jivulescu, M.~A.},
  \bibinfo{author}{Reeb, D.} \& \bibinfo{author}{Wolf, M.~M.}
\newblock \bibinfo{title}{{Extending quantum operations}}.
\newblock \emph{\bibinfo{journal}{Journal of Mathematical Physics}}
  \textbf{\bibinfo{volume}{53}} (\bibinfo{year}{2012}).
\newblock \eprint{arXiv:1205.0641v3}.

\bibitem{Huang2012}
\bibinfo{author}{Huang, Z.}, \bibinfo{author}{Li, C.~K.},
  \bibinfo{author}{Poon, E.} \& \bibinfo{author}{Sze, N.~S.}
\newblock \bibinfo{title}{{Physical transformations between quantum states}}.
\newblock \emph{\bibinfo{journal}{Journal of Mathematical Physics}}
  \textbf{\bibinfo{volume}{53}} (\bibinfo{year}{2012}).
\newblock \eprint{1203.5547}.

\bibitem{Tomamichel2016}
\bibinfo{author}{Tomamichel, M.}
\newblock \emph{\bibinfo{title}{{Quantum Information Processing with Finite
  Resources}}}, vol.~\bibinfo{volume}{5} of
  \emph{\bibinfo{series}{SpringerBriefs in Mathematical Physics}}
  (\bibinfo{publisher}{Springer International Publishing},
  \bibinfo{address}{Cham}, \bibinfo{year}{2016}).
\newblock \urlprefix\url{http://link.springer.com/10.1007/978-3-319-21891-5}.

\bibitem{Landauer1993}
\bibinfo{author}{Landauer, J.} \& \bibinfo{author}{Redmond, T.}
\newblock \bibinfo{title}{A lattice of information}.
\newblock In \emph{\bibinfo{booktitle}{Computer Security Foundations Workshop
  VI, 1993. Proceedings}}, \bibinfo{pages}{65--70} (\bibinfo{year}{1993}).

\bibitem{Morgan2015}
\bibinfo{author}{Morgan, C.}
\newblock \bibinfo{title}{A nondeterministic lattice of information}.
\newblock In \emph{\bibinfo{booktitle}{Mathematics of Program Construction}}
  (\bibinfo{address}{Königswinter, Germany}, \bibinfo{year}{2015}).
\newblock \urlprefix\url{http://www.nicta.com.au/pub-download/slides/8879}.

\bibitem{Halpern2015}
\bibinfo{author}{Halpern, N.~Y.}, \bibinfo{author}{Faist, P.},
  \bibinfo{author}{Oppenheim, J.} \& \bibinfo{author}{Winter, A.}
\newblock \bibinfo{title}{{Microcanonical and resource-theoretic derivations of
  the Non-Abelian Thermal State}} \bibinfo{pages}{11} (\bibinfo{year}{2015}).
\newblock \urlprefix\url{http://arxiv.org/abs/1512.01189}.
\newblock \eprint{1512.01189}.

\bibitem{Guryanova2015}
\bibinfo{author}{Guryanova, Y.}, \bibinfo{author}{Popescu, S.},
  \bibinfo{author}{Short, A.~J.}, \bibinfo{author}{Silva, R.} \&
  \bibinfo{author}{Skrzypczyk, P.}
\newblock \bibinfo{title}{{Thermodynamics of quantum systems with multiple
  conserved quantities}} \bibinfo{pages}{16} (\bibinfo{year}{2015}).
\newblock \urlprefix\url{http://arxiv.org/abs/1512.01190}.

\bibitem{Lostaglio2015b}
\bibinfo{author}{Lostaglio, M.}, \bibinfo{author}{Jennings, D.} \&
  \bibinfo{author}{Rudolph, T.}
\newblock \bibinfo{title}{{Thermodynamic resource theories, non-commutativity
  and maximum entropy principles}}  (\bibinfo{year}{2015}).
\newblock \urlprefix\url{http://arxiv.org/abs/1511.04420}.
\newblock \eprint{1511.04420}.

\bibitem{Aberg2013a}
\bibinfo{author}{Aberg, J.}
\newblock \bibinfo{title}{{Truly work-like work extraction via a single-shot
  analysis}}.
\newblock \emph{\bibinfo{journal}{Nature communications}}
  \textbf{\bibinfo{volume}{4}}, \bibinfo{pages}{1925} (\bibinfo{year}{2013}).
\newblock \urlprefix\url{http://www.ncbi.nlm.nih.gov/pubmed/23801350}.
\newblock \eprint{arXiv:1110.6121v4}.

\bibitem{HorodeckiOppenheim2013}
\bibinfo{author}{Horodecki, M.} \& \bibinfo{author}{Oppenheim, J.}
\newblock \bibinfo{title}{({Q}uantumness in the context of) resource theories}
  (\bibinfo{year}{2012}).
\newblock
  \urlprefix\url{http://www.worldscientific.com/doi/abs/10.1142/S0217979213450197?queryID=57/1541482}.

\bibitem{Barvinok2002}
\bibinfo{author}{Barvinok, A.}
\newblock \emph{\bibinfo{title}{{A Course in Convexity}}}
  (\bibinfo{year}{2002}).
\newblock
  \urlprefix\url{https://books.google.ch/books?id=PaieFEiPn9cC}.

\end{thebibliography}

\end{document}